\newcommand\numberthis{\addtocounter{equation}{1}\tag{\theequation}}
\newcommand\xqed[1]{%
  \leavevmode\unskip\penalty9999 \hbox{}\nobreak\hfill
  \quad\hbox{#1}}
\newcommand\demo{\xqed{$\square$}}
\newcommand\trig{\xqed{$\bigtriangleup$}}
\newtheorem{theo}{\textbf{Theorem}}
\newtheorem{lem}{\textbf{Lemma}}
\newtheorem{corl}{\textbf{Corollary}}
\theoremstyle{definition}
\newtheorem{defi}{\textbf{Definition}}
\newtheorem{ex}{\textbf{Example}}
\newtheorem*{remk}{\textbf{Remark}}
\begin{document}

\title{\fontsize{.71cm}{1em}\selectfont Interference Networks with No CSIT: Impact of Topology}

\author{Navid~Naderializadeh and~A.~Salman~Avestimehr$^\dagger$%
\thanks{$^\dagger$N. Naderializadeh and A. S. Avestimehr are with the Department of Electrical Engineering, University of Southern California, Los Angeles, CA 90089, USA (emails: naderial@usc.edu, avestimehr@ee.usc.edu).
This work is in part supported by a gift from Qualcomm Inc., AFOSR Young Investigator Program Award, ONR award N000141310094, NSF Grants CAREER-0953117, CCF-1161720, and funding from Intel-Cisco-Verizon via the VAWN program.

This work has been presented in part at the IEEE International Symposium on Information Theory (ISIT), 2013 \cite{isit}.}}
\maketitle

\begin{abstract}
We consider partially-connected $K$-user interference networks, where the transmitters have no knowledge about the channel gain values, but they are aware of network topology. We introduce several linear algebraic and graph theoretic concepts to derive new topology-based outer bounds and inner bounds on the symmetric degrees-of-freedom (DoF) of these networks. We evaluate our bounds for two classes of networks  to demonstrate their tightness for most networks in these classes, quantify the gain of our inner bounds over benchmark interference management strategies, and illustrate the effect of network topology on these gains.
\end{abstract}



\section{Introduction}
Channel state information (CSI) plays a central role in the design of physical layer interference management strategies for wireless networks. As a result, training-based channel estimation techniques (i.e. transmission of known training symbols or ``pilots'') are commonly used in today's wireless networks to estimate the channel parameters at the receivers and then to propagate the estimates to other nodes in the network via feedback links. However, as wireless networks grow in size and mobility increases, the availability of channel state information at the transmitters (CSIT) becomes a challenging task to accomplish.

Consequently, there has been a growing interest in understanding how the lack of CSI would have an impact on fundamental limits of interference management in wireless networks. In this work, we focus on the case that channel state information at each node is limited to only a coarse knowledge about network topology. In particular, we consider an interference network consisting of $K$ transmitters and $K$ receivers, where each transmitter intends to deliver a message to its corresponding receiver. In order to model propagation path loss and interference topology, the network is considered to be partially connected (in which ``weak channels'' are removed by setting their channel gains to zero), and the network topology is represented by the adjacency matrix of the network connectivity graph. In this work, we assume that all nodes are aware of network topology, i.e. the adjacency matrix of the network graph, but beyond that, the transmitters have no information about the actual values of the channel gains in the network (i.e., no-CSIT beyond knowing the topology). This is partially motivated by the fact the network connectivity often changes at a much slower pace than the channel gains, hence it is plausible to acquire them at the transmitters. In this setting, the goal is to understand the limits at which the knowledge about network topology can be utilized to manage the interference.



This problem has also been considered in some prior works in the literature. In~\cite{localview}, a slow fading scenario is considered, in which the channel gains associated with existing links in the network remain constant. In this setting, authors have used a ``normalized sum-capacity'' metric in order to characterize the largest fraction of the sum-capacity-with-full-CSI that can be obtained when transmitters only know network topology and the gains of some local channels. Via the characterization of normalized sum-capacity for several classes of network topologies and development of two interference management strategies, namely Coded Set Scheduling and Independent Graph Scheduling, which exploit \emph{temporal neutralization} of interference, it has been shown that the knowledge about network topology can be effectively utilized to increase sum-capacity.

In another work~\cite{jafar_old}, a fast fading scenario is considered in which the channel gains of the existing links in the network are considered to be identically and independently distributed over time (with a sufficiently large coherence time) and also across the users. It has been shown that the DoF region of this problem is bounded above by the DoF region of a corresponding wireless index coding problem, and they are equivalent if both problems are restricted to linear solutions and the coherence time of the channels is sufficiently large. A similar approach has been taken in \cite{jafar}, in which a slow fading scenario is considered and the channel gain values are assumed to be sufficiently large to satisfy a minimum signal-to-noise ratio (SNR) at each receiver. It is shown that, quite interestingly, the degrees-of-freedom (DoF) of this problem has a counterpart in the capacity analysis of wired networks. This connection enabled the derivation of several outer bounds on the symmetric DoF, and the development of an interference-alignment-based achievability scheme. Also, necessary and sufficient conditions have been derived for networks to achieve a symmetric DoF of $\frac{1}{2}$.

In this paper, we focus on a fast fading scenario in which the channel gains change at each time instant according to an i.i.d. distribution (i.e., coherence time of 1). We also assume that the channel gain values are not available at the transmitters and they only have access to the topology knowledge of the network. The assumption of coherence time of 1 is an extreme case of prior works \cite{localview,jafar_old,jafar}, in which transmitters are not able to exploit temporal neutralization or temporal alignment of the interference. Hence, in this setting, our goal is to study the achievable degrees-of-freedom without relying on temporal neutralization or alignment of the interference.

To this end, we derive new graph theoretic and linear algebraic inner and outer bounds on the symmetric DoF of the network. To derive the outer bounds, we will introduce two novel linear algebraic concepts, namely ``generators'' and ``fractional generators'', and utilize them to upper bound the symmetric DoF for general network topologies. The key idea of generators is that in any network topology, we seek for a number of signals from which we can decode the messages of all the users, and then we will find an upper bound for the joint entropy of those signals. Instead of upper bounding this joint entropy proportionally to the number of the signals, we will use the concept of fractional generators to find the tightest upper bound on the entropy of the signals based on the interference interactions at the receivers. Through examples, we will demonstrate that we can systematically apply our outer bounds to any arbitrary network topology. These outer bounds are applicable to any channel coherence time.

Moreover, we will present three inner bounds based on three achievable schemes. First, we discuss two benchmark schemes and characterize their achievable symmetric DoF with respect to two graph theoretic parameters of the network graph, namely ``maximum receiver degree'' and ``fractional chromatic number''. Through examples, we show that these schemes are suboptimal in some networks and gain can be accomplished by taking more details of network topology into account. This motivates the third scheme, called ``structured repetition coding'', which performs at least the same as or strictly better than the two benchmark schemes. The main idea of this scheme is to enable neutralization of interference at the receivers by repeating the symbols based on a carefully-chosen structure at the transmitters. We derive graph theoretic conditions, based on the matching number of bipartite graphs induced by network topology and the repetition structure of transmitters, that characterize the symmetric DoF achieved by structured repetition coding. This scheme can also be applied to any channel coherence time by means of interleaving. Thus, the coherence time of 1 is the worst case in this sense.

Finally, we will evaluate our inner and outer bounds in order to characterize the symmetric DoF in two distinct network scenarios. First, we consider 6-user networks composed of 6 square cells in which each receiver may receive interference from at least one and at most three of its adjacent cells, as well as the signal from its own transmitter. Interestingly, after removing isomorphic graphs, we see that our inner and outer bounds meet for all 22336 possible network topologies except 16, hence characterizing the symmetric DoF in those cases. This implies that in most of these networks, temporal alignment of the interference cannot provide any additional DoF-gain over structured repetition coding.

We also consider 6-user networks with 1 central and 5 surrounding base stations and evaluate our inner and outer bounds for a large number of randomly generated client locations. In this case, the results show that our bounds are tight for all generated network topologies, leading to similar conclusions to the previous case. In both the aforementioned scenarios, we will demonstrate the distribution of the gain of structured repetition coding over the two benchmark schemes and study the impact of network density on these gains.

\noindent \textbf{Other Related Works.}
In the context of interference channels, various settings for limited knowledge of channel state information at the transmitters have been studied in the literature, such as no CSIT (see e.g., \cite{zchannel,huang,vaze_no_csit}) and delayed CSIT (see e.g., \cite{sina,ali,maleki,abdoli,vaze}). However, in this work we consider the case where the transmitters have only a coarse knowledge of the channel gains. In fact, we assume that the transmitters do not know the actual channel gain values, but are equipped with one bit of feedback for each channel showing whether or not the channel is strong enough. There have also been several works in the literature that utilize the specific structure of network topology for interference management (see e.g.,  \cite{vvv,wang,gastpar,hong}); however in these works both the channel gains and network topology are assumed to be known at the transmitters.

\vspace*{\baselineskip}

The rest of the paper is organized as follows. In Section \ref{form_not} we introduce the problem model and notations. In Section \ref{converse} we present our outer bounds on the symmetric DoF. In Section \ref{ach} we present our achievable schemes. In Section \ref{sim} we present our numerical analysis for the aforementioned network scenarios. Finally, we conclude the paper in Section \ref{conc}.

\section{Problem Formulation and Notations}\label{form_not}

A $K$-user interference network ($K\in\mathbb{N}$) is defined as a set of $K$ transmitter nodes $\left\lbrace\text{T}_i\right\rbrace_{i=1}^{K}$ and $K$ receiver nodes $\left\lbrace\text{D}_i\right\rbrace_{i=1}^{K}$. To model propagation path loss and interference topology, we consider a similar model to \cite{jafar} in which the network is partially connected represented by the adjacency matrix $\mathbf{M}\in\{0,1\}^{K\times K}$, such that $\mathbf{M}_{ij}=1$ iff transmitter $\text{T}_i$ is connected to receiver $\text{D}_j$ (i.e. $\text{D}_j$ is in the coverage radius of $\text{T}_i$). We assume there exist direct links between each transmitter $\text{T}_i$ and its corresponding receiver $\text{D}_i$ (i.e. $\mathbf{M}_{ii}=1$, $\forall i\in[1:K]$, where we use the notation $[1:m]$ to denote $\{1,2,...,m\}$ for $m\in\mathbb{N}$). We also define the set of interfering nodes to receiver $\text{D}_j$ as $\mathcal{IF}_j:=\{i:\mathbf{M}_{ij}=1,\	i\neq j\}$.

The communication is time-slotted. At each time slot $l$ ($l\in\mathbb{N}$), the transmit signal of transmitter $\text{T}_i$ is denoted by $X_i[l]\in\mathbb{C}$ and the received signal of receiver $\text{D}_j$ is denoted by $Y_j[l]\in\mathbb{C}$ given by
\begin{align*}
Y_j[l]=g_{jj}[l]X_j[l]+\sum_{i\in\mathcal{IF}_j} g_{ij}[l]X_i[l]+Z_j[l],
\end{align*}
where $Z_j[l]\sim\mathcal{CN}(0,1)$ is the additive white Gaussian noise and $g_{ij}[l]$ is the channel gain from transmitter $\text{T}_i$ to receiver $\text{D}_j$ at time slot $l$. If transmitter $\text{T}_i$ is not connected to receiver $\text{D}_j$ (i.e. $\mathbf{M}_{ij}=0$), then $g_{ij}[l]$ is assumed to be identically zero at all times. We assume that the non-zero channel gains (i.e. $g_{ij}[l]$'s s.t. $\mathbf{M}_{ij}=1$) are independent and identically distributed (with a continuous distribution $f_G(g)$) through time and also across the users, and are also independent of the transmit symbols. The distribution $f_G(g)$ needs to satisfy three regularity conditions: $\mathbb{E}[|g|^2]<\infty$, $f_G(g)=f_G(-g),\forall g\in\mathbb{C}$, and $\exists f_{max}\text{ s.t. }f_{|G|}(r)\leq f_{max},\forall r\in\mathbb{R}^+$, where $f_{|G|}(.)$ is the distribution of $|g|$. The noise terms are also assumed i.i.d. among the users and the time slots, and also independent of the transmit symbols and channel gains.

It is assumed that the transmitters $\{\text{T}_i\}_{i=1}^K$ are only aware of the connectivity pattern of the network (or the network topology), represented by the adjacency matrix $\mathbf{M}$, and also the distribution $f_G$ of the non-zero channel gains; i.e. the transmitters only know which users are interfering to each other and they also know the statistics of the channel gains, not the actual gains of the links. In this paper, we refer to this assumption as no channel state information at the transmitters (no CSIT). As for the receivers $\{\text{D}_j\}_{j=1}^K$, we assume that they are aware of the adjacency matrix $\mathbf{M}$ and the channel gain realizations of their incoming links. In other words, receiver $\text{D}_j$ is aware of $\mathbf{M}$ and $g_{ij}[l]$, $\forall i\in\{j\}\cup\mathcal{IF}_j$, $\forall l$.

In this network, every transmitter $\text{T}_i$ intends to deliver a message $W_i$ to its corresponding receiver $\text{D}_i$. The message $W_i$ is encoded to a vector $X_i^n=[X_i[1]\:X_i[2]\:\hdots\:X_i[n]]^T\in\mathbb{C}^n$ through an encoding function $e_i(W_i|\mathbf{M},f_G)$; i.e. transmitters use their knowledge of network topology and the distribution of the channel gains to encode their messages. There is also a transmit power constraint $\mathbb{E}\left[\frac{1}{n}\| X_i^n \|^2\right]\leq P$, $\forall i\in[1:K]$. This encoded vector is transmitted within $n$ time slots through the wireless channel to the receivers. Each receiver $\text{D}_j$ receives the vector $Y_j^n=[Y_j[1]\:Y_j[2]\:\hdots\:Y_j[n]]^T$ and uses a decoding function $e'_j(Y_j^n|\mathbf{M},\mathcal{G}_j^n)$ to recover its desired message $W_j$. Here, $\mathcal{G}_j^n:=\{g_{ij}^n: i\in[1:K]\}$ where $g_{ij}^n:=[g_{ij}[1]\:g_{ij}[2]\:\hdots\:g_{ij}[n]]^T$ denotes the vector of the channel gain realizations from transmitter $\text{T}_i$ to receiver $\text{D}_j$ during $n$ time slots. We also denote the set of all channel gains in all time slots by $\mathcal{G}^n=\{\mathcal{G}_1^n,...,\mathcal{G}_K^n\}$.

The rate of transmission for user $i$ is denoted by $R_i(P):=\frac{\log|W_i(P)|}{n}$ where $|W_i(P)|$ is the size of the message set of user $i$ and we have explicitly shown the dependence of $W_i$ on $P$. Denoting the maximum error probability at the receivers by $\text{Pr}_e(P)=\underset{j\in[1:K]}{\max}\text{Pr}\left[W_j(P)\neq e'_j(Y_j^n|\mathbf{M},\mathcal{G}_j^n)\right]$, a rate tuple $(R_1(P),...,R_K(P))$ is said to be achievable if $\text{Pr}_e(P)$ goes to zero as $n$ goes to infinity.

In this paper, the considered metric is the symmetric degrees-of-freedom (DoF) metric, which is defined as follows. If a rate tuple $(R_1(P),...,R_K(P))$ is achievable and we let $d_i=\underset{P\rightarrow\infty}{\lim}\frac{R_i(P)}{\log(P)}$, then the DoF tuple of $(d_1,...,d_K)$ is said to be achievable. The symmetric degrees-of-freedom $d_{sym}$ is defined as the supremum $d$ such that the DoF tuple $(d,...,d)$ is achievable.

Therefore, the main problem we are going to address in this paper is that given a $K$-user interference network with adjacency matrix $\mathbf{M}$ (which is known by every node in the network) and channel gains distribution $f_G$, what the symmetric degrees-of-freedom $d_{sym}$ is, under no-CSIT assumption. We will start by presenting our outer bounds on $d_{sym}$ in the next section.

\section{Outer Bounds on $d_{sym}$}\label{converse}

In this section, we will present our outer bounds for the symmetric DoF of $K$-user interference networks. To this end, we provide two types of outer bounds and we will motivate each outer bound through an introductory example. The main idea in both of the outer bounds is to create a set of signals by which we can sequentially decode the messages of all the users with a finite number of bits provided by a genie. This set of signals corresponds to a matrix called a \emph{generator}. We will show systematically that for any network topology, there are some linear algebraic conditions that a matrix should satisfy to be called a generator. Therefore, our outer bounds rely highly on the topology of the network graph and the goal is to algebraically explain how these bounds are derived. The first converse generally states that the number of signals corresponding to a generator is an upper bound for the sum degrees-of-freedom of the network. However, the second converse enhances the first one, showing that there may be tighter upper bounds on the sum degrees-of-freedom due to the specific topology of the network.

For all the outer bounds presented in this section, because of the no-CSIT assumption, we will be replacing \emph{statistically similar} signals with each other, i.e., signals which have the same probability distribution functions. This is due to the fact that the decoding error probability only depends on the marginal channel transition probabilities $p(Y|X_1,...,X_K)$. In particular, we will be using the following lemma in developing our outer bounds on the symmetric degrees-of-freedom.

\begin{lem}\label{sep}
The capacity region, and therefore the degrees-of-freedom, of a $K$-user interference network only depend on the marginal transition probabilities of the channels.
\end{lem}

\subsection{Upper Bounds Based on the Concept of Generators}

We start by presenting our first outer bound through the notion of generators. The main idea of this outer bound is presented in Example \ref{ex1}. Before starting the example, we need to define some notation.

\begin{itemize}
\item If $\mathcal{S}\subseteq[1:K]$ is a subset of users in a $K$-user interference network with adjacency matrix $\mathbf{M}$, then $\mathbf{M}^\mathcal{S}$ denotes the adjacency matrix of the corresponding subgraph and $\mathbf{I}^\mathcal{|S|}$ denotes the $|\mathcal{S}|\times|\mathcal{S}|$ identity matrix.

\item For a general $m\times n$ matrix $\mathbf{A}$ and $\mathcal{N}\subseteq [1:n]$, $\mathbf{A}_\mathcal{N}$ denotes the submatrix of $\mathbf{A}$ composed of the columns whose indices are in $\mathcal{N}$. For the sake of brevity, if $\mathcal{N}=\{i\}$, i.e. if $\mathcal{N}$ has only one member, we use $\mathbf{A}_i$ to denote the $i^{th}$ column of $\mathbf{A}$.

\item For a general matrix $\mathbf{A}$, $c(\mathbf{A})$ denotes the number of columns of $\mathbf{A}$.
\end{itemize}

We will also need the following definition.

\begin{defi}\label{d0}
If $\mathbf{v}\in\{0,\pm1\}^{n\times1}$ and $\mathcal{V}$ is a subspace of $\mathbb{R}^n$, then $\mathbf{v}\in^\pm \mathcal{V}$ means that there exists a vector $\tilde{\mathbf{v}}$ in $\mathcal{V}$ which is the same as $\mathbf{v}$ up to the sign of its elements; i.e.,
\begin{equation*}
\mathbf{v}\in^\pm \mathcal{V} \Leftrightarrow \exists \tilde{\mathbf{v}}\in \mathcal{V} \text{ s.t. } |\tilde{\mathbf{v}}_j|=|\mathbf{v}_j|,\:\forall j\in[1:n].
\end{equation*}

Moreover, if $i$ is an index in $[1:n]$, then $\mathbf{v}\in_i^\pm \mathcal{V}$ implies that there exists a vector $\tilde{\mathbf{v}}$ in $\mathcal{V}$ whose $i^{th}$ element is the same as the $i^{th}$ element of $\mathbf{v}$ up to its sign, while every other element of $\tilde{\mathbf{v}}$ either equals zero or matches the corresponding element of $\mathbf{v}$ up to its sign. To be precise, we have the following definition.
\begin{equation*}
\mathbf{v}\in_i^\pm \mathcal{V} \Leftrightarrow \exists \tilde{\mathbf{v}}\in \mathcal{V} \text{ s.t. } |\tilde{\mathbf{v}}_i|=|\mathbf{v}_i|\text{ and }\tilde{\mathbf{v}}_j(|\tilde{\mathbf{v}}_j|-|\mathbf{v}_j|)=0,\:\forall j\in[1:n]\setminus\{i\}.
\end{equation*}

\trig
\end{defi}

\begin{ex}\label{ex1}
Consider the 5-user interference network in Figure \ref{fig1}. We claim that the symmetric DoF of this network with no CSIT is upper bounded by $\frac{2}{5}$.
\begin{figure}[h]
\centering
\includegraphics[trim = 2in 3in 2in 4.1in, clip,width=0.3\textwidth]{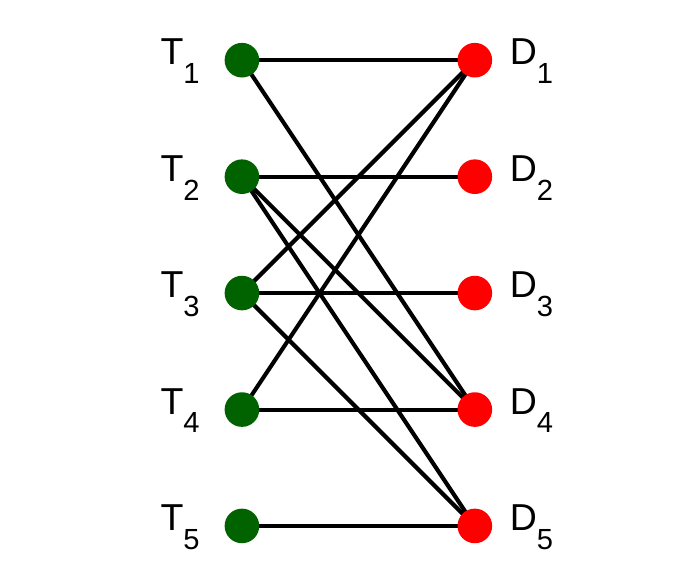}
\caption{A 5-user interference network in which $d_{sym}\leq \frac{2}{5}$.}
\label{fig1}
\end{figure}

Suppose rates $R_i$, $i\in[1:5]$, are achievable. We define the signals
\begin{align*}
\tilde{Y}_1^n&=g_1^nX_1^n+g_3^nX_3^n+g_4^nX_4^n+\tilde{Z}_1^n\\
\tilde{Y}_5^n&=g_2^nX_2^n+g_3^nX_3^n+g_5^nX_5^n+\tilde{Z}_5^n,
\end{align*}
where $\tilde{Z}_1^n$ and $\tilde{Z}_5^n$ have the same distributions as the original noise vectors, but are independent of them and also of each other and $g_i^n=g_{ii}^n$, $i\in[1:5]$. We now show that $H(W_1,...,W_5|\tilde{Y}_1^n,\tilde{Y}_5^n,\mathcal{G}^n)\leq no(\log(P))+n\epsilon_{n}$, which implies
\begin{align*}
\sum_{i=1}^5 R_i&=\frac{1}{n}H(W_1,...,W_5|\mathcal{G}^n)\\
&=\frac{1}{n}\left[I(W_1,...,W_5;\tilde{Y}_1^n,\tilde{Y}_5^n|\mathcal{G}^n)+H(W_1,...,W_5|\tilde{Y}_1^n,\tilde{Y}_5^n,\mathcal{G}^n)\right]\\
&\leq 2\log(P)+o(\log(P))+\epsilon_n,
\end{align*}
hence $d_{sym}\leq \frac{2}{5}$. This is obtained through the following steps, which are explained intuitively here and their formal proof is discussed in the proof of Theorem \ref{th1} for general network topologies.
\begin{itemize}
\item Step 1: $H(W_1,W_5|\tilde{Y}_1^n,\tilde{Y}_5^n,\mathcal{G}^n)\leq n\epsilon_n$, due to the fact that $\tilde{Y}_1^n$ and $\tilde{Y}_5^n$ are statistically the same as $Y_1^n$ and $Y_5^n$, respectively, followed by Lemma \ref{sep} and Fano's inequality.

\item Step 2: $H(W_4|\tilde{Y}_1^n,\tilde{Y}_5^n,W_1,W_5,\mathcal{G}^n)\leq no(\log(P))+n\epsilon_{n}$. This is obtained by noting that from $W_5$, one can create $X_5^n$ and then by using the other terms in the conditioning, we can construct $\tilde{Y}_4^n=\tilde{Y}_1^n-\tilde{Y}_5^n+g_5^n X_5^n=g_1^nX_1^n-g_2^nX_2^n+g_4^nX_4^n+\tilde{Z}_1^n-\tilde{Z}_5^n$, which is statistically the same as $Y_4^n$ except for a larger, but bounded, noise variance, and because of Lemma \ref{sep}, it is able to decode $W_4$. The statistical equivalence follows from the assumption that the distribution of the channel gains is symmetric around zero ($f_G(g)=f_G(-g)$, $\forall g\in\mathbb{C}$). The desired inequality then follows, where the $n\epsilon_n$ term is due to Fano's inequality and the $no(\log(P))$ term is due to the larger noise variance, treated more formally in Lemma \ref{lem2} which appears later.

\item Step 3: $H(W_3|\tilde{Y}_1^n,\tilde{Y}_5^n,W_1,W_5,W_4,\mathcal{G}^n)\leq n\epsilon_n$, obtained by noting that from $W_1$ and $W_4$, one can create $X_1^n$ and $X_4^n$ and then by using the other terms in the conditioning, we can construct $\tilde{Y}_3^n=\tilde{Y}_1^n-g_1^n X_1^n-g_4^n X_4^n=g_3^nX_3^n+\tilde{Z}_1^n$, which is statistically the same as $Y_3^n$. The inequality then follows from Lemma \ref{sep} and Fano's inequality.

\item Step 4: $H(W_2|\tilde{Y}_1^n,\tilde{Y}_5^n,W_1,W_5,W_4,W_3,\mathcal{G}^n)\leq n\epsilon_n$, obtained by noting that from $W_3$ and $W_5$, one can create $X_3^n$ and $X_5^n$ and then by using the other terms in the conditioning, we can construct $\tilde{Y}_2^n=\tilde{Y}_5^n-g_3^n X_3^n-g_5^n X_5^n=g_2^nX_2^n+\tilde{Z}_2^n$, which is statistically the same as $Y_2^n$. The inequality then follows from Lemma \ref{sep} and Fano's inequality.

\end{itemize}

Adding the above inequalities and using the chain rule for entropy yield the desired result. Consequently, starting from $\{\tilde{Y}_1^n,\tilde{Y}_5^n\}$, we created a sequence of users \{1,5,4,3,2\} in which we could successively generate statistically similar versions of the signals at their receivers (with a bounded difference in noise variance) by a linear combination of the signals available at each step, and at the end of the final step, we could decode the messages of all users by initially having the two signals $\{\tilde{Y}_1^n,\tilde{Y}_5^n\}$.

This process can be explained in a more systematic and linear algebraic form. Each of the signals discussed above (ignoring the noise term) can be represented as a $5\times 1$ column vector whose $i^{th}$ element, $i\in[1:5]$, is equal to the coefficient of $g_i^n X_i^n$ in that signal. For instance, $\tilde{Y}_1^n$ corresponds to $\begin{bmatrix}
1 & 0 & 1 & 1 & 0
\end{bmatrix}^T$ and $\tilde{Y}_5^n$ corresponds to $\begin{bmatrix}
0 & 1 & 1 & 0 & 1
\end{bmatrix}^T$. We concatenate these two vectors so that $\{\tilde{Y}_1^n,\tilde{Y}_5^n\}$ can be represented by the matrix
\begin{equation}
\mathbf{A}=\begin{bmatrix}
1 & 0 & 1 & 1 & 0\\
0 & 1 & 1 & 0 & 1
\end{bmatrix}^T.
\end{equation}

Now, using the notation introduced in Definition \ref{d0}, the successive decoding steps mentioned earlier in this example can be expressed in a linear algebraic form. In what follows, $\mathcal{S}=[1:5]$.

\begin{itemize}

\item Step 1 is equivalent to $\mathbf{M}_1\in^\pm \text{span}(\mathbf{A})$. The reason is as follows. First, note that\linebreak$\mathbf{M}_1=\begin{bmatrix}
1&0&1&1&0
\end{bmatrix}^T$ is the first column of the adjacency matrix, which corresponds to the signal received at receiver 1, namely ${Y}_1^n$ (because ${Y}_1^n=\begin{bmatrix} g_{11}^n X_1^n & \dots & g_{51}^n X_5^n \end{bmatrix}\mathbf{M}_1+Z_1^n$). Therefore, $\mathbf{M}_1\in^\pm \text{span}(\mathbf{A})$ means that by a combination of the signals $\{\tilde{Y}_1^n,\tilde{Y}_5^n\}$, we can create a statistically-similar version of ${Y}_1^n$ (actually, the combination is $\tilde{Y}_1^n$ itself) . Since the distribution of the channel gains is symmetric around zero ($f_G(g)=f_G(-g)$, $\forall g\in\mathbb{C}$), the sign of each element $g_{i1}^n X_i^n$ in $Y_1^n$ is not important, therefore letting us use the notation developed in Definition \ref{d0}. In the same way, we have $\mathbf{M}_5\in^\pm \text{span}(\mathbf{A})$, which means that by a combination of the signals $\{\tilde{Y}_1^n,\tilde{Y}_5^n\}$, we can create a statistically-similar version of ${Y}_5^n$.

\item Step 2 is equivalent to $\mathbf{M}_4\in^\pm \text{span}(\mathbf{A},\mathbf{I}_{\{1,5\}}^\mathcal{|S|})$. The reason is as follows. First, note that columns 1 and 5 of the identity matrix are now included since we have already decoded $W_1$ and $W_5$ in the previous step, and by having them and the channel gains, we can create the signals $g_1^n X_1^n$ and $g_5^n X_5^n$ which correspond to $\mathbf{I}_1^\mathcal{|S|}$ and $\mathbf{I}_5^\mathcal{|S|}$, respectively. Therefore, ignoring the noise terms because of their finite variance, we can create a statistically similar version of $Y_4^n$ by having $\tilde{Y}_1^n,\tilde{Y}_5^n,W_1,W_5$ and the channel gains.

\item Step 3 is equivalent to $\mathbf{M}_3\in^\pm \text{span}(\mathbf{A},\mathbf{I}_{\{1,5,4\}}^\mathcal{|S|})$. The reason is as follows. First, note that before this step, we have already decoded $W_1$, $W_5$ and $W_4$, and by having them and the channel gains, we can create the signals $g_1^n X_1^n$, $g_5^n X_5^n$ and $g_4^n X_4^n$ which correspond to $\mathbf{I}_1^\mathcal{|S|}$, $\mathbf{I}_5^\mathcal{|S|}$ and $\mathbf{I}_4^\mathcal{|S|}$, respectively. Therefore, we can create a statistically similar version of $Y_3^n$ by having $\tilde{Y}_1^n,\tilde{Y}_5^n,W_1,W_5,W_4$ and the channel gains.

\item Step 4 is equivalent to $\mathbf{M}_2\in^\pm \text{span}(\mathbf{A},\mathbf{I}_{\{1,5,4,3\}}^\mathcal{|S|})$, which means that we can create a statistically similar version of $Y_2^n$ by having $\tilde{Y}_1^n,\tilde{Y}_5^n,W_1,W_5,W_4,W_3$ and the channel gains.
\end{itemize}
\demo
\end{ex}

Motivated by Example \ref{ex1}, we now formally define the notion of generators.

\begin{defi}\label{d1}
Consider a $K$-user interference network with adjacency matrix $\mathbf{M}$ and assume $\mathcal{S}\subseteq[1:K]$ is a subset of users. $\mathbf{A} \in \{\pm1,0\}^{|\mathcal{S}|\times r}$ ($r\in\mathbb{N}$) is called a \emph{generator} of $\mathcal{S}$ if there exists a sequence $\Pi_\mathcal{S}=(i_1,...,i_{|\mathcal{S}|})$ of the users in $\mathcal{S}$ such that
\begin{align*}
\mathbf{M}_{i_j}^\mathcal{S}&\in_{i_j}^\pm \text{span}\	(\mathbf{A},\mathbf{I}_{\{i_1,...,i_{j-1}\}}^{|\mathcal{S}|}),\	\forall j\in[1:|\mathcal{S}|].
\end{align*}

We use $\mathcal{J}(\mathcal{S})$ to denote the set of all generators of $\mathcal{S}$.
\trig
\end{defi}

To gain intuition about the above definition, as also mentioned in Example \ref{ex1}, each column of a generator $\mathbf{A}$ of $\mathcal{S}$ can be viewed as a representation of a signal which is a linear combination of the transmit symbols $X_i^n$, $i\in\mathcal{S}$. Therefore, the number of columns of $\mathbf{A}$, denoted by $c(\mathbf{A})$, represents the number of these signals. Consequently, the spanning relationships in Definition \ref{d1} represent a sequence of users in which all the messages can be decoded by having $c(\mathbf{A})$ signals, as in Example \ref{ex1}. Also, the reason that we have used the notation $\in_{i_j}^\pm$ instead of $\in^\pm$ (which we were using in Example \ref{ex1}) is that intuitively, it is not necessary to generate (a statistically-similar version of) the received signal at receiver $\text{D}_{i_j}$ \emph{exactly}. Instead, it suffices to generate a \emph{less-interfered} version of its received signal (by deleting some of the interference terms) and still be able to decode its message, because interference only hurts.

By having the definition of the generator in mind, we can present our first converse as follows.
\begin{theo}\label{th1}
The symmetric DoF of a $K$-user interference network with no CSIT is upper bounded by
\begin{equation*}
d_{sym}\leq \underset{\mathcal{S}\subseteq [1:K]}{\emph{min}}\underset{\mathbf{A}\in\mathcal{J}(\mathcal{S})}{\emph{min}} \frac{c(\mathbf{A})}{|\mathcal{S}|},
\end{equation*}
where for each $\mathcal{S}\subseteq[1:K]$, $\mathcal{J}(\mathcal{S})$ denotes the set of all generators of $\mathcal{S}$ (Definition \ref{d1}) and $c(\mathbf{A})$ denotes the number of columns of $\mathbf{A}$.
\end{theo}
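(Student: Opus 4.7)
The plan is to lift the four-step argument of Example~\ref{ex1} to an arbitrary subset $\mathcal{S}\subseteq[1:K]$ and generator $\mathbf{A}\in\mathcal{J}(\mathcal{S})$ with associated ordering $\Pi_\mathcal{S}=(i_1,\ldots,i_{|\mathcal{S}|})$. For each column $\mathbf{A}_k$ of $\mathbf{A}$ I would introduce one genie-provided signal $\tilde Y_k^n := \sum_{i\in\mathcal{S}} \mathbf{A}_{ik}\, g_{ii}^n X_i^n + \tilde Z_k^n$, where the $\tilde Z_k^n$ are independent copies of the ambient noise. Since each $\tilde Y_k^n$ is a bounded combination of the power-$P$ signals and unit-variance noise, a standard entropy bound gives $I(W_\mathcal{S};\tilde Y^n\mid\mathcal{G}^n) \leq n\, c(\mathbf{A})\log P + n\, o(\log P)$, which contributes the numerator of the final fraction.

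The heart of the argument is then to show $H(W_\mathcal{S}\mid \tilde Y^n,\mathcal{G}^n) \leq n\, o(\log P) + n\epsilon_n$ by a genie-aided induction along the ordering $\Pi_\mathcal{S}$. At step $j$, I assume $W_{i_1},\ldots,W_{i_{j-1}}$ have been decoded; since the encoders are oblivious to the realized gains, each $X_{i_k}^n$, and hence $g_{i_k i_k}^n X_{i_k}^n$ (corresponding to the column $\mathbf{I}_{i_k}^{|\mathcal{S}|}$), is deterministically reconstructible from $W_{i_k}$ and $\mathcal{G}^n$. The generator property $\mathbf{M}_{i_j}^\mathcal{S}\in_{i_j}^\pm \text{span}(\mathbf{A},\mathbf{I}_{\{i_1,\ldots,i_{j-1}\}}^{|\mathcal{S}|})$ then produces explicit coefficients by which I linearly combine the available $\tilde Y_k^n$ and $g_{i_k i_k}^n X_{i_k}^n$ to build a signal whose signature on the $g_{ii}^n X_i^n$ matches the $i_j$-th column of $\mathbf{M}^\mathcal{S}$ at position $i_j$ up to sign, and, at each other position, either matches it up to sign or is zero.

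The resulting signal is a version of $Y_{i_j}^n$ in which some interference terms may have been cancelled, some surviving terms may have been sign-flipped, and the noise is a bounded sum of the fresh $\tilde Z_k^n$. The symmetry hypothesis $f_G(g)=f_G(-g)$ renders the sign flips statistically invisible, while dropping interference only helps the decoder; the inflated noise has only a constant-factor larger variance and therefore costs at most $o(\log P)$ in rate, which is precisely the content of Lemma~\ref{lem2} invoked in the example. Fano's inequality then yields $H(W_{i_j}\mid \tilde Y^n, W_{i_1},\ldots,W_{i_{j-1}},\mathcal{G}^n)\leq n\, o(\log P)+n\epsilon_n$, and chaining these estimates via the entropy chain rule produces the desired bound on $H(W_\mathcal{S}\mid \tilde Y^n,\mathcal{G}^n)$.

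Combining the two estimates through $\sum_{i\in\mathcal{S}} R_i = \tfrac{1}{n}\bigl[I(W_\mathcal{S};\tilde Y^n\mid\mathcal{G}^n)+H(W_\mathcal{S}\mid \tilde Y^n,\mathcal{G}^n)\bigr]$ delivers $\sum_{i\in\mathcal{S}} d_i \leq c(\mathbf{A})$, whence $d_{sym}\leq c(\mathbf{A})/|\mathcal{S}|$ after dividing by $|\mathcal{S}|$ and letting $P\to\infty$; minimizing over $\mathcal{S}$ and $\mathbf{A}\in\mathcal{J}(\mathcal{S})$ completes the proof. I expect the main obstacle to be the inductive bookkeeping in the third paragraph: converting the $\in_{i_j}^\pm$ containment into explicit coefficients, and then rigorously arguing that the resulting ``cleaner'' observation is at least as informative for decoding $W_{i_j}$ as the true $Y_{i_j}^n$ despite sign ambiguities, interference deletions, and noise inflation, which is exactly where the symmetry of $f_G$ and the auxiliary Lemma~\ref{lem2} must be invoked.
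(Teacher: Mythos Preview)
Your proposal is correct and follows essentially the same approach as the paper's proof: introduce the genie signals $\tilde Y_k^n$ corresponding to the columns of $\mathbf{A}$, bound the mutual information by $c(\mathbf{A})\log P$, and then use the chain rule together with the $\in_{i_j}^\pm$ containment, the sign symmetry of $f_G$, Lemma~\ref{lem2}, and Fano's inequality to show that the residual equivocation is $o(\log P)$. The only cosmetic difference is that the paper makes the ``dropping interference only helps'' step explicit by conditioning on the missing interference terms (which are independent of the rest) before recombining, but this is exactly the rigorous version of your intuitive remark.
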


Before proving the theorem, we present the following lemma, which is proved in Appendix \ref{apx1}.

\begin{lem}\label{lem2}
For a discrete random variable $W$, continuous random vector $Y^n$, and two complex Gaussian noise vectors $Z_1^n$ and $Z_2^n$, where each element of $Z_1^n$ and $Z_2^n$ are $\mathcal{CN}(0,1)$ and $\mathcal{CN}(0,N)$ random variables, respectively and all the random variables are mutually independent, if $H(W|Y^n+Z_1^n)\leq n\epsilon$, then $H(W|Y^n+Z_2^n)\leq n\epsilon+n\log(N+1)$.
\end{lem}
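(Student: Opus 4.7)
The plan is to convert the statement into a bound on mutual informations and then invoke a Gaussian entropy computation. Using the identity $H(W|X) = H(W) - I(W;X)$, the inequality reduces to showing $I(W; Y^n+Z_1^n) - I(W; Y^n+Z_2^n) \leq n\log(N+1)$. I will handle the case $N \geq 1$ in detail; for $N < 1$, $Z_1^n$ is actually a noisier version of $Z_2^n$, so data processing (via the reverse Markov chain obtained by writing $Z_1 \stackrel{d}{=} Z_2 + Z_3'$ with $Z_3'$ an independent $\mathcal{CN}(0,1-N)$ vector) immediately yields $H(W|Y^n+Z_2^n) \leq H(W|Y^n+Z_1^n) \leq n\epsilon$, and the claim follows because $\log(N+1) \geq 0$.

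For $N \geq 1$, I will exploit the fact that a $\mathcal{CN}(0,N)$ random variable has the same distribution as a sum of independent $\mathcal{CN}(0,1)$ and $\mathcal{CN}(0,N-1)$ random variables. Introduce an auxiliary vector $Z_3^n$ with i.i.d.\ $\mathcal{CN}(0,N-1)$ entries, independent of $(W, Y^n, Z_1^n)$. Since $Z_1^n + Z_3^n$ has the same joint law with $(W, Y^n)$ as $Z_2^n$ does, $H(W|Y^n+Z_2^n) = H(W|Y^n+Z_1^n+Z_3^n)$, so without loss of generality I identify $Y^n+Z_2^n$ with $Y^n+Z_1^n+Z_3^n$. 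This installs the Markov chain $W - (Y^n+Z_1^n) - (Y^n+Z_2^n)$, because the second link is obtained from the first by adding $Z_3^n$, which is independent of both $W$ and $Y^n+Z_1^n$.

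By the Markov property and the chain rule, $I(W; Y^n+Z_1^n) - I(W; Y^n+Z_2^n) = I(W; Y^n+Z_1^n \,|\, Y^n+Z_2^n)$. I then augment the first argument with $Y^n$:
\begin{equation*}
I(W; Y^n+Z_1^n \,|\, Y^n+Z_2^n) \leq I(W, Y^n; Y^n+Z_1^n \,|\, Y^n+Z_2^n) = I(Y^n; Y^n+Z_1^n \,|\, Y^n+Z_2^n),
\end{equation*}
where the equality uses that, conditional on $Y^n$, the quantity $Y^n+Z_1^n$ is just the independent noise $Z_1^n$ shifted by $Y^n$ and is therefore independent of $W$, so $I(W; Y^n+Z_1^n \,|\, Y^n+Z_2^n, Y^n) = 0$. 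Expanding the right-hand side as a difference of differential entropies gives $h(Y^n+Z_1^n \,|\, Y^n+Z_2^n) - h(Z_1^n)$, where the second term uses independence of $Z_1^n$ from $(Y^n, Z_2^n)$. Subtracting $Y^n+Z_2^n$ from inside the first conditional differential entropy (which preserves its value) yields $h(Z_1^n - Z_2^n \,|\, Y^n+Z_2^n) \leq h(Z_1^n - Z_2^n) = n\log(\pi e (N+1))$, since $Z_1^n - Z_2^n$ is i.i.d.\ $\mathcal{CN}(0, N+1)$. Because $h(Z_1^n) = n\log(\pi e)$, the difference is precisely $n\log(N+1)$, finishing the bound.

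The main obstacle is bookkeeping around the mixed discrete-continuous setting: the $H$ in the lemma is the conditional entropy of the discrete $W$ given continuous observations, while the intermediate manipulations pass through differential entropies $h$ and mutual informations $I$. One must be careful that the substitution $Z_2^n \leadsto Z_1^n + Z_3^n$ is only a distributional identification (preserving the quantity of interest but not the underlying random variables), and that each data-processing and chain-rule step is applied to well-defined quantities. Once this is in place, the Gaussian differential-entropy computation is routine.
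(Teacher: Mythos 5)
Your overall strategy is sound and your final numbers are right, but there is an internal inconsistency in the $N\ge 1$ branch: you install the coupling $Z_2^n=Z_1^n+Z_3^n$ in order to get the Markov chain $W-(Y^n+Z_1^n)-(Y^n+Z_2^n)$ and the exact identity $I(W;Y^n+Z_1^n)-I(W;Y^n+Z_2^n)=I(W;Y^n+Z_1^n\mid Y^n+Z_2^n)$, but you then evaluate the resulting differential entropies as if $Z_1^n$ and $Z_2^n$ were still independent. Under the coupling $Z_1^n$ is \emph{not} independent of $Z_2^n$, so $h(Y^n+Z_1^n\mid Y^n+Z_2^n,Y^n)$ equals $h(Z_1^n\mid Z_1^n+Z_3^n)=n\log\left(\pi e\tfrac{N-1}{N}\right)$ rather than $h(Z_1^n)=n\log(\pi e)$, and $Z_1^n-Z_2^n=-Z_3^n\sim\mathcal{CN}(0,N-1)$ rather than $\mathcal{CN}(0,N+1)$. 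The two misstatements happen to compensate (the correct coupled computation gives $n\log N\le n\log(N+1)$), so the conclusion survives, but two of the stated justifications are false under the joint distribution you have committed to, and the first of them errs in the dangerous direction (it understates the conditional mutual information you are trying to upper bound).

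The fix is to drop the coupling altogether. The inequality $I(W;Y^n+Z_1^n)-I(W;Y^n+Z_2^n)\le I(W;Y^n+Z_1^n\mid Y^n+Z_2^n)$ holds for the original independent noises simply because $I(W;Y^n+Z_2^n\mid Y^n+Z_1^n)\ge 0$; no Markov chain is needed. With independent $Z_1^n,Z_2^n$ every subsequent step of your calculation is then literally correct: $h(Y^n+Z_1^n\mid Y^n+Z_2^n,Y^n)=h(Z_1^n)$ and $h(Z_1^n-Z_2^n)=n\log(\pi e(N+1))$, giving $n\log(N+1)$ uniformly in $N$ and making the $N<1$ case split unnecessary. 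The repaired argument essentially coincides with the paper's: there the authors hand the decoder the genie $Z_1^n-Z_2^n$ (with independent noises), observe that $(Y^n+Z_2^n,\,Z_1^n-Z_2^n)$ determines $Y^n+Z_1^n$, and bound the cost by $I(W;Z_1^n-Z_2^n\mid Y^n+Z_2^n)\le h(Z_1^n-Z_2^n)-h(Z_1^n)=n\log(N+1)$ --- the same Gaussian computation you perform, reached without any degradation argument.
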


\begin{proof}[Proof of Theorem \ref{th1}]
Consider a generator of $\mathcal{S}$ denoted by $\mathbf{A}$. Without loss of generality, assume that $\mathcal{S}=[1:m]$, $c(\mathbf{A})=m'$ ($m'\leq m$) and $\Pi_{\mathcal{S}}=(1,...,m)$. Define $\tilde{Y}_i^n=\begin{bmatrix} g_1^n X_1^n & \dots & g_m^n X_m^n \end{bmatrix} \mathbf{A}_i+\tilde{Z}_i^n$, $i\in[1:m']$, where $g_i^n=g_{ii}^n$, $\forall i\in[1:m]$ and the noise vectors $\tilde{Z}_i^n$ have exactly the same distributions as the original noises, but are independent of them and also of each other. Suppose rates $R_i$, $i\in \mathcal{S}$ are achievable. Then, we will have:
\begin{align*}
n\sum_{i\in \mathcal{S}} R_i &= H(W_1,...,W_m|\mathcal{G}^n)\\
&=I(W_1,...,W_m;\tilde{Y}_1^n,...,\tilde{Y}_{m'}^n|\mathcal{G}^n)+H(W_1,...,W_m|\tilde{Y}_1^n,...,\tilde{Y}_{m'}^n,\mathcal{G}^n)\\
&=h(\tilde{Y}_1^n,...,\tilde{Y}_{m'}^n|\mathcal{G}^n)-h(\tilde{Z}_1^n,...,\tilde{Z}_{m'}^n)+H(W_1,...,W_m|\tilde{Y}_1^n,...,\tilde{Y}_{m'}^n,\mathcal{G}^n)\\
&=h(\tilde{Y}_1^n,...,\tilde{Y}_{m'}^n|\mathcal{G}^n)+no(\log(P))+H(W_1,...,W_m|\tilde{Y}_1^n,...,\tilde{Y}_{m'}^n,\mathcal{G}^n).\numberthis\label{eq2}
\end{align*}

Now, we prove that $H(W_l|\tilde{Y}_1^n,...,\tilde{Y}_{m'}^n,W_1,...,W_{l-1},\mathcal{G}^n)\leq no(\log(P))+n\epsilon_{l,n}$ for $l\in[1:m]$. By Definition \ref{d1}, we have that $\mathbf{M}_{l}^\mathcal{S}\in_{l}^\pm \text{span}\	(\mathbf{A},\mathbf{I}_{\{1,...,{l-1}\}}^{|\mathcal{S}|})$. This in turn implies that there exists a vector $\tilde{\mathbf{v}}\in\text{span}\	(\mathbf{A},\mathbf{I}_{\{1,...,{l-1}\}}^{|\mathcal{S}|})$ such that
\begin{itemize}
\item $\tilde{\mathbf{v}}_l$ is either equal to +1 or -1;
\item  $\tilde{\mathbf{v}}_j$ is either equal to 0 or $\pm1$, $\forall j\in\mathcal{IF}_l$; and
\item $\tilde{\mathbf{v}}_j=0$ for all $j\notin \{l\}\cup \mathcal{IF}_l$.
\end{itemize}

This is true because Definition \ref{d0} implies that
\begin{align}
|\tilde{\mathbf{v}}_l|&=|\mathbf{M}_{ll}^\mathcal{S}|=1\label{eqq2}\\
\tilde{\mathbf{v}}_j(|\tilde{\mathbf{v}}_j|-|\mathbf{M}_{jl}^\mathcal{S}|)&=0,\:\forall j\in[1:m]\setminus\{l\}\nonumber,
\end{align}
and we have $\mathbf{M}_{jl}^\mathcal{S}=1$ if $j\in\mathcal{IF}_l$, and $\mathbf{M}_{jl}^\mathcal{S}=0$ if $j\notin\{l\}\cup\mathcal{IF}_l$.

Now, since $\tilde{\mathbf{v}}\in\text{span}\	(\mathbf{A},\mathbf{I}_{\{1,...,{l-1}\}}^{|\mathcal{S}|})$, there exist coefficients $c_i$ ($i\in[1:m']$) and $d_k$ ($k\in[1:l-1]$) such that
\begin{align}
\tilde{\mathbf{v}}&=\sum_{i=1}^{m'} c_i \mathbf{A}_i+\sum_{k=1}^{l-1} d_k \mathbf{I}_k^{|\mathcal{S}|}\label{eqq1}
\end{align}

Multiplying $\begin{bmatrix} g_1^n X_1^n & \dots & g_m^n X_m^n \end{bmatrix}$ by both sides of (\ref{eqq1}), hence, yields
\begin{align*}
\tilde{\mathbf{v}}_l g_l^n X_l^n+\sum_{j\in\mathcal{IF}_l}\tilde{\mathbf{v}}_j g_j^n X_j^n=\sum_{i=1}^{m'} c_i \tilde{Y}_i^n+\sum_{k=1}^{l-1} d_{k} g_k^n X_{k}^n+\tilde{Z'}_l^n,
\end{align*}
where $\tilde{Z'}_l^n=-\sum_{i=1}^{m'} c_i\tilde{Z}_i^n$ and therefore, each of its elements has variance $N_l=\sum_{i=1}^{m'} c_i^2<\infty$. Therefore, we can write:
\begin{align*}
&H\left(W_l|\sum_{i=1}^{m'} c_i \tilde{Y}_i^n+\sum_{k=1}^{l-1} d_{k} g_k^n X_{k}^n,\mathcal{G}^n\right)=
H\left(W_l|\tilde{\mathbf{v}}_l g_l^n X_l^n+\sum_{j\in\mathcal{IF}_l}\tilde{\mathbf{v}}_j g_j^n X_j^n-\tilde{Z'}_l^n,\mathcal{G}^n\right)\\
&\qquad=H\left(W_l|\tilde{\mathbf{v}}_l g_l^n X_l^n+\sum_{j\in\mathcal{IF}_l}\tilde{\mathbf{v}}_j g_j^n X_j^n-\tilde{Z'}_l^n,\sum_{j\in\mathcal{IF}_l}(1-|\tilde{\mathbf{v}}_j|) g_j^n X_j^n,\mathcal{G}^n\right)\numberthis\label{eq6}\\
&\qquad\leq H\left(W_l|\tilde{\mathbf{v}}_l g_l^n X_l^n+\sum_{j\in\mathcal{IF}_l}\tilde{\mathbf{v}}'_j g_j^n X_j^n-\tilde{Z'}_l^n,\mathcal{G}^n\right)\numberthis\label{eq6p}\\
&\qquad\leq no(\log(P))+n\epsilon_{l,n},\numberthis\label{eq7}
\end{align*}
where (\ref{eq6}) is true because, as discussed before, for all $j\in\mathcal{IF}_l$, $\tilde{\mathbf{v}}_j$ can only take the values in $\{\pm1,0\}$ and therefore the signals in $\sum_{j\in\mathcal{IF}_l}\tilde{\mathbf{v}}_j g_j^n X_j^n$ and $\sum_{j\in\mathcal{IF}_l}(1-|\tilde{\mathbf{v}}_j|) g_j^n X_j^n$ do not have common terms.\footnote{If $\tilde{\mathbf{v}}_j=0$, then $1-|\tilde{\mathbf{v}}_j|=1$, and if $\tilde{\mathbf{v}}_j=1$ or $\tilde{\mathbf{v}}_j=-1$, then $1-|\tilde{\mathbf{v}}_j|=0$. Hence, either $\tilde{\mathbf{v}}_j$ or $1-|\tilde{\mathbf{v}}_j|$ is non-zero, but not both.} In (\ref{eq6p}), $\tilde{\mathbf{v}}'_j$ is defined as $\tilde{\mathbf{v}}'_j:=\tilde{\mathbf{v}}_j+(1-|\tilde{\mathbf{v}}_j|)$. Clearly $\tilde{\mathbf{v}}'_j$ can only take the values in $\{+1,-1\}$ because $\tilde{\mathbf{v}}_j\in\{\pm1,0\}$. Also, (\ref{eqq2}) implies that $\tilde{\mathbf{v}}_l\in\{+1,-1\}$. Therefore, $\tilde{\mathbf{v}}_l g_l^n X_l^n+\sum_{j\in\mathcal{IF}_l}\tilde{\mathbf{v}}'_j g_j^n X_j^n-\tilde{Z'}_l^n$ is statistically the same as $Y_l^n$ (with a bounded difference in noise variance), because the channel gains have a symmetric distribution around zero ($f_G(g)=f_G(-g)$, $\forall g\in\mathbb{C}$). This, together with Lemmas \ref{sep} and \ref{lem2} and Fano's inequality, implies that (\ref{eq7}) is correct. Hence, using the chain rule for entropy yields
\begin{align*}
H(W_1,...,W_m|\tilde{Y}_1^n,...,\tilde{Y}_{m'}^n,\mathcal{G}^n)
&=\sum_{l=1}^m H(W_l|\tilde{Y}_1^n,...,\tilde{Y}_{m'}^n,W_1,...,W_{l-1},\mathcal{G}^n)\\
&\leq \sum_{l=1}^m no(\log(P))+n\epsilon_{l,n}\\
&= no(\log(P))+n\epsilon_{n},
\end{align*}
which together with (\ref{eq2}) implies
\begin{align*}
n\sum_{i\in \mathcal{S}} R_i
&\leq h(\tilde{Y}_1^n,...,\tilde{Y}_{m'}^n|\mathcal{G}^n)+no(\log(P))+n\epsilon_{n}\numberthis\label{eqq4}\\
&\leq nm'\log(P)+no(\log(P))+n\epsilon_{n}.
\end{align*}

Letting $n$ and then $P$ go to infinity, we will have:
\begin{equation*}
\sum_{i\in \mathcal{S}} d_i\leq m'\Rightarrow |\mathcal{S}|d_{sym}\leq c(\mathbf{A}) \Rightarrow d_{sym}\leq \frac{c(\mathbf{A})}{|\mathcal{S}|} \Rightarrow d_{sym}\leq \underset{\mathcal{S}\subseteq [1:K]}{\text{min}}\underset{\mathbf{A}\in\mathcal{J}(\mathcal{S})}{\text{min}} \frac{c(\mathbf{A})}{|\mathcal{S}|}.
\end{equation*}
\end{proof}

A simple corollary of Theorem \ref{th1} is the following, which implies that it may be sufficient to only consider as the generators the matrices which are a subset of the columns of the adjacency matrix; i.e. only considering \emph{a subset of the received signals} as our initial signals.

\begin{corl}\label{cor1}
Consider a $K$-user interference network with adjacency matrix $\mathbf{M}$. If $\mathcal{A}\subseteq\mathcal{S}\subseteq[1:K]$ and there exists a sequence $\Pi_{\mathcal{S}\backslash \mathcal{A}}=(i_1,...,i_{|\mathcal{S}\backslash \mathcal{A}|})$ of the users in $\mathcal{S}\backslash \mathcal{A}$ such that:
\begin{align*}
\mathbf{{M}}_{i_j}^\mathcal{S}\in_{i_j}^\pm \emph{span}\	(\mathbf{{M}}_\mathcal{A}^\mathcal{S},\mathbf{I}_{\mathcal{A}\cup\{i_1,...,i_{j-1}\}}^{|\mathcal{S}|}),\	\forall j\in[1:|\mathcal{S}\backslash \mathcal{A}|],
\end{align*}
then $d_{sym}\leq \frac{|\mathcal{A}|}{|\mathcal{S}|}$.
\end{corl}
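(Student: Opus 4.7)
The plan is to derive this corollary directly from Theorem \ref{th1} by exhibiting an explicit generator of $\mathcal{S}$ having exactly $|\mathcal{A}|$ columns. The natural candidate is $\mathbf{A} := \mathbf{M}_{\mathcal{A}}^{\mathcal{S}}$, namely the submatrix of $\mathbf{M}^{\mathcal{S}}$ consisting of the columns indexed by $\mathcal{A}$. Since the entries of $\mathbf{M}^{\mathcal{S}}$ lie in $\{0,1\} \subseteq \{0,\pm 1\}$, this matches the alphabet requirement of Definition \ref{d1}, and we clearly have $c(\mathbf{A}) = |\mathcal{A}|$.

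To verify that $\mathbf{A} \in \mathcal{J}(\mathcal{S})$, I would produce a sequence $\Pi_{\mathcal{S}}$ of all users in $\mathcal{S}$ in two phases. First, list the elements of $\mathcal{A}$ in any order as $a_1,\dots,a_{|\mathcal{A}|}$; then append the prescribed sequence $i_1,\dots,i_{|\mathcal{S}\setminus\mathcal{A}|}$. For each $a_j$ in the first phase, the required condition $\mathbf{M}_{a_j}^{\mathcal{S}} \in_{a_j}^{\pm} \text{span}(\mathbf{A},\mathbf{I}_{\{a_1,\dots,a_{j-1}\}}^{|\mathcal{S}|})$ is immediate: the vector $\mathbf{M}_{a_j}^{\mathcal{S}}$ is itself one of the columns of $\mathbf{A} = \mathbf{M}_{\mathcal{A}}^{\mathcal{S}}$, so taking $\tilde{\mathbf{v}} = \mathbf{M}_{a_j}^{\mathcal{S}}$ satisfies the definition trivially (in fact with equality of all entries, not just up to sign).

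For the second phase, at the step corresponding to $i_j$, the identity columns available are exactly those indexed by $\mathcal{A} \cup \{i_1,\dots,i_{j-1}\}$ (the first $|\mathcal{A}| + j - 1$ users listed in $\Pi_{\mathcal{S}}$). Thus the spanning requirement of Definition \ref{d1} at this step reads
\begin{equation*}
\mathbf{M}_{i_j}^{\mathcal{S}} \in_{i_j}^{\pm} \text{span}\bigl(\mathbf{A},\,\mathbf{I}_{\mathcal{A}\cup\{i_1,\dots,i_{j-1}\}}^{|\mathcal{S}|}\bigr),
\end{equation*}
which is precisely the hypothesis of the corollary. Hence $\mathbf{A}$ is a generator of $\mathcal{S}$, and applying Theorem \ref{th1} yields $d_{sym} \leq c(\mathbf{A})/|\mathcal{S}| = |\mathcal{A}|/|\mathcal{S}|$.

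There is essentially no hard step here; the corollary is a convenient repackaging of Theorem \ref{th1} in the common situation where one takes the generator to be a collection of actual received-signal columns of $\mathbf{M}^{\mathcal{S}}$. The only point worth double-checking is that the first-phase conditions are free, so the only nontrivial content is the hypothesis on the users in $\mathcal{S}\setminus\mathcal{A}$, which matches the corollary's statement verbatim.
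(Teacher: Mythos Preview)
Your proposal is correct and takes essentially the same approach as the paper: the paper simply asserts that $\mathbf{M}_{\mathcal{A}}^{\mathcal{S}}$ is a generator of $\mathcal{S}$ and invokes Theorem~\ref{th1}, whereas you have spelled out the details of constructing the required sequence $\Pi_{\mathcal{S}}$ by listing $\mathcal{A}$ first and then appending the given $\Pi_{\mathcal{S}\setminus\mathcal{A}}$. Your verification that the first-phase conditions are automatic (since each $\mathbf{M}_{a_j}^{\mathcal{S}}$ is itself a column of $\mathbf{A}$) and that the second-phase conditions reduce exactly to the corollary's hypothesis is precisely the ``easy to show'' step the paper omits.
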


\begin{proof}
If $\mathcal{A}$ satisfies the conditions in the corollary, then it is easy to show that $\mathbf{{M}}_\mathcal{A}^\mathcal{S}$ is a generator of $\mathcal{S}$ and hence Theorem \ref{th1} yields $d_{sym}\leq \frac{c(\mathbf{{M}}_\mathcal{A}^\mathcal{S})}{|\mathcal{S}|}=\frac{|\mathcal{A}|}{|\mathcal{S}|}$.
\end{proof}

In fact, this corollary can be applied to Example \ref{ex1} to derive the outer bound of $\frac{2}{5}$ for the symmetric degrees-of-freedom. Note that both Theorem \ref{th1} and Corollary \ref{cor1} depend completely on the set of interferers to the receivers or, equivalently, the adjacency matrix. Therefore, they both highlight the special role of the topology of the network on the outer bounds.

It is important to notice that in the final step of the proof of Theorem \ref{th1}, we used the trivial upper bound of $c(\mathbf{A})n\log(P)$ for the joint entropy of the signals corresponding to the generator $\mathbf{A}$. However, there may be a way to derive a tighter upper bound for this joint entropy in some network topologies, and as we see in the next section, this is in fact the case; i.e. there exist some network topologies in which the upper bound of Theorem \ref{th1} can be improved. Hence, in the following, we will illustrate a method to tighten the upper bound.

\subsection{Upper Bounds Based on the Concept of Fractional Generators}

We will now introduce the notion of \emph{fractional generators} to enhance the outer bound of Theorem \ref{th1}. The idea is that we can make use of the signal interactions and interference topology at the receivers to derive possibly tighter upper bounds for the entropy of the signals corresponding to a generator. To be precise, if a signal is composed of a subset of interferers to a receiver, there is a tighter upper bound than $n\log(P)$ for that signal. To clarify this concept, we will again go through an introductory example.

\begin{ex}\label{ex2}
Consider the 6-user network shown in Figure \ref{fig2}. We claim that the symmetric DoF for this network is upper bounded by $\frac{2}{7}$, while the best upper bound based on Theorem \ref{th1} is $\frac{2}{6}$.
\begin{figure}[h]
\centering
\includegraphics[trim = 2in 3in 2in 3.5in, clip,width=0.3\textwidth]{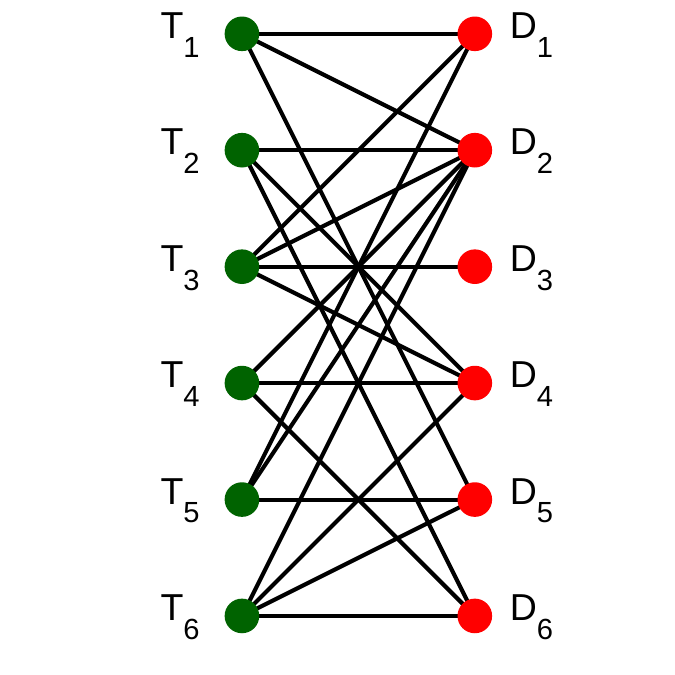}
\caption{A 6-user interference network in which the upper bound of Theorem \ref{th1} is not tight.}
\label{fig2}
\end{figure}

The best upper bound of Theorem \ref{th1} for this example can be shown to be $\frac{2}{6}$, which is obtained by, for example, using $\mathbf{A}=\mathbf{M}_{\{1,4\}}^\mathcal{S}$ as a generator of the entire network $\mathcal{S}=[1:6]$ with $\Pi_{\mathcal{S}}=\{1,4,2,5,3,6\}$. We now show how the proof steps of Theorem \ref{th1} can be enhanced to obtain a tighter upper bound on $d_{sym}$.

Following the proof of Theorem \ref{th1} for the network in Figure \ref{fig2} until equation (\ref{eqq4}) provides
\begin{equation}\label{eqq5}
n\sum_{i=1}^6 R_i \leq h(\tilde{Y}_1^n,\tilde{Y}_4^n|\mathcal{G}^n)+no(\log(P))+n\epsilon_n,
\end{equation}
where $\tilde{Y}_1^n=g_1^nX_1^n+g_3^nX_3^n+g_5^nX_5^n+\tilde{Z}_1^n$ and $\tilde{Y}_4^n=g_2^nX_2^n+g_3^nX_3^n+g_4^nX_4^n+g_6^nX_6^n+\tilde{Z}_4^n$, $\tilde{Z}_1^n$ and $\tilde{Z}_4^n$ have the same distributions as the original noise vectors, but are independent of them and also of each other and $g_i^n=g_{ii}^n$, $i\in[1:6]$. Now, instead of simply upper bounding $h(\tilde{Y}_1^n,\tilde{Y}_4^n|\mathcal{G}^n)$ as $h(\tilde{Y}_1^n,\tilde{Y}_4^n|\mathcal{G}^n)\leq h(\tilde{Y}_1^n|\mathcal{G}^n)+h(\tilde{Y}_4^n|\mathcal{G}^n)\leq 2n\log(P)+no(\log(P))$, we show that a tighter upper bound can be found for $h(\tilde{Y}_1^n|\mathcal{G}^n)$, hence improving the upper bound on $d_{sym}$.

The idea is that in the network of Figure \ref{fig2}, $\text{D}_1$ receives signals from transmitters 1, 3 and 5. However, these transmitters are a \emph{subset of the interferers to receiver 2}; i.e. $\{1,3,5\}\subseteq\mathcal{IF}_2$. This leads to a tighter upper bound of $h(\tilde{Y}_1^n|\mathcal{G}^n)\leq n(\log(P)-R_2)+no(\log(P))+n\epsilon_n$, which can be proved as follows.

First, note that the following equality is true.
\begin{align*}
H(W_2)-H(W_2|g_{22}^n X_2^n+\tilde{Y}_1^n,\mathcal{G}^n)=h(g_{22}^n X_2^n+\tilde{Y}_1^n|\mathcal{G}^n)-h(g_{22}^n X_2^n+\tilde{Y}_1^n|W_2,\mathcal{G}^n),
\end{align*}
because the two sides are equivalent expressions of $I(g_{22}^n X_2^n+\tilde{Y}_1^n;W_2|\mathcal{G}^n)$. Therefore, we have
\begin{align*}
h(g_{22}^n X_2^n+\tilde{Y}_1^n|W_2,\mathcal{G}^n)
&=H(W_2|g_{22}^n X_2^n+\tilde{Y}_1^n,\mathcal{G}^n)+h(g_{22}^n X_2^n+\tilde{Y}_1^n|\mathcal{G}^n)-H(W_2)\\
&\leq n\epsilon_n+h(g_{22}^n X_2^n+\tilde{Y}_1^n|\mathcal{G}^n)-H(W_2)\numberthis\label{eq9}\\
&\leq n\epsilon_n+no(\log(P))+n\log(P)-n R_2,\numberthis\label{eq10}
\end{align*}
where (\ref{eq9}) holds because of Fano's inequality and the fact that $g_{22}^n X_2^n+\tilde{Y}_1^n$ is a less-interfered version of the received signal at receiver 2 and is able to decode $W_2$ due to Lemma \ref{sep}.

On the other hand, since $X_2^n$ is a function of $W_2$, we have
\begin{equation*}
h(g_{22}^n X_2^n+\tilde{Y}_1^n|W_2,\mathcal{G}^n)=h(\tilde{Y}_1^n|\mathcal{G}^n),
\end{equation*}
which together with (\ref{eq10}) yields $h(\tilde{Y}_1^n|\mathcal{G}^n)\leq n(\log(P)-R_2)+no(\log(P))+n\epsilon_n$. Hence we can continue (\ref{eqq5}) as
\begin{align*}
n\sum_{i=1}^6 R_i &\leq 
h(\tilde{Y}_1^n|\mathcal{G}^n)+h(\tilde{Y}_4^n|\mathcal{G}^n)+no(\log(P))+n\epsilon_n\\
&\leq 2n\log(P)-nR_2+no(\log(P))+n\epsilon_n.
\end{align*}

Letting $n$ and then $P$ go to infinity and setting all the DoFs to be equal to $d_{sym}$, we will have:
\begin{equation*}
6d_{sym}\leq 2-d_{sym}\Rightarrow d_{sym}\leq \frac{2}{7},
\end{equation*}
which is strictly tighter than the previous outer bound of $\frac{2}{6}$ based on Theorem \ref{th1}.

Now, we will illustrate the improvement of the outer bound in a linear algebraic form. The key part in the enhancement was that by adding $g_{22}^nX_2^n$ to $\tilde{Y}_1^n$, we could create a signal which was able to decode $W_2$. As we have discussed before, if $\mathcal{S}=[1:6]$, then $\tilde{Y}_1^n$ corresponds to the vector $\mathbf{M}_1^\mathcal{S}=\begin{bmatrix}
1 & 0 & 1 & 0 & 1 & 0
\end{bmatrix}^T$. Therefore, adding $g_{22}^nX_2^n$ to $\tilde{Y}_1^n$ can be translated to adding $\mathbf{I}_2^{|\mathcal{S}|}=\begin{bmatrix}
0 & 1 & 0 & 0 & 0 & 0
\end{bmatrix}^T$ to $\mathbf{M}_1^\mathcal{S}$. Moreover, the fact that $W_2$ can be decoded from $g_{22}^nX_2^n+\tilde{Y}_1^n$ is equivalent to $\mathbf{M}_2^\mathcal{S}\in_2^\pm\text{span}(\mathbf{M}_1^\mathcal{S}+\mathbf{I}_2^{|\mathcal{S}|})$. We will call $\mathbf{M}_1^\mathcal{S}$ a \emph{fractional generator} of $\mathcal{S}'$ in $\mathcal{S}$ where $\mathcal{S}'=\{2\}$. This means that by expanding the signal corresponding to $\mathbf{M}_1^\mathcal{S}$ (through adding $\mathbf{I}_2^{|\mathcal{S}|}$ to $\mathbf{M}_1^\mathcal{S}$ or equivalently $g_{22}^nX_2^n$ to $\tilde{Y}_1^n$), the resulting expanded signal is able to decode $W_2$. This is the method that we will use to linear algebraically describe the improvement in the outer bound on $d_{sym}$.
\demo
\end{ex}

\begin{remk}
A similar approach in \cite{jafar} has been taken to derive an upper bound for the symmetric DoF of general network topologies. In particular, if for the network in Example \ref{ex2}, we set $h_{26}=h_{46}=h_{66}=-\sqrt{\text{SNR}\times \frac{N_0}{P}}$ and the other channel gains $h_{ji}$ to $\sqrt{\text{SNR}\times \frac{N_0}{P}}$, then \emph{maximum cardinality of an acyclic subset of messages}, denoted by $\Psi$, is equal to 3 and the \emph{minimum internal conflict distance}, denoted by $\Delta$, is equal to 1 for the network of Figure \ref{fig2}. Therefore, both of the bounds presented in Theorem 4.12 and Corollary 4.13 of \cite{jafar} for the network of Figure \ref{fig2} are equal to $\frac{1}{3}$, while the outer bound of $\frac{2}{7}$ that we derived in Example \ref{ex2} is strictly tighter.
\end{remk}

To generalize the improvement of the outer bound to all network topologies, we define the concept of \emph{fractional generator}.

\begin{defi}\label{d2}
Consider a $K$-user interference network with adjacency matrix $\mathbf{M}$ and suppose $\mathcal{S'}\subseteq\mathcal{S}\subseteq[1:K]$. A vector $\mathbf{c} \in \{\pm1,0\}^{|\mathcal{S}|}$ is called a \emph{fractional generator of $\mathcal{S'}$ in $\mathcal{S}$} if $\mathbf{c}_k=0, \forall k\in\mathcal{S}'$ and there exists a sequence $\Pi_\mathcal{S'}=(i_1,...,i_{|\mathcal{S'}|})$ of the users in $\mathcal{S'}$ such that:
\begin{align*}
\mathbf{M}_{i_j}^\mathcal{S}&\in_{i_j}^\pm \text{span}\	\left(\mathbf{c}+\sum_{k\in\mathcal{S'}} \mathbf{I}_k^{|\mathcal{S}|},\mathbf{I}_{\{i_1,...,i_{j-1}\}}^{|\mathcal{S}|}\right),\	\forall j\in[1:|\mathcal{S'}|].
\end{align*}

We use the notation $\mathcal{J}_{\mathcal{S}}(\mathcal{S'})$ to denote the set of all fractional generators of $\mathcal{S'}$ in $\mathcal{S}$.
\trig
\end{defi}

Intuitively, a fractional generator of $\mathcal{S'}$ in $\mathcal{S}$ is a column vector whose corresponding signal can decode the messages of the users in $\mathcal{S}'$ (which is a subset of the set of entire users $\mathcal{S}$) sequentially, after expansion by adding $\sum_{k\in\mathcal{S'}} \mathbf{I}_k^{|\mathcal{S}|}$ to it (or equivalently, by adding $\sum_{k\in\mathcal{S'}} g_{kk}^n X_k^n$ to its corresponding signal).

After having the definition of fractional generators, we can state the following lemma, which is proved in Appendix \ref{apx2}.

\begin{lem}\label{lem3}
Consider a subset of users $\mathcal{S}\subseteq[1:K]$ in a $K$-user interference network and suppose $\mathbf{c}\in\mathcal{J}_{\mathcal{S}}(\mathcal{S'})$, where $\mathcal{S'}$ is a subset of $\mathcal{S}$. If rates $R_i$ are achievable for all $i\in\mathcal{S'}$, then
\begin{equation*}
h\left(\sum_{j\in\mathcal{S}} \mathbf{c}_j g_j^n X_j^n+Z^n|\mathcal{G}^n\right)\leq n\left(\log(P)-\sum_{i\in\mathcal{S'}} R_i\right)+no(\log(P))+n\epsilon_n,
\end{equation*}
where $g_j^n=g_{jj}^n$ , $\forall j\in\mathcal{S}$, $\epsilon_n\rightarrow 0$ as $n\rightarrow\infty$, and $Z^n$ is a white Gaussian noise vector with each element distributed as $\mathcal{CN}(0,1)$, independent of the transmit symbols and the channel gains.
\end{lem}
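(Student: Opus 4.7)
The plan is to adapt the reasoning of Example~\ref{ex2} to an arbitrary topology, leveraging the sequential-decoding mechanism already established in the proof of Theorem~\ref{th1}. Write $U^n := \sum_{j\in\mathcal{S}} \mathbf{c}_j g_j^n X_j^n + Z^n$ for the signal whose entropy we wish to bound and introduce the \emph{expanded} signal
\[
V^n := U^n + \sum_{k\in\mathcal{S}'} g_{kk}^n X_k^n,
\]
which corresponds precisely to the vector $\mathbf{c}+\sum_{k\in\mathcal{S}'}\mathbf{I}_k^{|\mathcal{S}|}$ of Definition~\ref{d2}. The target bound will follow from the identity
\[
h(V^n|\mathcal{G}^n) = h(V^n|W_{\mathcal{S}'},\mathcal{G}^n) + I(W_{\mathcal{S}'};V^n|\mathcal{G}^n),
\]
together with three ingredients: (i) the trivial estimate $h(V^n|\mathcal{G}^n) \leq n\log P + no(\log P)$; (ii) the ``deterministic-shift'' identity $h(V^n|W_{\mathcal{S}'},\mathcal{G}^n) = h(U^n|\mathcal{G}^n)$, which holds because each $X_k^n$ with $k\in\mathcal{S}'$ is a deterministic function of $W_k$ while $U^n$ is independent of $W_{\mathcal{S}'}$ given $\mathcal{G}^n$ (the requirement $\mathbf{c}_k = 0$ for $k\in\mathcal{S}'$ is exactly what enforces this independence); and (iii) the Fano-type bound $H(W_{\mathcal{S}'}|V^n,\mathcal{G}^n) \leq no(\log P) + n\epsilon_n$.

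The core step is (iii), and for it I would mimic the inductive decoding argument in the proof of Theorem~\ref{th1} verbatim, but with the single seed $V^n$ replacing the $m'$ seeds $\tilde{Y}_1^n,\ldots,\tilde{Y}_{m'}^n$ and the sequence $\Pi_{\mathcal{S}'} = (i_1,\ldots,i_{|\mathcal{S}'|})$ replacing $\Pi_\mathcal{S}$. At stage $j$ the defining relation
\[
\mathbf{M}_{i_j}^\mathcal{S} \in_{i_j}^\pm \mathrm{span}\bigl(\mathbf{c}+\textstyle\sum_{k\in\mathcal{S}'}\mathbf{I}_k^{|\mathcal{S}|},\,\mathbf{I}_{\{i_1,\ldots,i_{j-1}\}}^{|\mathcal{S}|}\bigr)
\]
supplies scalars $c_0, d_{i_1},\ldots,d_{i_{j-1}}$ for which the linear combination $c_0 V^n + \sum_{t<j} d_{i_t} g_{i_t}^n X_{i_t}^n$---after the ``zero versus $\pm 1$'' coordinate splitting used between (\ref{eq6}) and (\ref{eq6p}) to pad zero entries of $\tilde{\mathbf{v}}$ with genuine interferers of $D_{i_j}$---becomes statistically equivalent to $Y_{i_j}^n$, with only a bounded inflation of noise variance (namely $c_0^2<\infty$). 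By the induction hypothesis the previously-decoded messages $W_{i_1},\ldots,W_{i_{j-1}}$ let us regenerate $g_{i_t}^n X_{i_t}^n$ through the encoders, and Fano's inequality combined with Lemma~\ref{lem2} yields $H(W_{i_j}|V^n,W_{i_1},\ldots,W_{i_{j-1}},\mathcal{G}^n) \leq no(\log P) + n\epsilon_{j,n}$; the chain rule then delivers (iii).

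Assembling the three ingredients gives
\begin{align*}
h(U^n|\mathcal{G}^n)
&= h(V^n|W_{\mathcal{S}'},\mathcal{G}^n) \\
&\leq h(V^n|\mathcal{G}^n) - H(W_{\mathcal{S}'}) + H(W_{\mathcal{S}'}|V^n,\mathcal{G}^n) \\
&\leq n\Bigl(\log P - \sum_{i\in\mathcal{S}'} R_i\Bigr) + no(\log P) + n\epsilon_n,
\end{align*}
which is exactly the claimed inequality. The main obstacle is conceptual rather than computational: one must verify that the $\in_{i_j}^\pm$ relation can indeed be promoted to a genuine statistical equivalence with $Y_{i_j}^n$ by padding the zero coordinates of $\tilde{\mathbf{v}}$ with the absent interferers and invoking the symmetry $f_G(g) = f_G(-g)$. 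This is precisely the manoeuvre carried out between (\ref{eq6}) and (\ref{eq7}) in the proof of Theorem~\ref{th1}, so no genuinely new idea is required---only the bookkeeping of tracking $\mathcal{S}'$ in place of $\mathcal{S}$ and of a single seed $V^n$ in place of a matrix of seeds.
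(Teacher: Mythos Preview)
Your proposal is correct and follows essentially the same route as the paper's proof: define the expanded signal $V^n$, use the mutual-information identity to rewrite $h(V^n|W_{\mathcal{S}'},\mathcal{G}^n)$ in terms of $h(V^n|\mathcal{G}^n)$, $H(W_{\mathcal{S}'})$, and $H(W_{\mathcal{S}'}|V^n,\mathcal{G}^n)$, bound the last term via the sequential-decoding argument of Theorem~\ref{th1} (with $\alpha$ playing the role of your $c_0$), and finish with the deterministic-shift identity enabled by $\mathbf{c}_k=0$ for $k\in\mathcal{S}'$. The only cosmetic difference is that the paper writes the mutual-information identity as an equality from the start, whereas you state it as an inequality in the final assembly.
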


Moreover, for a vector $\mathbf{c}\in \{\pm1,0\}^{|\mathcal{S}|}$, we define $n_{\mathcal{S}}(\mathbf{c})$ as the size of the largest subset $\mathcal{S'}$ of $\mathcal{S}$ such that $\mathbf{c}$ is a fractional generator of $\mathcal{S}'$ in $\mathcal{S}$. To be precise, we have the following definition.

\begin{defi}\label{d3}
Consider a subset of users $\mathcal{S}\subseteq[1:K]$ in a $K$-user interference network. For a vector $\mathbf{c}\in \{\pm1,0\}^{|\mathcal{S}|}$, $n_{\mathcal{S}}(\mathbf{c})$ is defined as
\begin{align*}
n_{\mathcal{S}}(\mathbf{c}):= \underset{\mathcal{S'}\subseteq\mathcal{S}}{\text{max}}\:&|\mathcal{S'}|\\
\text{s.t. }&\mathbf{c}\in\mathcal{J}_{\mathcal{S}}(\mathcal{S'}).
\end{align*}
\trig
\end{defi}

Note that, due to Lemma \ref{lem3}, finding $n_{\mathcal{S}}(\mathbf{c})$ leads to the tightest upper bound for the signal corresponding to $\mathbf{c}$. Therefore, we are now at a stage to state our second converse.

\begin{theo}\label{th2}
The symmetric DoF of a $K$-user interference network with no CSIT is upper bounded by
\begin{equation*}
d_{sym}\leq \underset{\mathcal{S}\subseteq [1:K]}{\emph{min}}\underset{\mathbf{A}\in\mathcal{J}(\mathcal{S})}{\emph{min}} \frac{c(\mathbf{A})}{|\mathcal{S}|+\sum_{i=1}^{c(\mathbf{A})}n_{\mathcal{S}}(\mathbf{A}_i)},
\end{equation*}
where for each $\mathcal{S}\subseteq[1:K]$, $\mathcal{J}(\mathcal{S})$ denotes the set of all generators of $\mathcal{S}$ (Definition \ref{d1}), $c(\mathbf{A})$ denotes the number of columns of $\mathbf{A}$  and $n_{\mathcal{S}}(\mathbf{A}_i)$ is defined as in Definition \ref{d3}.
\end{theo}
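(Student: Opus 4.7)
The plan is to proceed exactly as in the proof of Theorem \ref{th1} up through equation (\ref{eqq4}), and then refine the final bounding step on the joint differential entropy of the signals associated with the generator by invoking Lemma \ref{lem3} separately for each column. The whole improvement is thus concentrated in the last inequality of the previous argument, where we had trivially used $h(\tilde{Y}_1^n,\ldots,\tilde{Y}_{m'}^n|\mathcal{G}^n)\leq m' n \log(P)+no(\log(P))$.

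First, fix any $\mathcal{S}\subseteq[1:K]$ and any $\mathbf{A}\in\mathcal{J}(\mathcal{S})$, set $m=|\mathcal{S}|$ and $m'=c(\mathbf{A})$, and define $\tilde{Y}_i^n=[g_1^n X_1^n\ \cdots\ g_m^n X_m^n]\mathbf{A}_i+\tilde{Z}_i^n$ for $i\in[1:m']$, exactly as in the proof of Theorem \ref{th1}. Repeating the successive-decoding argument based on the spanning relations of Definition \ref{d1} gives, as before,
\begin{align*}
n\sum_{i\in \mathcal{S}} R_i \leq h(\tilde{Y}_1^n,\ldots,\tilde{Y}_{m'}^n|\mathcal{G}^n)+no(\log(P))+n\epsilon_n.
\end{align*}
The novelty is to bound the right hand side using the chain rule $h(\tilde{Y}_1^n,\ldots,\tilde{Y}_{m'}^n|\mathcal{G}^n)\leq \sum_{i=1}^{m'} h(\tilde{Y}_i^n|\mathcal{G}^n)$ and then to bound each term individually.

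The key step is to observe that $\tilde{Y}_i^n$ is precisely of the form $\sum_{j\in\mathcal{S}}\mathbf{c}_j g_j^n X_j^n+Z^n$ with $\mathbf{c}=\mathbf{A}_i$, so Lemma \ref{lem3} applies. In particular, if $\mathcal{S}_i'\subseteq\mathcal{S}$ is a subset achieving the maximum in Definition \ref{d3} for $\mathbf{c}=\mathbf{A}_i$, so that $|\mathcal{S}_i'|=n_{\mathcal{S}}(\mathbf{A}_i)$ and $\mathbf{A}_i\in\mathcal{J}_{\mathcal{S}}(\mathcal{S}_i')$, then
\begin{align*}
h(\tilde{Y}_i^n|\mathcal{G}^n)\leq n\Bigl(\log(P)-\sum_{k\in\mathcal{S}_i'}R_k\Bigr)+no(\log(P))+n\epsilon_n.
\end{align*}
Summing over $i\in[1:m']$ and combining with the previous display yields
\begin{align*}
n\sum_{i\in \mathcal{S}}R_i+n\sum_{i=1}^{m'}\sum_{k\in\mathcal{S}_i'}R_k\leq m'n\log(P)+no(\log(P))+n\epsilon_n.
\end{align*}

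To conclude, specialize to the symmetric DoF by taking $R_i$ with $d_i=d_{sym}$ for all $i\in\mathcal{S}$. Since $|\mathcal{S}_i'|=n_{\mathcal{S}}(\mathbf{A}_i)$, the double sum on the left becomes $d_{sym}\sum_{i=1}^{m'}n_{\mathcal{S}}(\mathbf{A}_i)$ after normalizing by $n\log(P)$ and letting $n\to\infty$ and then $P\to\infty$. This gives
\begin{align*}
\Bigl(|\mathcal{S}|+\sum_{i=1}^{c(\mathbf{A})}n_{\mathcal{S}}(\mathbf{A}_i)\Bigr)d_{sym}\leq c(\mathbf{A}),
\end{align*}
and minimizing over $\mathcal{S}\subseteq[1:K]$ and $\mathbf{A}\in\mathcal{J}(\mathcal{S})$ yields the stated bound. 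The main conceptual obstacle has already been absorbed into Lemma \ref{lem3}: the nontrivial point is that one can simultaneously use the same column $\mathbf{A}_i$ both as part of the generator (to decode all messages in $\mathcal{S}$) and as a fractional generator (to tighten the entropy of that specific signal by $\sum_{k\in\mathcal{S}_i'}R_k$), because the first use is a statement about decodability after conditioning on a growing set of messages, while the second is a pure entropy bound on one unconditioned signal.
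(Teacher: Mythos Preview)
Your proof is correct and follows essentially the same route as the paper: reuse the proof of Theorem \ref{th1} up to (\ref{eqq4}), split the joint entropy via $h(\tilde{Y}_1^n,\ldots,\tilde{Y}_{m'}^n|\mathcal{G}^n)\leq\sum_i h(\tilde{Y}_i^n|\mathcal{G}^n)$, apply Lemma \ref{lem3} to each column with a maximizing $\mathcal{S}_i'$, and then pass to the symmetric-DoF limit. Your closing remark about the dual role of $\mathbf{A}_i$ is a nice clarification but not something the paper spells out.
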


\begin{proof}
Following the proof of Theorem \ref{th1} until equation (\ref{eqq4}), we know that if $\mathcal{S}=\{1,...,m\}$, $c(\mathbf{A})=m'$, $\Pi_{\mathcal{S}}=(1,...,m)$, $\tilde{Y}_i^n=\begin{bmatrix} g_1^n X_1^n & \dots & g_m^n X_m^n \end{bmatrix} \mathbf{A}_i+\tilde{Z}_i^n$, $i\in[1:m']$ and if rates $R_i$ ($i\in \mathcal{S}$) are achievable, we will have
\begin{align*}
n\sum_{i\in \mathcal{S}} R_i &\leq
h(\tilde{Y}_1^n,...,\tilde{Y}_{m'}^n|\mathcal{G}^n)+no(\log(P))+n\epsilon_{n}\\
&\leq \sum_{i=1}^{m'} h(\tilde{Y}_i^n|\mathcal{G}^n)+no(\log(P))+n\epsilon_{n}.\numberthis\label{eq16}
\end{align*}

Now, if $\mathbf{A}_i\in\mathcal{J}_{\mathcal{S}}(\mathcal{S'})$, then Lemma \ref{lem3} implies
\begin{align*}
n\left(\log(P)-\sum_{j\in\mathcal{S'}} R_j\right)+no(\log(P))+n\epsilon_n
&\geq h\left(\sum_{j\in\mathcal{S}} \mathbf{A}_{ji} g_j^n X_j^n+\tilde{Z}_i^n|\mathcal{G}^n\right) \\
&= h\left(\begin{bmatrix} g_1^n X_1^n & \dots & g_m^n X_m^n \end{bmatrix} \mathbf{A}_i+\tilde{Z}_i^n|\mathcal{G}^n\right)\\
&= h(\tilde{Y}_i^n|\mathcal{G}^n).\numberthis\label{eq17}
\end{align*}

Thus, to find the tightest upper bound on $h(\tilde{Y}_i^n|\mathcal{G}^n)$ for every $i\in[1:c(\mathbf{A})]$, we need to find the largest subset $\mathcal{S'}$ such that $\mathbf{A}_i\in\mathcal{J}_{\mathcal{S}}(\mathcal{S'})$, which we denote by $\mathcal{S'}_i^*$; i.e. $\mathcal{S'}_i^*=\text{arg }\underset{\mathcal{S'}}{\text{max}}\:|\mathcal{S'}|$ s.t. $\mathbf{A}_i\in\mathcal{J}_{\mathcal{S}}(\mathcal{S'})$. Combining this with (\ref{eq16}) and (\ref{eq17}) yields
\begin{align*}
n\sum_{i\in \mathcal{S}} R_i
&\leq \sum_{i=1}^{c(\mathbf{A})} n(\log(P)-\sum_{j\in\mathcal{S'}_i^*} R_j)+no(\log(P))+n\epsilon_n.
\end{align*}

Letting $n$ and then $P$ go to infinity and setting all the DoFs to be equal to $d_{sym}$, we will have:
\begin{align*}
|\mathcal{S}|d_{sym}&\leq c(\mathbf{A})- \sum_{i=1}^{c(\mathbf{A})} |\mathcal{S'}_i^*|d_{sym} = c(\mathbf{A})- \sum_{i=1}^{c(\mathbf{A})} n_{\mathcal{S}}(\mathbf{A}_i)d_{sym}\\
\Rightarrow d_{sym}&\leq \frac{c(\mathbf{A})}{|\mathcal{S}|+\sum_{i=1}^{c(\mathbf{A})}n_{\mathcal{S}}(\mathbf{A}_i)}\\
\Rightarrow
d_{sym}&\leq \underset{\mathcal{S}\subseteq [1:K]}{\text{min}}\underset{\mathbf{A}\in\mathcal{J}(\mathcal{S})}{\text{min}} \frac{c(\mathbf{A})}{|\mathcal{S}|+\sum_{i=1}^{c(\mathbf{A})}n_{\mathcal{S}}(\mathbf{A}_i)}.
\end{align*}
\end{proof}

As it is clear from the above discussion, the outer bound of Theorem \ref{th2} captures the impact of network topology on upper bounding the symmetric DoF more strongly than Theorem \ref{th1}. In fact, Theorem \ref{th2} tries to focus on the signal and interference interactions at the receivers through Lemma \ref{lem3}, which is the key aspect of the improvement of the bound compared to the bound suggested by Theorem \ref{th1}.

\section{Inner Bounds on $d_{sym}$}\label{ach}

In this section, we derive inner bounds on the symmetric degrees-of-freedom. In particular, we focus on two benchmark schemes, namely \emph{random Gaussian coding} and \emph{interference avoidance}, and introduce a new scheme called \emph{structured repetition coding}. The structured repetition coding scheme in general performs better than (or at least the same as) the first two schemes and as we illustrate in Section \ref{sim}, it closes the gap between the inner and outer bounds in many networks where the first two schemes fail to do so.

\subsection{Benchmark Schemes}\label{bench}

We start by presenting two benchmark schemes and we will compare them with each other through examples to study their performance with respect to our outer bounds in Section \ref{converse}.

\subsubsection{Random Gaussian Coding and Interference Decoding}

In the first scheme, we use random Gaussian coding, such that all interfering messages at each receiver are decoded. Consider a $K$-user interference network and look at one of the receivers, say $\text{D}_j$. It receives signals from $\text{T}_i$, $i\in\{j\}\cup\mathcal{IF}_j$. Therefore, we can see this subnetwork as a multiple access channel (MAC) to receiver $j$. It is well known \cite{tse} that in the fast fading settings, the capacity region of MAC with no CSIT is specified by
\begin{align*}
\sum_{i\in\mathcal{S}} R_i\leq \mathbb{E}\left[\log\left( 1+\sum_{i\in\mathcal{S}}|g_{ij}|^2P\right) \right],\:\forall\mathcal{S}\subseteq\{j\}\cup\mathcal{IF}_j,
\end{align*}
where the expectation is taken with respect to the channel gains. Now, since $\log(x)<\log(1+x)$ for all positive $x$, the rates $R_i$ are achievable if they satisfy
\begin{align*}
\sum_{i\in\mathcal{S}} R_i\leq \mathbb{E}\left[\log\left(\sum_{i\in\mathcal{S}}|g_{ij}|^2P\right) \right]=\log(P)+\mathbb{E}\left[\log\left(\sum_{i\in\mathcal{S}}|g_{ij}|^2\right) \right],\:\forall\mathcal{S}\subseteq\{j\}\cup\mathcal{IF}_j.
\end{align*}

Then, because $\log(.)$ is a monotonically increasing function, the rates $R_i$ are achievable if the following holds.
\begin{align*}
\sum_{i\in\mathcal{S}} R_i\leq \log(P)+\mathbb{E}\left[\log\left(|g_{i_\mathcal{S}j}|^2\right) \right],\:\forall\mathcal{S}\subseteq\{j\}\cup\mathcal{IF}_j,
\end{align*}
where for each $\mathcal{S}\subseteq\{j\}\cup\mathcal{IF}_j$, $i_\mathcal{S}$ is some user in $\mathcal{S}$. From the regularity conditions on the distribution of the channel gains (mentioned in Section \ref{form_not}), it can be shown that $\mathbb{E}\left[\log\left( |g|^2\right) \right]>-\infty$ (a more general case is proved in Appendix \ref{apx4}). Therefore, dividing the above equations by $\log(P)$ and letting $P$ go to infinity leads to the fact that the degrees-of-freedom $d_j$ are achievable if
\begin{align*}
\sum_{i\in\mathcal{S}} d_i\leq 1,\:\forall\mathcal{S}\subseteq\{j\}\cup\mathcal{IF}_j.
\end{align*}

For the degrees-of-freedom to be symmetric, we will therefore have $d_{sym}\leq\frac{1}{|\mathcal{S}|}$ which should hold for every $\mathcal{S}\subseteq\{j\}\cup\mathcal{IF}_j$. Choosing the largest subset $\mathcal{S}$ yields $d_{sym}\leq\frac{1}{1+|\mathcal{IF}_j|}$. Furthermore, all the rates (degrees-of-freedom) in this region can be achieved using random Gaussian codebooks of size $2^{nR_i}\times n$ generated for each user, in which all the elements are i.i.d. $\mathcal{CN}(0,P)$. The message $W_i$ is the index of the row of this codebook matrix and the transmit vector will be the corresponding row of the codebook. Therefore, by applying the viewpoint of multiple access channels to all the receivers in the interference network, this theorem follows immediately.

\begin{theo}\label{th3}
Consider a $K$-user interference network with adjacency matrix $\mathbf{M}$. If we denote the maximum receiver degree by $\Delta_R$ (defined as $\Delta_R:=1+\underset{j\in[1:K]}{\max}|\mathcal{IF}_j|=\underset{j\in[1:K]}{\max}\sum_{i=1}^K \mathbf{M}_{ij}$), then the symmetric DoF of $\frac{1}{\Delta_R}$ is achievable.
\end{theo}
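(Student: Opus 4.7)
The plan is to reduce the achievability claim to the capacity region of a multiple-access channel (MAC) with no CSIT, applied independently at each receiver. Specifically, I would instruct every transmitter $\mathrm{T}_i$ to generate a random Gaussian codebook of size $2^{nR_i}\times n$ whose entries are i.i.d.\ $\mathcal{CN}(0,P)$, and to transmit the row indexed by its message $W_i$. The main observation is that from the standpoint of a single receiver $\mathrm{D}_j$, which only needs to decode $W_j$, treating the subnetwork $\{j\}\cup\mathcal{IF}_j\to \mathrm{D}_j$ as a standalone MAC in which \emph{all} interfering messages are also decoded is sufficient; once this constraint is met at every receiver, the joint rate tuple is achievable in the full interference network.

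First I would invoke the standard fast-fading MAC capacity expression with no CSIT to conclude that rates $\{R_i\}_{i\in\{j\}\cup\mathcal{IF}_j}$ at receiver $j$ are achievable provided
\begin{equation*}
\sum_{i\in\mathcal{S}} R_i \leq \mathbb{E}\!\left[\log\!\left(1+\sum_{i\in\mathcal{S}}|g_{ij}|^2 P\right)\right],\quad \forall\mathcal{S}\subseteq\{j\}\cup\mathcal{IF}_j.
\end{equation*}
Next I would weaken this by dropping the ``$1+$'', bounding the sum inside the $\log$ by a single term $|g_{i_{\mathcal{S}}j}|^2 P$ for some $i_{\mathcal{S}}\in\mathcal{S}$ via monotonicity, and separating the $\log(P)$ factor. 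This yields the cleaner sufficient condition $\sum_{i\in\mathcal{S}} R_i \leq \log(P) + \mathbb{E}[\log|g_{i_{\mathcal{S}}j}|^2]$, which is legitimate as long as $\mathbb{E}[\log|g|^2]>-\infty$. I would justify this finiteness from the three regularity conditions on $f_G$ stated in Section~\ref{form_not} (bounded density of $|G|$ near zero and finite second moment), pointing to Appendix~\ref{apx4} where the analogous statement is already proved.

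I would then divide the resulting inequalities by $\log(P)$ and send $P\to\infty$, obtaining the DoF sufficient condition $\sum_{i\in\mathcal{S}} d_i\leq 1$ for every $\mathcal{S}\subseteq\{j\}\cup\mathcal{IF}_j$. Imposing symmetry $d_i=d_{\mathrm{sym}}$ and choosing the largest subset $\mathcal{S}=\{j\}\cup\mathcal{IF}_j$ tightens this to $d_{\mathrm{sym}}\leq \tfrac{1}{1+|\mathcal{IF}_j|}$ at receiver $j$. Since this must hold simultaneously at every receiver, the binding constraint is determined by the maximum receiver degree, giving the achievable value $d_{\mathrm{sym}}=\tfrac{1}{\Delta_R}$.

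I do not anticipate any serious obstacle in this argument; the content of the theorem is essentially a bookkeeping exercise combining the classical MAC-no-CSIT capacity region with the regularity of $f_G$. The only mildly delicate step is confirming that the $o(\log P)$ penalty arising from the expected log of the channel magnitude is indeed sub-logarithmic, i.e.\ that $\mathbb{E}[\log|g|^2]$ is finite (and in particular bounded away from $-\infty$). Once this is in hand, the rest of the proof is a straightforward pre-log computation, and the achievability holds even without any cooperation or coordination among transmitters.
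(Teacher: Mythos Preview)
Your proposal is correct and mirrors the paper's own argument essentially step for step: treat each receiver as a fast-fading MAC with no CSIT, use the standard capacity region, drop the ``$1+$'' and bound by a single channel term, invoke the regularity conditions on $f_G$ (and Appendix~\ref{apx4}) to ensure $\mathbb{E}[\log|g|^2]>-\infty$, and then pass to the DoF limit to obtain $d_{\mathrm{sym}}=\tfrac{1}{\Delta_R}$. There is nothing to add or correct.
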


Theorem \ref{th3} only considers the maximum degree among the receivers to derive an inner bound on $d_{sym}$. However, it fails to capture how further details of network topology can affect the achievable symmetric DoF. In other words, this theorem suggests a similar inner bound for all network topologies whose maximum receiver degrees are identical, implying its possible suboptimality for many networks. Therefore, we should seek for other schemes that exploit other structures in the network topology.

\subsubsection{Interference Avoidance}

As the name suggests, this scheme is based on avoiding the interference by all the users. Each transmitter, aware of the network topology, knows the receivers which receive interference from itself and also the transmitters who cause interference at its corresponding receiver. Therefore, it can avoid sending its symbols at the same time as those users. In other words, in this scheme, each user uses a time slot to transmit data if and only if the users who receive interference from/cause interference at that user do not use that time slot. This is tightly connected to the concept of \emph{independent sets}.

Suppose we have a $K$-user interference network. $\mathcal{U}\subseteq[1:K]$ is an independent set if for all two distinct users $i$ and $j$ in $\mathcal{U}$, $\mathbf{M}_{ij}=\mathbf{M}_{ji}=0$; i.e. users $i$ and $j$ are mutually non-interfering. Obviously, all the users in an independent set can transmit their symbols at the same time without experiencing any interference. This is the essence of the \emph{interference avoidance} scheme. Naturally, it is best if the largest possible subset of the users send together, leading to the concept of \emph{maximal} independent sets. $\mathcal{U}$ is a maximal independent set if it is an independent set, but for all $l\in[1:K]\backslash\mathcal{U}$, $\mathcal{U}\cup\{l\}$ is not an independent set.

After describing the above scheme, we can state our second inner bound.

\begin{theo}\label{th4}
Consider a $K$-user interference network with adjacency matrix $\mathbf{M}$ and suppose $\mathcal{U}=\{\mathcal{U}_1,...,\mathcal{U}_m\}$ is the set of all maximal independent sets of this network. Then, the following symmetric DoF is achievable by interference avoidance.
\begin{align*}
\underset{n\in\mathbb{N}}{\emph{sup}}\:\underset{\mathcal{U'}_1,...,\mathcal{U'}_n\in\mathcal{U}}{\emph{max}}\: \underset{i\in[1:K]}{\emph{min}}\frac{\sum_{j=1}^n \mathbf{1}(i\in\mathcal{U'}_j)}{n},
\end{align*}
where for an event $A$, $\mathbf{1}(\text{A})=1$ if A occurs and $\mathbf{1}(\text{A})=0$ otherwise.
\end{theo}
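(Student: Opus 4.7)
The plan is to implement the interference avoidance scheme in the most direct way: time-share among a carefully chosen collection of maximal independent sets. Fix $n \in \mathbb{N}$ and maximal independent sets $\mathcal{U}'_1, \ldots, \mathcal{U}'_n \in \mathcal{U}$. Over $n$ consecutive time slots, in slot $j$ I allow exactly the transmitters in $\mathcal{U}'_j$ to be active; all other transmitters send zero during slot $j$. Since every node already knows the adjacency matrix $\mathbf{M}$, this schedule can be agreed upon a priori with no additional information exchange.

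The key observation driving achievability is that independent sets are interference-free by construction: if $\mathcal{U}'_j$ is an independent set and $i, k$ are distinct users in $\mathcal{U}'_j$, then $\mathbf{M}_{ki} = \mathbf{M}_{ik} = 0$, so $\text{T}_k$ does not contribute to the received signal at $\text{D}_i$. Consequently, for every active $i \in \mathcal{U}'_j$, the slot-$j$ observation reduces to the point-to-point channel $Y_i[j] = g_{ii}[j] X_i[j] + Z_i[j]$. Because $\text{D}_i$ knows its incoming gain $g_{ii}[j]$, this is a standard ergodic fading point-to-point link, on which i.i.d.\ $\mathcal{CN}(0,P)$ codewords achieve rate $\mathbb{E}[\log(1+|g|^2 P)] = \log(P) + O(1)$. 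Hence on each slot in which user $i$ is active, user $i$ accrues $1$ DoF on its direct link.

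Let $n_i := \sum_{j=1}^n \mathbf{1}(i \in \mathcal{U}'_j)$ be the number of slots in which user $i$ transmits during the block. Concatenating codewords across the $n_i$ active slots (padding with zeros elsewhere) yields an overall achievable rate of at least $\frac{n_i}{n}\log(P) + o(\log(P))$ for user $i$, i.e.\ a DoF of $\frac{n_i}{n}$. Since this must hold simultaneously for every user, the achievable symmetric DoF in the block is $\min_{i \in [1:K]} \frac{n_i}{n}$. Optimizing the choice of $\mathcal{U}'_1, \ldots, \mathcal{U}'_n$ and taking the supremum over $n$ yields
\[
d_{sym} \geq \sup_{n \in \mathbb{N}} \max_{\mathcal{U}'_1,\ldots,\mathcal{U}'_n \in \mathcal{U}} \min_{i \in [1:K]} \frac{\sum_{j=1}^n \mathbf{1}(i \in \mathcal{U}'_j)}{n},
\]
as claimed.

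There is essentially no hard step here; the content of the theorem is really the reduction of the DoF question to a scheduling problem over maximal independent sets. The only mildly subtle point is justifying that the single-user point-to-point bound $\log(P) + o(\log(P))$ survives under the no-CSIT, statistical-knowledge-only assumption with the regularity conditions on $f_G$; this follows because $\mathbb{E}[\log|g|^2] > -\infty$ (invoked already in Section \ref{ach} via Appendix \ref{apx4}). Restricting to \emph{maximal} independent sets (as opposed to arbitrary independent ones) is without loss: any independent set can be extended to a maximal one, and enlarging the active set only helps the users newly included without hurting the others.
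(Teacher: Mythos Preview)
Your proof is correct and follows essentially the same approach as the paper: time-share over $n$ maximal independent sets, observe that each user sees an interference-free point-to-point channel in its active slots, so user $i$ achieves $\frac{n_i}{n}$ DoF, and then optimize over the schedule. You add a bit more justification (the point-to-point DoF calculation and the remark that restricting to maximal independent sets is without loss), but the structure and content match the paper's argument.
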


\begin{proof}
If we take $n$ maximal independent sets $\mathcal{U'}_1,...,\mathcal{U'}_n$ and allow all the users in $\mathcal{U'}_j$ to transmit simultaneously in time slot $j$, $j\in[1:n]$, then for every user $i$, $i\in[1:K]$, there will be $\sum_{j=1}^n \mathbf{1}(i\in\mathcal{U'}_j)$ clean, interference-free, channels between $\text{T}_i$ and $\text{D}_i$. Hence, each user $i$ achieves $\frac{\sum_{j=1}^n \mathbf{1}(i\in\mathcal{U'}_j)}{n}$ degrees-of-freedom. Since we are interested in the achievable symmetric degrees-of-freedom, the maximum DoF that all the users can simultaneously achieve with a specific choice of $n$ and $\mathcal{U'}_1,...,\mathcal{U'}_n$ is $\underset{i\in[1:K]}{\text{min}}\frac{\sum_{j=1}^n \mathbf{1}(i\in\mathcal{U'}_j)}{n}$. Optimizing over $n$ and $\mathcal{U'}_1,...,\mathcal{U'}_n$, the best symmetric DoF achievable under interference avoidance is $\underset{n\in\mathbb{N}}{\text{sup}}\:\underset{\mathcal{U'}_1,...,\mathcal{U'}_n\in\mathcal{U}}{\text{max}}\: \underset{i\in[1:K]}{\text{min}}\frac{\sum_{j=1}^n \mathbf{1}(i\in\mathcal{U'}_j)}{n}$.
\end{proof}

\begin{remk}
The aforementioned ideas of independent sets are very closely related to fractional coloring and fractional chromatic numbers of graphs in graph theory \cite{fgt}. To relate the two problems, we define the \emph{conflict graph} of a $K$-user interference network with adjacency matrix $\mathbf{M}$ as an undirected graph $G=(\mathcal{V},\mathcal{E})$  with the set of vertices $\mathcal{V}=[1:K]$ and the set of edges $\mathcal{E}$ where for all $i\neq j$, $e_{ij}\in \mathcal{E} $ if $\mathbf{M}_{ij}=1$ or $\mathbf{M}_{ji}=1$ in the original interference network. Now, the assignment of time slots to different users based on independent sets corresponds to \emph{coloring} the conflict graph $G$. An $n$-\emph{coloring} of a graph $G=(\mathcal{V},\mathcal{E})$ is an assignment of a single color out of a set of $n$ colors to each of the vertices in $\mathcal{V}$ such that if $e_{ij}\in \mathcal{E}$, different colors are assigned to vertices $i$ and $j$. The smallest $n$ for which an $n$-coloring is possible for $G$ is called the \emph{chromatic number} of $G$, denoted by $\chi(G)$.

Moreover, an $m$-\emph{fold coloring} (known as \emph{fractional coloring}) of a graph $G$ is an assignment of sets of $m$ colors to each vertex in $\mathcal{V}$ such that if $e_{ij}\in \mathcal{E}$, the sets of colors assigned to vertices $i$ and $j$ are disjoint. Also, $G$ is said to be $n:m$-\emph{colorable} if there exists an $m$-fold coloring of $G$ such that all the colors used in the coloring are drawn from a set of $n$ distinct colors. The smallest $n$ for which $G$ is $n:m$-colorable is called the $m$-fold chromatic number of $G$, denoted by $\chi_m(G)$. The maximum symmetric DoF achievable by interference avoidance is $\underset{m\in\mathbb{N}}{\text{sup}}\frac{m}{\chi_m(G)}$ which is exactly the value presented in Theorem \ref{th4}.
\footnote{For every $m\in\mathbb{N}$, interference avoidance can achieve the symmetric DoF of $\frac{m}{\chi_m(G)}$, because for every $m$-fold chromatic number $\chi_m(G)$, $m$ is the largest $\bar{m}$ such that an $\bar{m}$-fold coloring exists for $G$, where the colors are selected out of a palette of $\chi_m(G)$ colors. Each color out of the total of $\chi_m(G)$ colors corresponds to an independent set. Hence, $m$ is the maximum $\bar{m}$ such that each node appears $\bar{m}$ times in the independent sets corresponding to $\chi_m(G)$ colors. In other words, if $\mathcal{U}$ is the set of all maximal independent sets of the interference network, then
\begin{equation*}
m=\underset{\mathcal{U'}_1,...,\mathcal{U'}_{\chi_m(G)}\in\mathcal{U}}{\max}\: \underset{i\in[1:K]}{\min}\sum_{j=1}^{\chi_m(G)} \mathbf{1}(i\in\mathcal{U'}_j),
\end{equation*}
because each user appears at least $\underset{i\in[1:K]}{\min}\sum_{j=1}^{\chi_m(G)} \mathbf{1}(i\in\mathcal{U'}_j)$ times among the independent sets $\mathcal{U'}_1,...,\mathcal{U'}_{\chi_m(G)}$ and $m$ is the maximum value of this quantity where the maximization is over the selection of independent sets corresponding to $\chi_m(G)$ colors. Optimizing over $m$ yields the inner bound in Theorem \ref{th4}.}
However, the \emph{fractional chromatic number} of $G$ is defined as $\chi_f(G)=\underset{m\in\mathbb{N}}{\text{inf}}\frac{\chi_m(G)}{m}$, which can also be shown to equal $\underset{m\rightarrow\infty}{\text{lim}}\frac{\chi_m(G)}{m}$ \cite{fgt}. Therefore, the best symmetric DoF achievable by interference avoidance is in fact $\frac{1}{\chi_f(G)}$.
\end{remk}

The two schemes we presented so far, incorporate two different aspects of network topology, namely maximum receiver degree and fractional chromatic number, to improve spectral efficiency. A natural question that comes to mind is: How do these two schemes compare to each other? Is one of them superior than the other one for all network graphs? The answer is negative. We will present two examples to clarify how the schemes work and also to compare them. In the first example, random Gaussian coding performs better, while in the second one, interference avoidance outperforms the first scheme.

\begin{ex}

Consider the 4-user network in Figure \ref{fig3a}.
\begin{figure}
\centering
\begin{subfigure}[b]{0.45\textwidth}
\centering
\includegraphics[trim = 2in 3.2in 2in 5.1in, clip,width=0.6\textwidth]{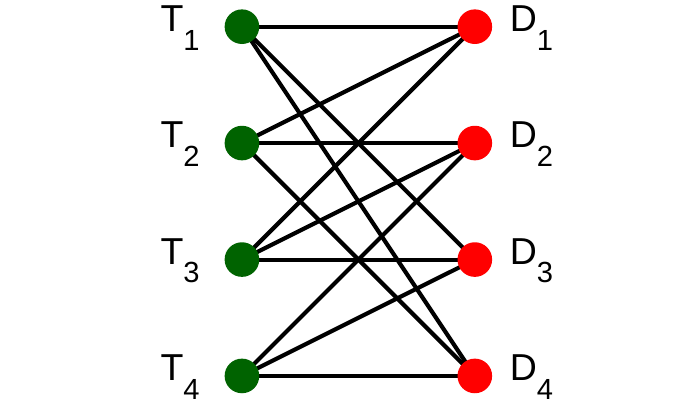}
\caption{}
\label{fig3a}
\end{subfigure}
~
\begin{subfigure}[b]{0.45\textwidth}
\centering
\includegraphics[trim = 2in 3in 2in 2.73in, clip,width=0.8\textwidth]{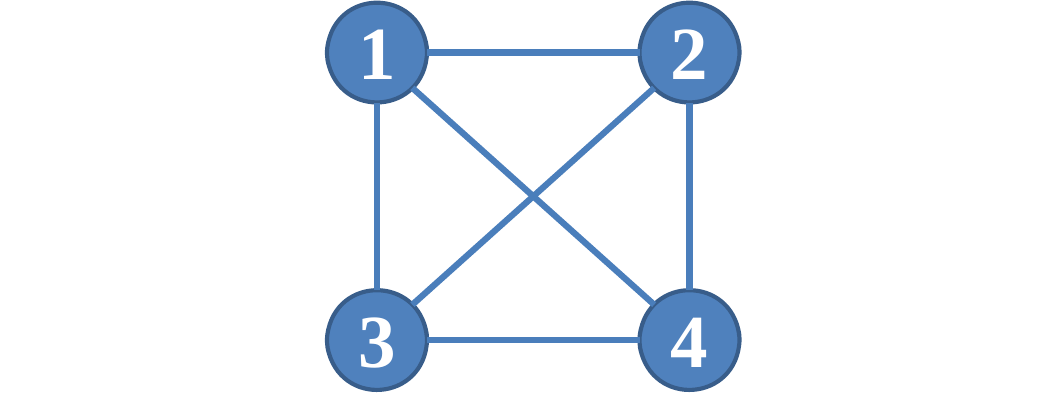}
\caption{}
\label{fig3b}
\end{subfigure}
\caption{(a) A 4-user interference network in which random Gaussian coding is optimal, and (b) its corresponding conflict graph.}
\end{figure}
Suppose we want to apply interference avoidance to this network. We should identify the independent sets in this network. Clearly, all the users are mutually interfering in this network. This can also be seen in the fully connected conflict graph of Figure \ref{fig3b}, whose maximal independent sets are $\{1\},\{2\},\{3\}$ and $\{4\}$, implying that the best symmetric DoF achievable under interference avoidance is $\frac{1}{4}$. However, the maximum receiver degree in this network is $\Delta_R=3$ and therefore, Theorem \ref{th3} implies that random Gaussian coding and interference decoding can achieve the symmetric DoF of $\frac{1}{3}$ which is higher than the value achieved by interference avoidance.

To show that the symmetric DoF of $\frac{1}{3}$ is optimal, it is necessary to mention the outer bound, too. If you consider the subnetwork consisting of the users $\mathcal{S}=\{1,2,3\}$, then clearly $[1\:1\:1]^T$ is a generator of $\mathcal{S}$. Therefore, using Theorem \ref{th1}, $d_{sym}\leq\frac{1}{3}$ implying the optimality of random Gaussian coding and interference decoding in this network, whereas interference avoidance performs suboptimally in this case.
\demo
\end{ex}

\begin{ex}
As our next example, we return to the network we considered in Example \ref{ex1}, which is repeated in Figure \ref{fig4a} for convenience.
\begin{figure}[hbt]
\centering
\begin{subfigure}[b]{0.31\textwidth}
\centering
\includegraphics[trim = 2in 3in 2in 4.1in, clip,width=0.8\textwidth]{fig1}
\caption{}
\label{fig4a}
\end{subfigure}
~
\begin{subfigure}[b]{0.31\textwidth}
\centering
\includegraphics[trim = 2in 1.2in 2in 1.4in, clip,width=0.85\textwidth]{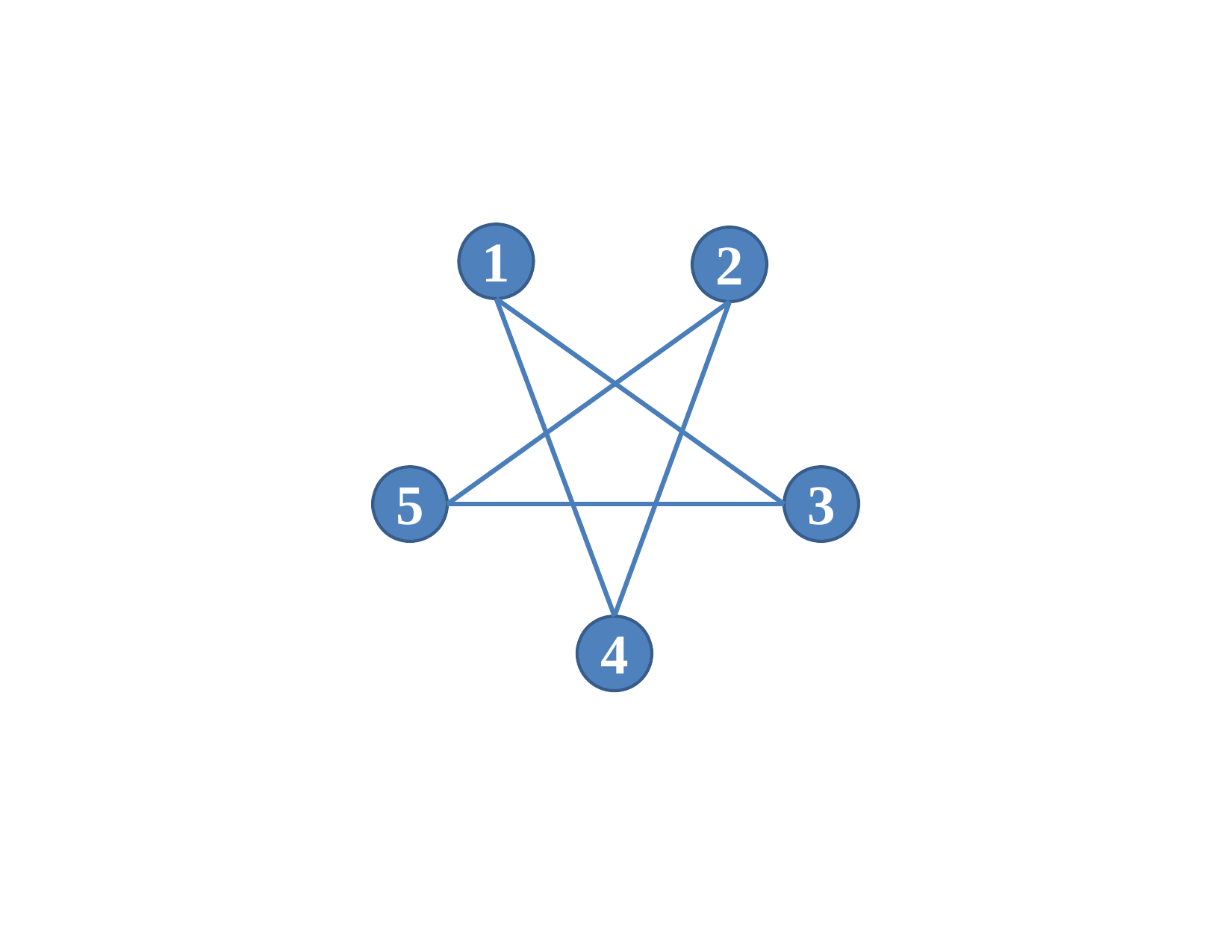}
\caption{}
\label{fig4b}
\end{subfigure}
~
\begin{subfigure}[b]{0.31\textwidth}
\centering
\includegraphics[trim = 2in 1.2in 2in 1.4in, clip,width=0.85\textwidth]{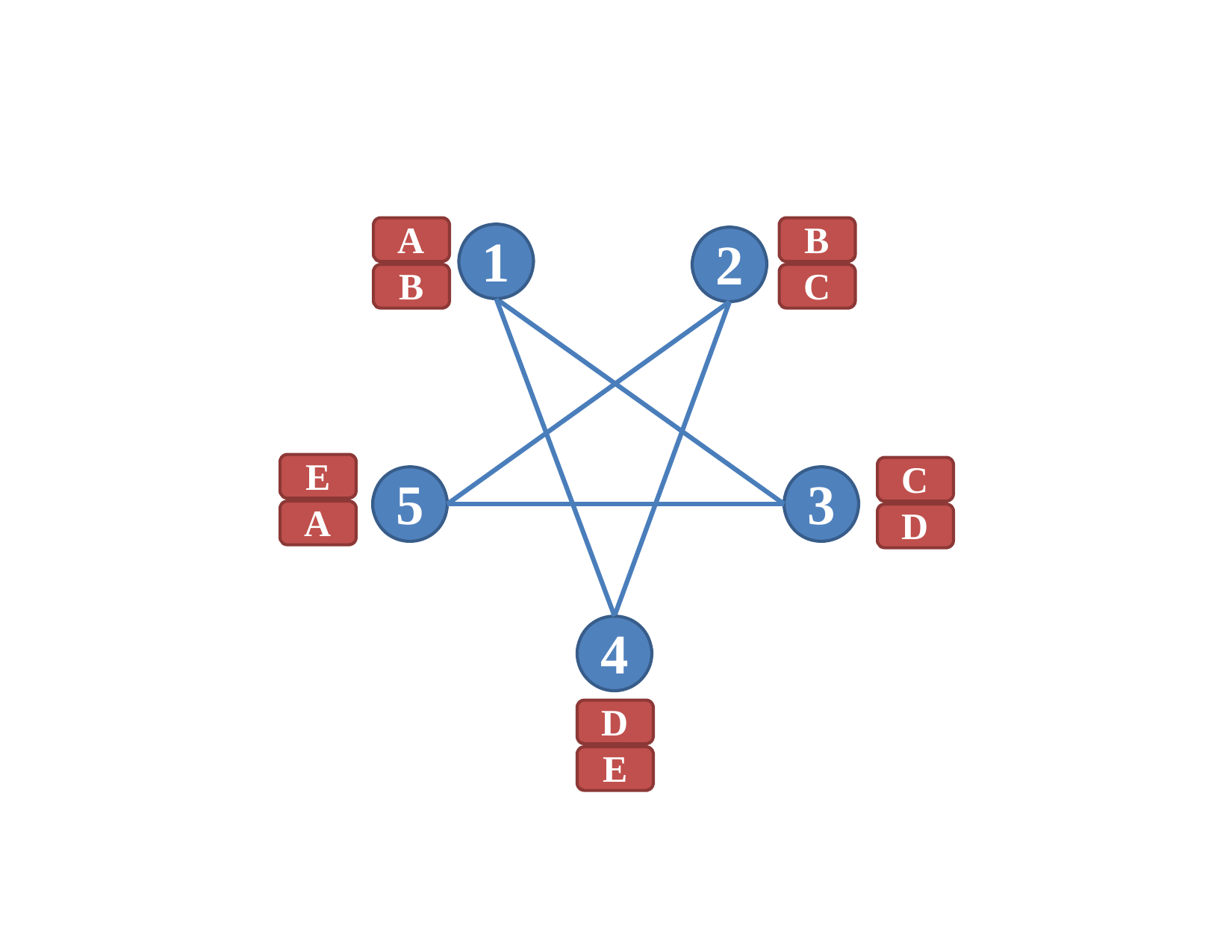}
\caption{}
\label{fig4c}
\end{subfigure}
\caption{(a) A 5-user interference network in which interference avoidance is optimal, (b) the corresponding conflict graph, and (c) a 5:2-coloring.}
\end{figure}

As shown before, for this network $d_{sym}\leq\frac{2}{5}$. However, the maximum receiver degree in this network is $\Delta_R=3$, hence random Gaussian coding and interference decoding can only achieve the symmetric DoF of $\frac{1}{3}$ which is less than the outer bound.

On the other hand, it is obvious that the maximal independent sets of the conflict graph of this network, shown in Figure \ref{fig4b}, are $\{1,2\},\{2,3\},\{3,4\},\{4,5\}$ and $\{5,1\}$. By assigning one time slot to each of these sets, we can achieve the symmetric DoF of $\frac{2}{5}$ because each user is repeated twice in these sets, therefore meeting the outer bound of $\frac{2}{5}$ mentioned earlier. A corresponding 5:2-coloring is also shown in Figure \ref{fig4c}. Hence, in this example, interference avoidance outperforms random Gaussian coding and interference decoding.
\demo
\end{ex}

Taking a closer look at the two schemes presented in this section, they can be viewed as two extremes of a spectrum. Random Gaussian coding and interference decoding tries to decode all the interference at all the receivers by adopting a random code which does not make efficient use of the topology of the network. On the other side, interference avoidance tries to prevent the mutually interfering nodes from transmitting at the same time, which causes no interference to occur at the receivers. Therefore, one may think of using a scheme that is naturally between these two extremes; i.e. using some kind of structured code that makes best use of the topology of the network and does not necessarily try to avoid the interference at the receivers, but at the same time enables the receivers to decode their desired messages. This leads to a new scheme which will be introduced in the following section.

\subsection{Structured Repetition Coding}

We now present a scheme based on structured repetition codes at the transmitters so that we can better exploit structure of network topology. This scheme unifies the two schemes presented in Section \ref{bench} in the way that it not only enables the receivers to decode their intended symbols without necessarily decoding all the interference, but it also allows mutually interfering users to possibly send data at the same time, implying that the scheme can potentially outperform both benchmark schemes presented in Section \ref{bench}. We will motivate the idea of structured repetition coding through the following example. Before starting the example, we need the following definition.

\begin{defi}\label{def_match}
For a graph $G=(\mathcal{V},\mathcal{E})$, a \emph{matching} is a subset of edges no two of which share a common vertex. The \emph{matching number} of $G$, denoted by $\mu(G)$, is the size of a maximum matching of $G$ (a matching of $G$ containing the largest possible number of edges).
\trig
\end{defi}

\begin{ex}\label{ex5}

Consider the 6-user network in Figure \ref{fig7}. We claim that in this network, the symmetric DoF of $\frac{1}{3}$ is achievable, while the benchmark schemes discussed in the previous section can at most achieve a symmetric DoF of $\frac{1}{4}$.
\begin{figure}[h]
\centering
\includegraphics[trim = 2in 3in 2in 3.5in, clip,width=0.3\textwidth]{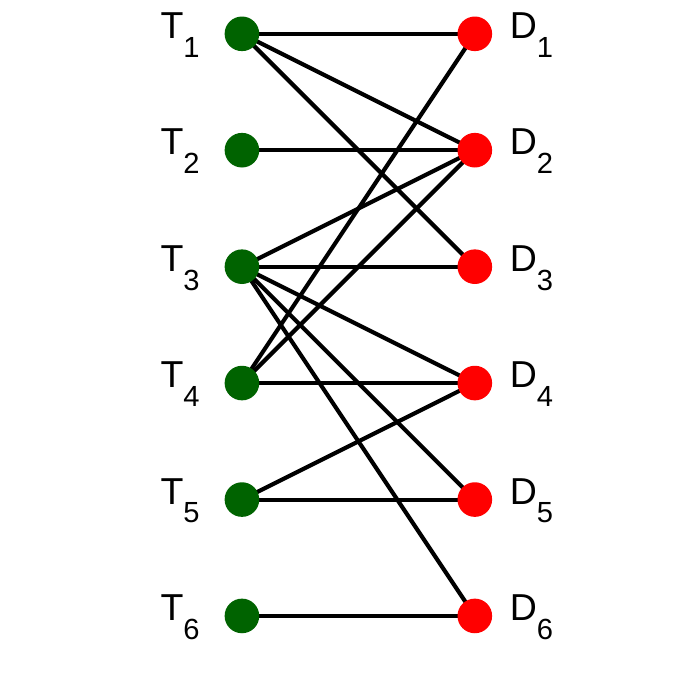}
\caption{A 6-user interference network in which random Gaussian coding and interference avoidance are suboptimal.}
\label{fig7}
\end{figure}

In this network, the outer bound for the symmetric DoF is $d_{sym}\leq\frac{1}{3}$ by Corollary \ref{cor1}, because the sets $\mathcal{A}=\{2\}$ and $\mathcal{S}=\{1,2,3\}$ satisfy the conditions of the corollary; i.e. in the subnetwork $\mathcal{S}$, we can generate statistically similar versions of the signals at receivers 3 and 1 by having the received signal at receiver 2. Therefore, $d_{sym}\leq \frac{|\mathcal{A}|}{|\mathcal{S}|}=\frac{1}{3}$.

However, in terms of the achievable schemes, Theorem \ref{th3} indicates that the best symmetric DoF achievable by random Gaussian coding and interference decoding is $\frac{1}{\Delta_R}=\frac{1}{4}$. Also, the maximal independent sets of this network are $\{1,5,6\}$, $\{2,5,6\}$, $\{3\}$ and $\{4,6\}$. Therefore, Theorem \ref{th4} states that the maximum symmetric DoF which interference avoidance can achieve is $\frac{1}{4}$. Thus, our two previous schemes both achieve the same symmetric DoF of $\frac{1}{4}$ which is strictly lower than the outer bound of $\frac{1}{3}$. Now, let us see if the achievable symmetric DoF can be improved.

Targeting the symmetric DoF of $\frac{1}{3}$, we can think of an achievable scheme in which each transmitter has one symbol to be sent within three time slots such that all the receivers can decode their desired messages. To this end, we create a \emph{transmission matrix} $\mathbf{T}\in\{0,1\}^{6\times 3}$ where $\mathbf{T}_{ik}=1$ if transmitter $i$ sends its single symbol $X_i$ in time slot $k$ and $\mathbf{T}_{ik}=0$ if transmitter $i$ is silent in time slot $k$. Consider the following matrix.
\begin{equation}\label{eq19}
\mathbf{T}=\begin{pmatrix}
1 & 0 & 1 & 1 & 1 & 0\\
0 & 1 & 0 & 0 & 1 & 1\\
0 & 0 & 1 & 1 & 0 & 0
\end{pmatrix}^T.
\end{equation}

As mentioned above, the first row of (\ref{eq19}) means that transmitter 1 sends its only symbol $X_1$ in time slot 1 and remains silent otherwise, the second row means that transmitter 2 sends its symbol $X_2$ in time slot 2, the third row implies that transmitter 3 repeats its symbol $X_3$ in time slots 1 and 3, etc. We will now show that with this transmission matrix, all the receivers can create interference-free versions of their desired symbols for almost all values of channel gains. As an example, let us focus on receiver 4. The signals that $\text{D}_4$ receives in three time slots are as follows.
\begin{align*}
Y_4[1]&=g_{34}[1]X_3+g_{44}[1]X_4+g_{54}[1]X_5+Z_4[1]\\
Y_4[2]&=g_{54}[2]X_5+Z_4[2]\\
Y_4[3]&=g_{34}[3]X_3+g_{44}[3]X_4+Z_4[3].
\end{align*}

Since $\text{D}_4$ is aware of the channel gains of all the links connected to it at all times, it can create the following signal.
\begin{align*}
Y'_4[1,2]:=Y_4[1]-\frac{g_{54}[1]}{g_{54}[2]}Y_4[2]=g_{34}[1]X_3+g_{44}[1]X_4+Z'_4[1,2],
\end{align*}
where $Z'_4[1,2]$ is a noise term with bounded variance. Now, it is clear that from $Y_4[3]$ and $Y'_4[1,2]$, $\text{D}_4$ can create an interference-free version of $X_4$ as follows.
\begin{align*}
\frac{g_{34}[1]Y_4[3]-g_{34}[3]Y'_4[1,2]}{g_{34}[1]g_{44}[3]-g_{34}[3]g_{44}[1]}=X_4+\tilde{Z}_4,
\end{align*}
where $\tilde{Z}_4$ has a bounded variance. The above combination of the signals is possible if \linebreak $g_{34}[1]g_{44}[3]-g_{34}[3]g_{44}[1]\neq 0$ which holds for almost all values of channel gains, because the channel gains are i.i.d. and drawn from continuous distributions.

The fact that for almost all values of the channel gains, there exists a linear combination of the received signals at receiver 4 which is an interference-free version of $X_4$ can also be viewed in terms of the matching number of a bipartite graph. The idea is to first create an ``effective'' transmission matrix $\bar{\mathbf{T}}^4$ for receiver 4, which is defined as a $6\times 3$ matrix, where $\bar{\mathbf{T}}_{ik}^4=\mathbf{M}_{i4}\mathbf{T}_{ik},\:\forall i\in[1:6], k\in[1:3]$, as shown in (\ref{Tbar}). In words, $\bar{\mathbf{T}}^4$ is the same as $\mathbf{T}$ with the distinction that the rows corresponding to the transmitters which are not connected to $\text{D}_4$ are set to zero.
\begin{align}\label{Tbar}
\bar{\mathbf{T}}^4=\begin{pmatrix}
0 & 0 & 1 & 1 & 1 & 0\\
0 & 0 & 0 & 0 & 1 & 0\\
0 & 0 & 1 & 1 & 0 & 0
\end{pmatrix}^T.
\end{align}

This matrix corresponds to a bipartite graph $\bar{G}^4$, shown in Figure \ref{Gbar}, with the set of vertices $\{v_1,...,v_6\}\cup\{v'_1,v'_2,v'_3\}$, where $v_i$ is connected to $v'_k$ iff $\bar{\mathbf{T}}_{ik}^1=1,\:\forall i\in[1:6], k\in[1:3]$.
\begin{figure}[hbt]
\centering
\begin{subfigure}[b]{0.45\textwidth}
\centering
\includegraphics[trim = 1in 2.45in .8in 2.5in, clip,width=.8\textwidth]{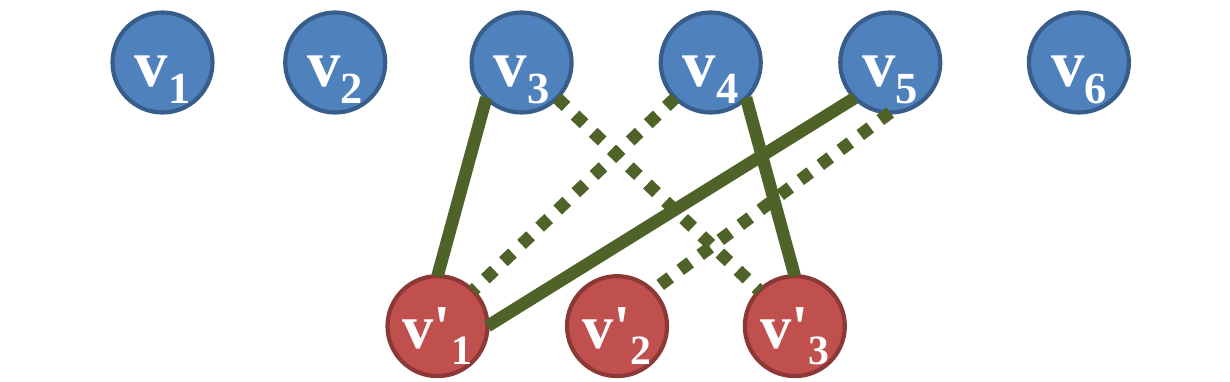}
\caption{}
\label{Gbar}
\end{subfigure}
~
\begin{subfigure}[b]{0.45\textwidth}
\centering
\includegraphics[trim = 1in 2.45in .8in 2.5in, clip,width=.8\textwidth]{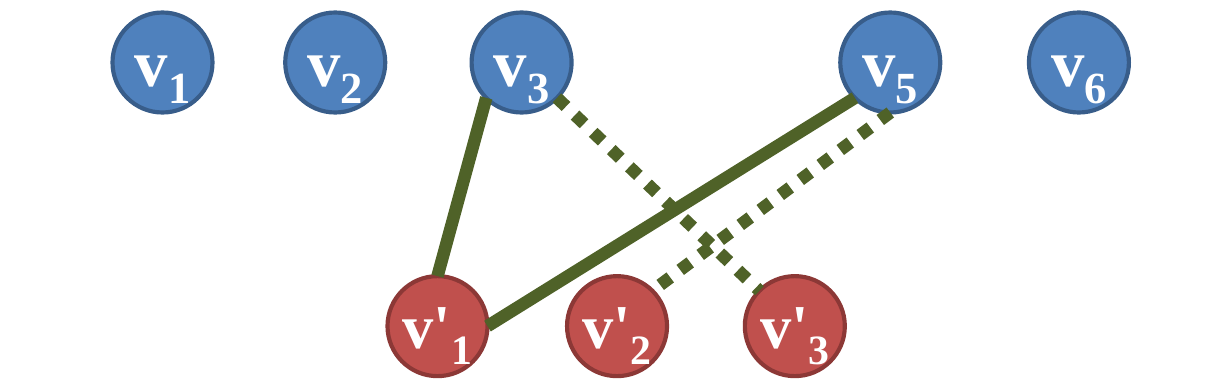}
\caption{}
\label{Gbar2}
\end{subfigure}
\caption{(a) The bipartite graph $\bar{G}^4$ corresponding to the matrix $\bar{\mathbf{T}}^4$ in (\ref{Tbar}), and (b) the graph $\bar{G}^4\setminus4$, which is the same as $\bar{G}^4$ after removing $v_4$ and its corresponding edges. In both graphs, the dashed edges correspond to a maximum matching.}
\end{figure}

Note that the matching number of $\bar{G}^4$, denoted by $\mu(\bar{G}^4)$, is equal to 3 and a maximum matching of $\bar{G}^4$ is shown in Figure \ref{Gbar}. However, as shown in Figure \ref{Gbar2}, upon removal of $v_4$ and its corresponding edges from $\bar{G}^4$, the matching number reduces to 2. As we show in Lemma \ref{match}, this reduction in the matching number is equivalent to the fact that for almost all values of the channel gains, there exists a linear combination of the signals at receiver 4 which is an interference-free version of $X_4$. Theorem \ref{th5} shows that this procedure reduces the problem of checking whether the transmission matrix $\mathbf{T}$ is successful or not to a bipartite matching problem.

Therefore, user 4 can achieve $\frac{1}{3}$ degrees-of-freedom. Arguments similar to the one above can show that all the other receivers can create interference-free versions of their desired symbols either, by linearly combining their received signals in three time slots. In particular, $\text{D}_1$ needs to combine its received signals at time slots 1 and 3, whereas $\text{D}_2$, $\text{D}_3$, $\text{D}_5$ and $\text{D}_6$ only need their received signals at time slots 2, 3, 2 and 2, respectively. Therefore, this scheme, which we will call \emph{structured repetition coding}, can achieve the symmetric DoF of $\frac{1}{3}$. This inner bound meets the outer bound, indicating that structured repetition coding is optimal in the network of Figure \ref{fig7}, contrary to the two benchmark schemes which perform suboptimally in this example.
\demo
\end{ex}

Motivated by Example \ref{ex5}, we now formally define structured repetition coding. In what follows, for a general matrix $\mathbf{T}$, we use $\mathbf{T}_{l,*}$ to denote the $l^{th}$ row of $\mathbf{T}$.

\begin{defi}\label{d4}
Consider a $K$-user interference network with adjacency matrix $\mathbf{M}$. Also, consider a matrix $\mathbf{T}\in\{0,1\}^{mK \times n}$, for some $m,n \in \mathbb{N}$, satisfying
\begin{align}\label{eq_tr}
\sum_{l=(i-1)m+1}^{im}\mathbf{T}_{lk}\leq1,\:\forall k\in[1:n],\:\forall i\in[1:K],
\end{align}
which, in words, means that there exists at most a single 1 in positions $(i-1)m+1$ to $im$ of each column $k$, for all $i\in[1:K]$ and for all $k\in[1:n]$. Then, \emph{structured repetition coding with transmission matrix $\mathbf{T}$} is defined as a scheme, in which transmitter $\text{T}_i$ ($i \in [1:K]$) intends to deliver $m$ independent symbols, denoted by $\left\lbrace\tilde{X}_l\right\rbrace_{l=(i-1)m+1}^{im}$, to receiver $\text{D}_i$ in $n$ time slots, using the following encoding and decoding procedure.

\begin{itemize}
\item Transmitter $\text{T}_i$ ($i\in[1:K]$) creates its transmit vector, denoted by $X_i^n$, as follows.
\begin{align*}
X_i^n=\sum_{l=(i-1)m+1}^{im} \mathbf{T}_{l,*}^T \tilde{X}_l.
\end{align*}

In words, this means that at each time slot $k$, transmitter $i$ ($i \in [1:K]$) looks for index $l \in [(i-1)m+1: im]$ such that $\mathbf{T}_{lk}=1$ (note that due to (\ref{eq_tr}), there is at most one such $l$) and transmits $\tilde{X}_l$ in that time slot (if such an index cannot be found, the transmitter will remain silent).

\item At the end of the transmission, receiver $\text{D}_j$ ($j\in[1:K]$) receives
\begin{align*}
Y_j^n&=\sum_{i=1}^K \mathbf{M}_{ij} g_{ij}^n X_{i}^n+Z_j^n\\
&=\sum_{i=1}^K \mathbf{M}_{ij} g_{ij}^n \left(\sum_{l=(i-1)m+1}^{im} \mathbf{T}_{l,*}^T \tilde{X}_l\right) +Z_j^n\\
&=\sum_{l=1}^{mK} \mathbf{M}_{\lceil\frac{l}{m} \rceil j} g_{\lceil\frac{l}{m} \rceil j}^n \mathbf{T}_{l,*}^T \tilde{X}_l +Z_j^n.
\end{align*}

Then, $\text{D}_j$ looks for vectors $\mathbf{u}_l\in\mathbb{C}^n,\:l\in[(j-1)m+1:jm]$ such that
\begin{align}\label{eq_decod_matrix}
\mathbf{G}^j \mathbf{u}_l=\mathbf{I}_l^{mK},\:\forall l\in[(j-1)m+1:jm],
\end{align}
where $\mathbf{G}^j$ is the $mK\times n$ matrix whose $lk^{th}$ element is defined as $\mathbf{G}_{lk}^j=\mathbf{M}_{\lceil\frac{l}{m} \rceil j} g_{\lceil\frac{l}{m} \rceil j}[k] \mathbf{T}_{lk}$, and if it can find such $\mathbf{u}_l$'s, it will reconstruct a noisy, but interference-free, version of each symbol $\tilde{X}_l$ by projecting $Y_j^n$ along the direction of $\mathbf{u}_l$, i.e.
\begin{align*}
(Y_j^n)^T \mathbf{u}_l=\tilde{X}_l+(Z_j^n)^T \mathbf{u}_l,\:\forall l\in[(j-1)m+1:jm].
\end{align*}
\end{itemize}
\trig
\end{defi}

\begin{remk}
If the conditions in (\ref{eq_decod_matrix}) are satisfied, then by using an outer code for each of the symbols $\tilde{X}_l$, $l\in[1:mK]$, a rate of $C_l=\mathbb{E}\left[\log\left(1+\frac{P}{\|\mathbf{u}_l\|_2^2}\right)\right]\geq \log(P)-\mathbb{E}\left[\log\left({\|\mathbf{u}_l\|_2^2}\right)\right]$ over each symbol can be achieved, where the expectation is taken with respect to the channel gain values. Since $\mathbb{E}\left[\log\left({\|\mathbf{u}_l\|_2^2}\right)\right]$ does not scale with the transmit power $P$, and as shown in Appendix \ref{apx4}, its value is finite, the scheme guarantees 1 DoF per symbol.
\end{remk}

In the remainder of this section, we will address the conditions that the transmission matrix $\mathbf{T}$ needs to satisfy in order to guarantee the existence of $\mathbf{u}_l$'s satisfying (\ref{eq_decod_matrix}), hence being able to neutralize the interference at all the receivers. We will then use these conditions to characterize the symmetric DoF that is achievable by structured repetition coding.

\begin{defi}\label{d5}
Consider a $K$-user interference network with adjacency matrix $\mathbf{M}$ and structured repetition coding with transmission matrix $\mathbf{T}\in\{0,1\}^{mK\times n}$. For each $j\in[1:K]$, $\bar{\mathbf{T}}^j$ is an $mK\times n$ matrix whose $lk^{th}$ element is defined as
\begin{align*}
\bar{\mathbf{T}}_{lk}^j=\mathbf{T}_{lk} \mathbf{M}_{\lceil\frac{l}{m}\rceil j},\:\forall l\in[1:mK],\:\forall k\in[1:n].
\end{align*}

Moreover, $\bar{G}^j$ is defined as the bipartite graph with the set of vertices $\mathcal{V}=\{v_1,...,v_{mK}\}\cup\{v'_1,...,v'_n\}$ whose adjacency matrix is $\bar{\mathbf{T}}^j$; i.e., for any $l\in[1:mK]$ and $k\in[1:n]$, $v_l$ is connected to $v'_k$ if and only if $\bar{\mathbf{T}}_{lk}^j=1$. Also, for any $l\in[1:mK]$, we use the notation $\bar{G}^j\setminus l$ to denote the subgraph of $\bar{G}^j$ with node $v_{l}$ and its incident edges removed.
\trig
\end{defi}

The above definitions make us ready to state our theorem about the graph theoretic conditions that a transmission matrix $\mathbf{T}$ needs to satisfy to achieve a symmetric DoF of $\frac{m}{n}$.


\begin{theo}\label{th5}
Consider a $K$-user interference network with adjacency matrix $\mathbf{M}$. If a transmission matrix $\mathbf{T}\in\{0,1\}^{mK\times n}$ satisfies the following conditions
\begin{align*}
\mu(\bar{G}^j)-\mu(\bar{G}^j\setminus l)=1,\:\forall l\in[(j-1)m+1:jm],\:\forall j\in[1:K],
\end{align*}
where $\bar{G}^j$ and $\bar{G}^j\setminus l$ are defined in Definition \ref{d5}, then for almost all values of channel gains, there exist vectors $\left\lbrace \mathbf{u}_l\right\rbrace_{l=1}^{mK}$ satisfying
\begin{align*}
(Y_j^n)^T \mathbf{u}_l=\tilde{X}_l+(Z_j^n)^T \mathbf{u}_l,\:\forall l\in[(j-1)m+1:jm],\:\forall j\in[1:K],
\end{align*}
where $Y_j^n$ and $\tilde{X}_l$ are defined in Definition \ref{d4}. Hence, structured repetition coding with transmission matrix $\mathbf{T}$ achieves the symmetric DoF of $\frac{m}{n}$.
\end{theo}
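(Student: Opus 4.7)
The plan is to reduce, for each receiver $j$ and each desired symbol index $l \in [(j-1)m+1:jm]$, the existence of the decoding vector $\mathbf{u}_l$ to a matching identity on an augmentation of $\bar{G}^j$, and then apply the generic-rank-equals-matching principle that Lemma \ref{match} already furnishes. Writing the received vector as $Y_j^n = (\mathbf{G}^j)^T \tilde{X} + Z_j^n$, where $\tilde{X}\in\mathbb{C}^{mK}$ collects all transmit symbols, the identity $(Y_j^n)^T \mathbf{u}_l = \tilde{X}_l + (Z_j^n)^T \mathbf{u}_l$ holds precisely when $\mathbf{G}^j \mathbf{u}_l = \mathbf{I}_l^{mK}$. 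Thus I need to show that $\mathbf{I}_l^{mK}$ lies in the column span of $\mathbf{G}^j$ for almost every channel realization, equivalently $\operatorname{rank}(\mathbf{G}^j) = \operatorname{rank}([\mathbf{G}^j \mid \mathbf{I}_l^{mK}])$.

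By Definition \ref{d5} the sparsity pattern of $\mathbf{G}^j$ coincides with the bipartite graph $\bar{G}^j$, and since its nonzero entries are independent continuously distributed channel gains, Lemma \ref{match} yields $\operatorname{rank}(\mathbf{G}^j) = \mu(\bar{G}^j)$ almost surely. Appending the column $\mathbf{I}_l^{mK}$ corresponds, in the bipartite picture, to adjoining a new right-vertex $v'_{n+1}$ adjacent only to $v_l$; call the resulting graph $\bar{G}^j_+$. The same lemma then gives $\operatorname{rank}([\mathbf{G}^j \mid \mathbf{I}_l^{mK}]) = \mu(\bar{G}^j_+)$ almost surely. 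Any matching in $\bar{G}^j_+$ either uses the unique new edge $(v_l,v'_{n+1})$, leaving a matching in $\bar{G}^j \setminus l$, or ignores it, in which case it lies in $\bar{G}^j$. Hence
\begin{align*}
\mu(\bar{G}^j_+) \;=\; \max\bigl(\mu(\bar{G}^j),\; 1 + \mu(\bar{G}^j \setminus l)\bigr),
\end{align*}
and the hypothesis $\mu(\bar{G}^j) - \mu(\bar{G}^j \setminus l) = 1$ forces both arguments of the maximum to equal $\mu(\bar{G}^j)$, so the two ranks agree and the desired $\mathbf{u}_l$ exists.

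A union bound over the finite collection of indices $(j,l)$ with $j\in[1:K]$ and $l\in[(j-1)m+1:jm]$ preserves the almost-sure conclusion, so with probability one over channel realizations all $mK$ decoding vectors exist simultaneously. Projecting $Y_j^n$ along each $\mathbf{u}_l$ yields an interference-free scalar channel for $\tilde{X}_l$ with noise of bounded variance; by the remark following Definition \ref{d4}, each such channel carries one DoF, so $Km$ symbols delivered over $n$ slots give the symmetric DoF $\frac{m}{n}$.

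The cleanness of this argument hinges on Lemma \ref{match}, which is therefore the main obstacle and the only substantive input. Absent that lemma, I would have to justify $\operatorname{rank}(\mathbf{G}^j) = \mu(\bar{G}^j)$ directly by selecting rows and columns of a maximum matching, expanding the resulting $\mu(\bar{G}^j)\times\mu(\bar{G}^j)$ minor as a signed sum of monomials indexed by perfect matchings on those rows and columns, arguing that at least one monomial survives because the gains are algebraically independent, and then invoking the standard fact that a nontrivial polynomial in continuously distributed variables vanishes only on a Lebesgue-null set.
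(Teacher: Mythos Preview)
Your argument is correct in spirit, but it takes a noticeably longer detour than the paper's own proof and leans on Lemma~\ref{match} in a way the lemma's \emph{statement} does not quite support. The paper simply observes that the condition $\mathbf{G}^j\mathbf{u}_l=\mathbf{I}_l^{mK}$ is \emph{exactly} the conclusion of Lemma~\ref{match} applied to $\bar{G}^j$ with the given hypothesis $\mu(\bar{G}^j)-\mu(\bar{G}^j\setminus l)=1$; no augmentation, no rank comparison, no computation of $\mu(\bar{G}^j_+)$ is needed. You instead reinterpret Lemma~\ref{match} as asserting ``generic rank equals matching number'' and then run a separate rank-equality argument on the augmented matrix $[\mathbf{G}^j\mid\mathbf{I}_l^{mK}]$. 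That works, but two points deserve care: (i) Lemma~\ref{match} as stated does not say $\operatorname{rank}(\tilde{\mathbf{T}})=\mu(G)$ --- this is established inside its \emph{proof} via the nonvanishing of a matching minor, so you are really invoking the proof rather than the lemma; (ii) the augmented matrix has a deterministic $1$ in its last column, so it does not satisfy the i.i.d.\ hypothesis of Lemma~\ref{match}, and you cannot apply the lemma to it verbatim. Your conclusion is still salvageable because once $\operatorname{rank}(\mathbf{G}^j)=\mu(\bar{G}^j)$ and $\mu(\bar{G}^j_+)=\mu(\bar{G}^j)$, the general bound $\operatorname{rank}\leq\mu$ (valid for \emph{any} matrix with that sparsity pattern, via the Leibniz expansion) forces the augmented rank to equal $\mu(\bar{G}^j)$ as well --- but you should say so rather than invoke the lemma a second time.

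What your route buys is a clean conceptual picture (column-span membership $\Leftrightarrow$ rank stability under augmentation $\Leftrightarrow$ the new edge does not increase the matching number), which generalizes nicely. What the paper's route buys is brevity: Lemma~\ref{match} was engineered precisely so that Theorem~\ref{th5} reduces to a one-line invocation plus the noise-variance bookkeeping (handled by Appendix~\ref{apx4}, which you correctly defer to via the remark after Definition~\ref{d4}).
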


Theorem \ref{th5} immediately leads to the following corollary.

\begin{corl}\label{corl2}
Consider a $K$-user interference network with adjacency matrix $\mathbf{M}$. Then, the following symmetric DoF is achievable by structured repetition coding (see Definition \ref{d4}).
\begin{align*}
\underset{n\in\mathbb{N}}{\emph{sup}}\:\underset{m\in[1:n]}{\emph{max}}\: &\frac{m}{n}\\
\emph{s.t.}\:&\exists \mathbf{T}\in\{0,1\}^{mK\times n}:\mu(\bar{G}^j)-\mu(\bar{G}^j\setminus l)=1,\:\forall l\in[(j-1)m+1:jm],\:\forall j\in[1:K],
\end{align*}
where $\bar{G}^j$ and $\bar{G}^j\setminus l$ are defined in Definition \ref{d5}.
\end{corl}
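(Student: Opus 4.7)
The plan is to translate the matching-number condition into a linear-algebraic rank condition on $\mathbf{G}^j$ and then invoke the classical correspondence between bipartite matching and generic matrix rank.

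First, I would reduce the decoding requirement to a rank gap. Letting $\mathbf{G}^j_{-l}$ denote $\mathbf{G}^j$ with its $l$-th row deleted, a short duality argument shows that the equation $\mathbf{G}^j\mathbf{u}_l=\mathbf{I}_l^{mK}$ has a solution if and only if row $l$ of $\mathbf{G}^j$ is linearly independent from the remaining rows, equivalently $\operatorname{rank}(\mathbf{G}^j)=\operatorname{rank}(\mathbf{G}^j_{-l})+1$. (Indeed, $\mathbf{I}_l^{mK}\notin\operatorname{col}(\mathbf{G}^j)$ iff some left null vector $\mathbf{a}$ of $\mathbf{G}^j$ has $a_l\neq 0$, which happens iff row $l$ is a linear combination of the other rows.)

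Next, I would establish that for almost every realization of the channel gains, $\operatorname{rank}(\mathbf{G}^j)=\mu(\bar{G}^j)$ and $\operatorname{rank}(\mathbf{G}^j_{-l})=\mu(\bar{G}^j\setminus l)$. The support pattern of $\mathbf{G}^j$ is exactly the biadjacency matrix $\bar{\mathbf{T}}^j$ of $\bar{G}^j$, and the key observation---which I would make precise via condition (\ref{eq_tr})---is that distinct nonzero entries of $\mathbf{G}^j$ correspond to distinct triples $(i,j,k)$ and hence are mutually independent draws from the continuous distribution $f_G$. For any $r\times r$ submatrix, the Leibniz expansion of its determinant, being a polynomial whose monomials are in bijection with the perfect matchings of the induced bipartite subgraph (and hence cannot cancel, as each uses a distinct set of variables), is the zero polynomial iff the bipartite subgraph has no perfect matching. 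Since the zero set of a nonzero polynomial has Lebesgue measure zero and $f_G$ is absolutely continuous, this determinant is almost surely nonzero whenever such a matching exists, giving the rank $=$ matching-number equality.

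Combining the two ingredients with the hypothesis $\mu(\bar{G}^j)-\mu(\bar{G}^j\setminus l)=1$ yields the rank gap and hence existence of $\mathbf{u}_l$. A union bound over the $mK$ pairs $(j,l)$ turns this into simultaneous almost sure existence of all decoding vectors, so every receiver neutralizes its interference and sees each desired $\tilde{X}_l$ through a scalar AWGN channel with bounded effective noise. By the outer-coding observation in the remark following Definition \ref{d4}, each symbol contributes $1$ DoF, giving $m$ DoF per user over $n$ time slots, i.e., $d_{sym}=m/n$. The main obstacle I anticipate is the bookkeeping in the matching--rank correspondence, since the nonzero entries of $\mathbf{G}^j$ are not a priori distinct formal variables but repeated channel gains; the generic-rank argument relies crucially on the observation that (\ref{eq_tr}) forces each nonzero entry to come from a distinct $(i,j,k)$ triple and thus behave as an algebraically independent continuous random variable. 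Everything else reduces to the classical matching/rank correspondence and a routine union bound.
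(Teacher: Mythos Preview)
Your proposal is correct and is essentially the paper's own argument: the paper packages the matching/rank step as Lemma~\ref{match} (explicit inversion of the $\mu(\bar G^j)\times\mu(\bar G^j)$ submatrix picked out by a maximum matching covering $v_l$), which is just the concrete form of your rank-gap criterion $\operatorname{rank}(\mathbf{G}^j)=\operatorname{rank}(\mathbf{G}^j_{-l})+1$, and both rely on exactly the observation you flag---that (\ref{eq_tr}) makes the nonzero entries of $\mathbf{G}^j$ distinct i.i.d.\ draws so that generic rank equals matching number. The one place to be careful is the DoF step: the remark you invoke bounds $\mathbb{E}[\log\|\mathbf{u}_l\|_2^2]$ for the \emph{specific} solution coming from that submatrix inverse (Appendix~\ref{apx4}), so once your rank argument guarantees solvability you should select that particular $\mathbf{u}_l$ rather than an arbitrary one.
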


\begin{remk}
While in the optimization problem of Corollary \ref{corl2}, the value of $\underset{m\in[1:n]}{\max}\: \frac{m}{n}$ is optimized over $n\in\mathbb{N}$, we will limit the range space for $n$ to be bounded as $n\in[1:K+1]$ in order to numerically evaluate the inner bounds in Section \ref{sim}, and as we will see, the inner bounds derived after this reduction match the outer bounds in most of the topologies. This reduces the optimization problem in Corollary \ref{corl2} to a combinatorial optimization problem that can be solved for relatively small networks. Finding efficient algorithms to solve it for general networks is an interesting open problem.
\end{remk}

\begin{remk}
The structured repetition coding scheme illustrates the fact that even in the case where the channel gains change i.i.d. over time (i.e. coherence time of 1 time slot), it is possible to exploit network topology in order to design a carefully-chosen repetition pattern at the transmitters which enables the receivers to neutralize all the interference. However, as the coherence time of the channel increases, there would be other opportunities that can be utilized, such as aligning the interference, as in \cite{jafar_old,jafar}.
\end{remk}

The existence of a vector $\mathbf{u}_l$ satisfying the conditions in Theorem \ref{th5} is equivalent to the existence of a vector $\mathbf{u}_l$ satisfying the conditions in (\ref{eq_decod_matrix}), i.e. $\mathbf{G}^j\mathbf{u}_l=\mathbf{I}_l^{mK}$, where $\mathbf{G}^j$ is an $mK\times n$ matrix whose entries are either zero or i.i.d. random variables (corresponding to the channel gains $g_{ij}$). This enables us to use the following lemma, proved in Appendix C, which addresses the existence of $\mathbf{u}_l$'s satisfying $\mathbf{G}^j\mathbf{u}_l=\mathbf{I}_l^{mK}$ for such structured random matrices $\mathbf{G}^j$.

\begin{lem}\label{match}
Consider a bipartite graph $G=(\{v_1,...,v_m\}\cup\{v'_1,...,v'_n\},\mathcal{E})$ with a corresponding $m\times n$ adjacency matrix $\mathbf{T}$ where $\mathbf{T}_{ij}=1$ if $v_i$ is connected to $v'_j$ and $\mathbf{T}_{ij}=0$ otherwise. Also, define $\tilde{\mathbf{T}}$ to be an $m\times n$ matrix for which $\tilde{\mathbf{T}}_{ij}=g_{ij}\mathbf{T}_{ij}$, where $g_{ij}$'s are i.i.d. random variables drawn from a continuous distribution. If for some $l\in[1:m]$, $\mu(G)-\mu(G\setminus l)=1$ (where $G\setminus l$ denotes the subgraph of $G$ with node $v_{l}$ and its incident edges removed), then for almost all values of $g_{ij}$'s, there exists a vector $\mathbf{u}\in\mathbb{C}^n$ such that
\begin{align*}
\tilde{\mathbf{T}}\mathbf{u}=\mathbf{I}_l^m,
\end{align*}
where $\mathbf{I}_l^m$ is the $l^{th}$ column of the $m\times m$ identity matrix. Moreover, $\|\mathbf{u}\|_2=\left\|(\tilde{\mathbf{T}}^l)^{-1}\mathbf{I}_l^{\mu(G)}\right\|_2$, where $\tilde{\mathbf{T}}^l$ is a $\mu(G)\times\mu(G)$ submatrix of $\tilde{\mathbf{T}}$ corresponding to a maximum matching in $G$.
\end{lem}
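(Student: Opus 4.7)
The plan is to construct $\mathbf{u}$ explicitly, supported on the columns used by a maximum matching of $G$. Since $\mu(G)-\mu(G\setminus l)=1$, the vertex $v_l$ must belong to every maximum matching of $G$: otherwise such a matching would persist unchanged in $G\setminus l$, contradicting the hypothesized gap of one. Fix any maximum matching $M$ of $G$, let $S\subseteq[1:m]$ denote its set of row indices (so $l\in S$ and $|S|=\mu(G)$) and $K\subseteq[1:n]$ its set of column indices, and let $\tilde{\mathbf{T}}^l$ be the $\mu(G)\times\mu(G)$ submatrix of $\tilde{\mathbf{T}}$ on rows $S$ and columns $K$. I will (i) show that $\tilde{\mathbf{T}}^l$ is invertible almost surely, (ii) let $\mathbf{u}'\in\mathbb{C}^{|K|}$ be the unique solution of $\tilde{\mathbf{T}}^l\mathbf{u}'=\mathbf{I}_l^{\mu(G)}$ (the standard basis vector picking out the row indexed by $l$ within $S$) and extend $\mathbf{u}'$ by zeros to a vector $\mathbf{u}\in\mathbb{C}^n$ supported on $K$, and (iii) verify $\tilde{\mathbf{T}}\mathbf{u}=\mathbf{I}_l^m$.

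For step (i), I would expand $\det\tilde{\mathbf{T}}^l$ by the Leibniz formula. The permutation pairing each $k\in K$ with its $M$-partner contributes the monomial $\pm\prod_{(i,k)\in M}g_{ik}$, a product of distinct i.i.d.\ variables. Distinct permutations contributing nonzero terms correspond to distinct edge sets of the underlying bipartite graph, hence to distinct monomials, so no cancellation is possible and $\det\tilde{\mathbf{T}}^l$ is a nonzero polynomial in the $g_{ij}$'s; by continuity of the distribution, it is nonzero almost surely.

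Step (iii) is automatic for rows $i\in S$: since $\mathbf{u}$ vanishes off $K$, one has $(\tilde{\mathbf{T}}\mathbf{u})_i=(\tilde{\mathbf{T}}^l\mathbf{u}')_i=\mathbf{1}[i=l]$. The substantive part is showing $(\tilde{\mathbf{T}}\mathbf{u})_i=0$ for every $i\notin S$. For this I would prove the following key claim: almost surely, for each such $i$, the row $\tilde{\mathbf{T}}_{i,*}$ lies in the linear span of $\{\tilde{\mathbf{T}}_{i',*}:i'\in S\setminus\{l\}\}$. Granted the claim, writing $\tilde{\mathbf{T}}_{i,*}=\sum_{i'\in S\setminus\{l\}}\beta_{i,i'}\tilde{\mathbf{T}}_{i',*}$ gives $(\tilde{\mathbf{T}}\mathbf{u})_i=\sum_{i'\in S\setminus\{l\}}\beta_{i,i'}(\tilde{\mathbf{T}}\mathbf{u})_{i'}=0$ by step (iii) applied to rows of $S\setminus\{l\}$.

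The key claim is the main obstacle, and I would establish it via the classical fact that the rank of a matrix whose nonzero entries are drawn independently from a continuous distribution equals the matching number of the bipartite graph encoding its support pattern (generic rank equals term rank, by a standard Schwartz--Zippel / Frobenius--K\"onig argument). Apply this to the $|S|\times n$ submatrix of $\tilde{\mathbf{T}}$ on rows $\{i\}\cup(S\setminus\{l\})$: a matching of size $|S|$ in its support would saturate all these rows, producing a matching of size $|S|$ in $G\setminus l$ and contradicting $\mu(G\setminus l)=|S|-1$. Hence the rank of this submatrix is at most $|S|-1$ almost surely. The rows indexed by $S\setminus\{l\}$ already span a space of dimension $|S|-1$, since restricting them to the columns $K$ yields $|S|-1$ linearly independent rows of the invertible matrix $\tilde{\mathbf{T}}^l$. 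Therefore $\tilde{\mathbf{T}}_{i,*}$ must lie in their span, as needed. Finally, the norm formula is immediate: $\|\mathbf{u}\|_2=\|\mathbf{u}'\|_2=\|(\tilde{\mathbf{T}}^l)^{-1}\mathbf{I}_l^{\mu(G)}\|_2$.
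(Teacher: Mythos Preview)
Your proposal is correct and follows essentially the same approach as the paper's proof: both fix a maximum matching covering $v_l$, show the corresponding $\mu(G)\times\mu(G)$ submatrix $\tilde{\mathbf{T}}^l$ is almost surely invertible via the Leibniz expansion (the matching gives a nonzero monomial), extend $(\tilde{\mathbf{T}}^l)^{-1}\mathbf{I}_l^{\mu(G)}$ by zeros, and then argue that every row outside $S$ lies in the span of the rows indexed by $S\setminus\{l\}$ by appealing to the fact that the rank cannot exceed the matching number $\mu(G\setminus l)=\mu(G)-1$. Your write-up is slightly more explicit in two places (you note that distinct permutations give distinct monomials so no cancellation occurs, and you spell out that the rows $S\setminus\{l\}$ are independent because they sit inside the invertible $\tilde{\mathbf{T}}^l$), but the logical skeleton is identical.
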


\begin{proof}[Proof of Theorem \ref{th5}]
Following Definitions \ref{d4} and \ref{d5}, the received vector of receiver $j$ ($j\in[1:K]$) can be written as
\begin{align*}
Y_j^n&=\sum_{l=1}^{mK} \mathbf{M}_{\lceil\frac{l}{m} \rceil j} g_{\lceil\frac{l}{m} \rceil j}^n \mathbf{T}_{l,*}^T \tilde{X}_l +Z_j^n\\
&=\sum_{l=1}^{mK} g_{\lceil\frac{l}{m} \rceil j}^n (\bar{\mathbf{T}}_{l,*}^j)^T \tilde{X}_l +Z_j^n,
\end{align*}
and it needs vectors $\left\lbrace \mathbf{u}_l\right\rbrace_{l=(j-1)m+1}^{jm}$ such that
\begin{align}\label{eq_comb}
(Y_j^n)^T \mathbf{u}_l=\tilde{X}_l+(Z_j^n)^T \mathbf{u}_l,\:\forall l\in[(j-1)m+1:jm].
\end{align}

This means that for almost all values of the channel gains, there must exist vectors $\left\lbrace \mathbf{u}_l\right\rbrace_{l=(j-1)m+1}^{jm}$ satisfying
\begin{align*}
\mathbf{G}^j\mathbf{u}_l=\mathbf{I}_l^{mK},\:\forall l\in[(j-1)m+1:jm],
\end{align*}
where $\mathbf{I}_l^{mK}$ is the $l^{th}$ column of the $mK\times mK$ identity matrix, and $\mathbf{G}^j$ is an $mK\times n$ matrix whose $lk^{th}$ element is defined as $\mathbf{G}^j_{lk}=g_{\lceil\frac{l}{m}\rceil j}[k]\bar{\mathbf{T}}_{lk}^j$. Due to the specific structure of the transmission matrix $\mathbf{T}$ described in (\ref{eq_tr}), $\mathbf{G}^j$ has i.i.d. random entries and zeros wherever $\bar{\mathbf{T}}^j$ has ones and zeros, respectively. This enables us to make use of Lemma \ref{match}, therefore proving the existence of vectors $\mathbf{u}_l$, $\forall l\in[(j-1)m+1:jm] $, $\forall j\in[1:K]$.


The only remaining issue to address is the noise variance in (\ref{eq_comb}). The capacity of the channel in (\ref{eq_comb}) is equal to 
\begin{align}\label{capc}
C_l=\mathbb{E}\left[\log\left(1+\frac{P}{\|\mathbf{u}_l\|_2^2}\right)\right],
\end{align}
where the expectation is taken with respect to the channel gains. Lemma \ref{match} implies that $\|\mathbf{u}_l\|_2=\left\|\left(\tilde{\mathbf{G}}^{j,l}\right)^{-1} \mathbf{I}_l^{\mu(\bar{G}^j)}\right\|_2$, where $\tilde{\mathbf{G}}^{j,l}$ is a $\mu(\bar{G}^j)\times\mu(\bar{G}^j)$ submatrix of $\mathbf{G}^j$ corresponding to a maximum matching in $\bar{G}^j$. Combining this with (\ref{capc}), we can write
\begin{align*}
C_l&\geq\mathbb{E}\left[\log\left(\frac{P}{\|\mathbf{u}_l\|_2^2}\right)\right]\\
&= \log(P)-\mathbb{E}\left[\log\left({\|\mathbf{u}_l\|_2^2}\right)\right]\\
&= \log(P)-\mathbb{E}\left[\log\left(\left\|\left(\tilde{\mathbf{G}}^{j,l}\right)^{-1} \mathbf{I}_l^{\mu(\bar{G}^j)}\right\|^2_2\right)\right].\numberthis\label{eq_noise}
\end{align*}

Now, note that
$\mathbb{E}\left[\log\left(\left\|\left(\tilde{\mathbf{G}}^{j,l}\right)^{-1} \mathbf{I}_l^{\mu(\bar{G}^j)}\right\|^2_2\right)\right]$ does not scale with the transmit power $P$ and as we show in Appendix \ref{apx4}, its value is finite. Therefore, the outer code on each of the symbols $\tilde{X}_l$ guarantees 1 degree-of-freedom to be achieved for that symbol.

Hence, if all the conditions of the theorem are satisfied, then all the receivers can create interference-free versions of their $m$ desired symbols, implying that structured repetition coding with transmission matrix $\mathbf{T}$ can achieve the symmetric DoF of $\frac{m}{n}$.
\end{proof}

Theorem \ref{th5}, therefore, implies that for any given network topology, it suffices to carefully choose a well-structured transmission matrix $\mathbf{T}\in\{0,1\}^{mK\times n}$ which satisfies the graph theoretic conditions mentioned in the theorem. This makes the symmetric DoF of $\frac{m}{n}$ achievable through structured repetition coding.

\section{Numerical Analysis}\label{sim}

In this section, we will evaluate our inner and outer bounds for two diverse classes of network topologies. We will examine the possible network instances in two scenarios of 6-user networks with 6 square cells and 6-user networks with 1 central and 5 surrounding base stations. The goal is to study the tightness of our inner and outer bounds, compare the performance of the achievable schemes presented in Section \ref{ach}, and study the effect of network density on the fraction of topologies in which structured repetition coding outperforms benchmark schemes. Note that for the structured repetition coding scheme, we search over all transmission matrices $\mathbf{T}\in\{0,1\}^{mK\times n}$ for which $n\leq K+1$, due to computational tractability. We seek to answer the following questions.

\begin{itemize}
\item Do there exist any network topologies in which our inner and outer bounds on the symmetric DoF do not meet? On the other hand, among the networks in which the bounds are tight, what are the possible values of the symmetric DoF and how are these values distributed?

\item Focusing on the topologies in which the inner and outer bounds meet, what is the impact of the sparsity or density of the network graph on the gains that can be obtained beyond the benchmark schemes by using only the knowledge about network topology?

\item  What is the fraction of the topologies in which structured repetition coding can outperform the benchmark schemes? How much can the sole knowledge about network topology help to go beyond random Gaussian coding and interference avoidance?

\end{itemize}

We will address these questions in the following classes of networks.

\subsection{6-User Networks with 6 Square Cells}\label{6cell}

The networks that we consider in this section are represented by 6 square cells, each one having a base station $\text{BS}_i$ in the center, $i\in[1:6]$, with a mobile user inside the cell. An example can be seen in Figure \ref{fig8}.
\begin{figure}[h]
\centering
\includegraphics[trim = 1.2in 3in .9in 3in, clip,width=0.35\textwidth]{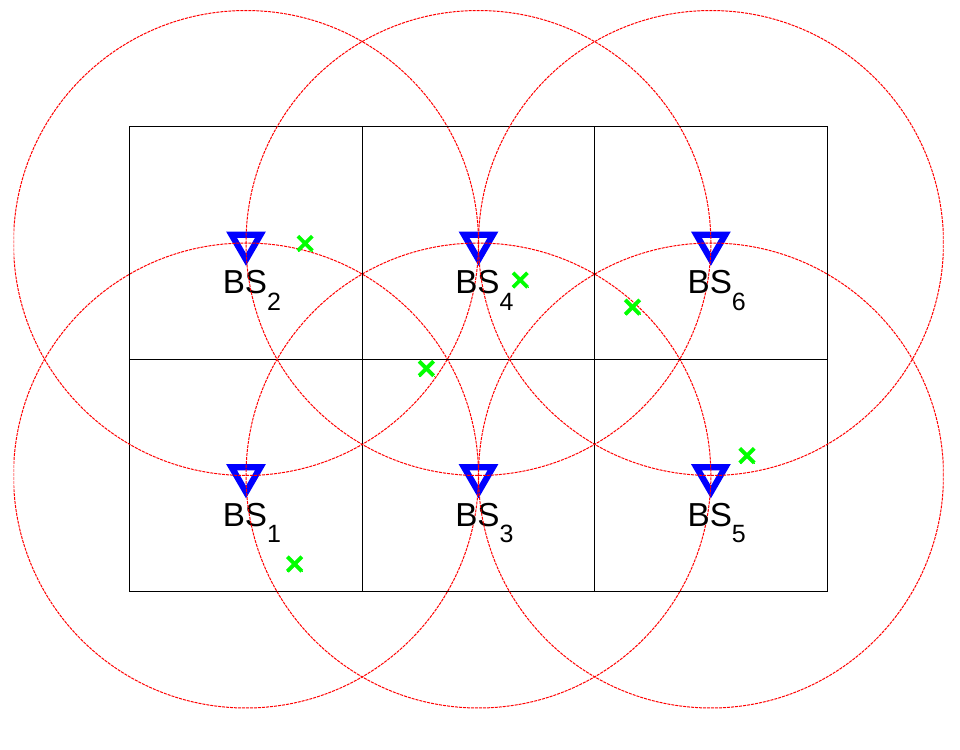}
\caption{A 6-cell network realization where the blue triangles, green crosses, black squares and red circles represent base stations, mobile users, cell boundaries and coverage area of base stations, respectively.}
\label{fig8}
\end{figure}

In this figure, the blue triangles represent base stations, the green crosses represent mobile users, the black squares represent the cells and the red circles depict the coverage area of each base station. It is obvious that any placement of the mobile users corresponds to a partially-connected 6-user interference network.

In what follows, we will generalize this model to all possible topologies in which a mobile user in a cell can receive interference from any \emph{nonempty subset} of its three adjacent BS's, together with the signal from its own BS. For instance in Figure \ref{fig8}, user 2 can receive interference from any nonempty subset of $\{\text{BS}_1,\text{BS}_3,\text{BS}_4\}$ and user 4 can receive interference from any nonempty subset of $\{\text{BS}_1,\text{BS}_2,\text{BS}_3\}$ or $\{\text{BS}_3,\text{BS}_5,\text{BS}_6\}$ (corresponding to left and right halves of the cell, respectively). This implies that the degree of each receiver is no less than 2 and no more than 4. Ignoring isomorphic topologies, there are in total 22,336 unique topologies in this class. For each of these topologies, we evaluated our inner and outer bounds to draw the following conclusions.

\begin{enumerate}
\item We note that quite interestingly, our bounds are tight for all cases, except for 16 distinct topologies. We will discuss two of these 16 topologies in Section \ref{conc}. For the remaining networks, which we will hereby focus on, the gap is zero, implying that our bounds determine the symmetric DoF for most networks in this class. In these networks, the symmetric DoF only takes 4 distinct values in $\lbrace\frac{1}{4},\frac{1}{3},\frac{2}{5},\frac{1}{2}\rbrace$ with the distribution shown in Figure \ref{fig9}.

\begin{figure}[hbt]
\centering
\includegraphics[trim = 1.65in 3.3in 1.75in 3.4in, clip,width=0.35\textwidth]{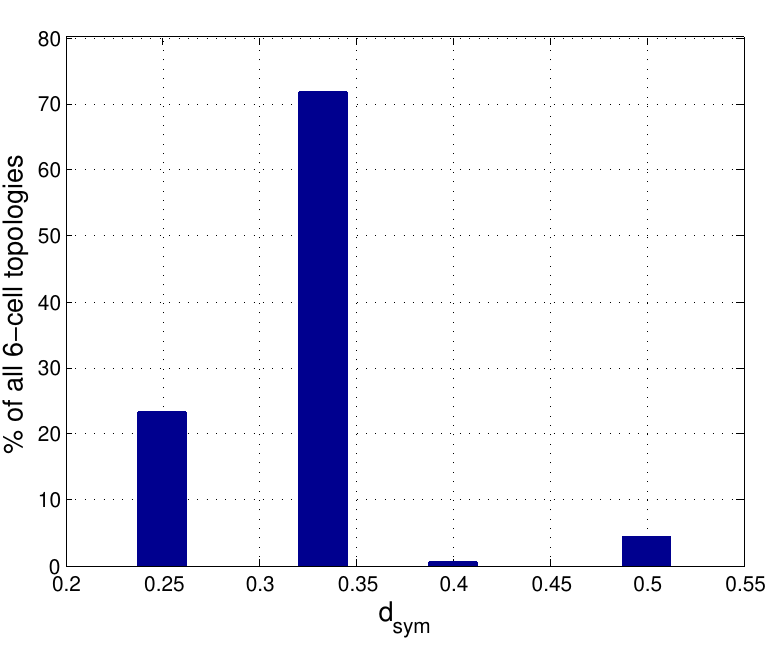}
\caption{Distribution of $d_{sym}$ among 6-cell networks in which our bounds are tight.}
\label{fig9}
\end{figure}

\item Figure \ref{fig10} illustrates the impact of the number of interfering links on the performance of structured repetition coding compared to benchmark schemes. As it is clear, the gain is not much when the network is too dense. However, if the density of the network, characterized by the number of cross links in the network, is at a moderate level, then the gain of structured repetition coding over the benchmark schemes can be significant. It is worth mentioning that there are totally around 50 percent and 10 percent of the networks in which structured repetition coding outperforms random Gaussian coding and interference avoidance, respectively. Moreover, structured repetition coding outperforms \emph{both} benchmark schemes in 1167 network topologies, which constitute more than 5 percent of all the networks. This means that even with a sole knowledge of network topology, one can perform better than both of the benchmark schemes.
\begin{figure}[h]
\centering
\includegraphics[trim = 0in 2.5in 0in 2.6in, clip,width=0.52\textwidth]{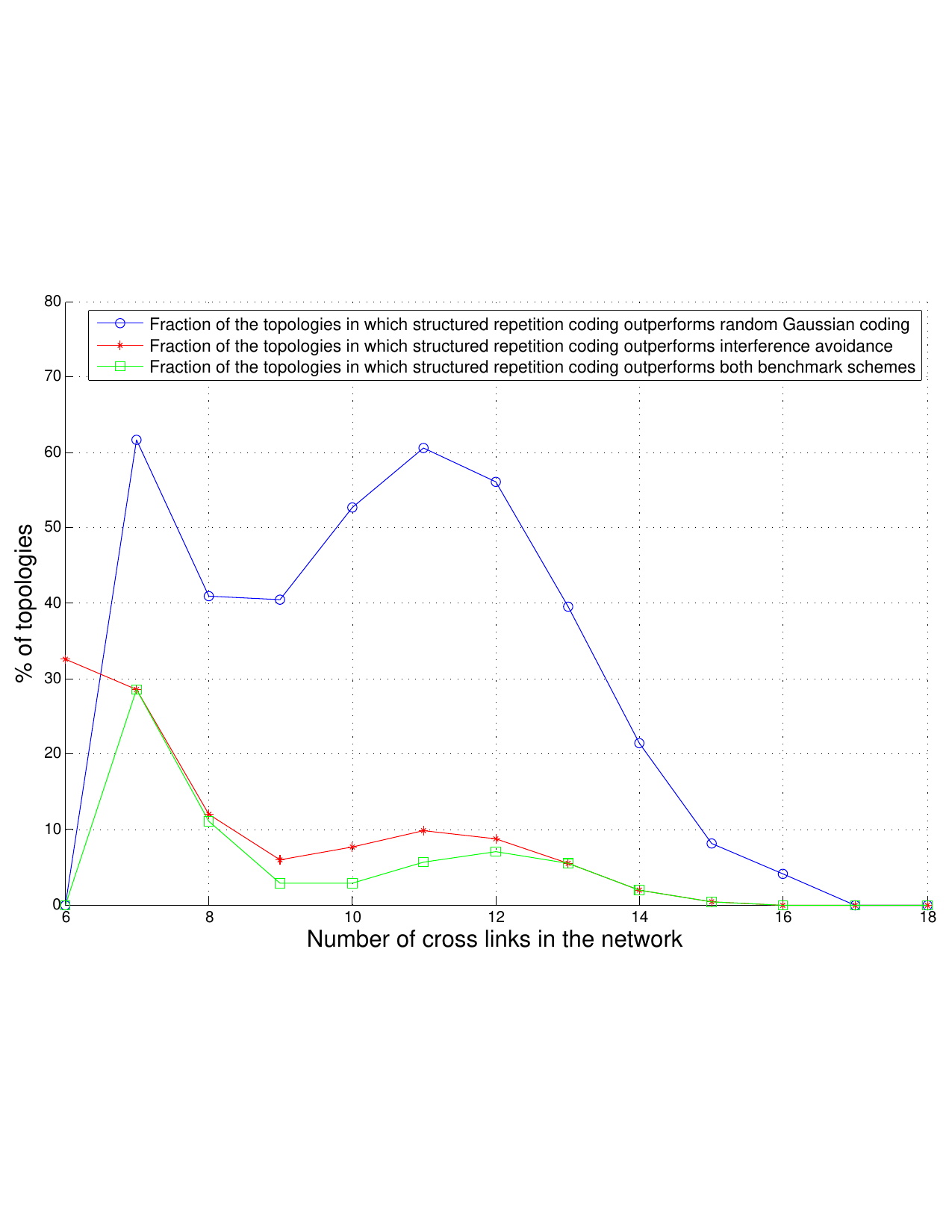}
\caption{Effect of network density on the fraction of networks in which structured repetition coding outperforms benchmark schemes in 6-user cellular networks.}
\label{fig10}
\end{figure}

\item Turning our focus to the networks where structured repetition coding outperforms the benchmark schemes, it is interesting to know the value of the gains obtained over them. Among the networks in which structured repetition coding outperforms random Gaussian coding, the gain of the former scheme over the latter takes 5 distinct values in $\lbrace \frac{6}{5},\frac{4}{3},\frac{3}{2},\frac{8}{5},2\rbrace$, distributed as shown in Figure \ref{fig11a}. Also, Figure \ref{fig11b} illustrates the distribution of the gain of structured repetition coding over interference avoidance among the networks in which this gain is greater than unity. This gain can take 4 distinct values in $\lbrace \frac{6}{5},\frac{5}{4},\frac{4}{3},\frac{3}{2}\rbrace$.
\begin{figure}
\centering
\begin{subfigure}[b]{0.47\textwidth}
\centering
\includegraphics[trim = 1.65in 3.3in 1.75in 3.4in, clip,width=0.8\textwidth]{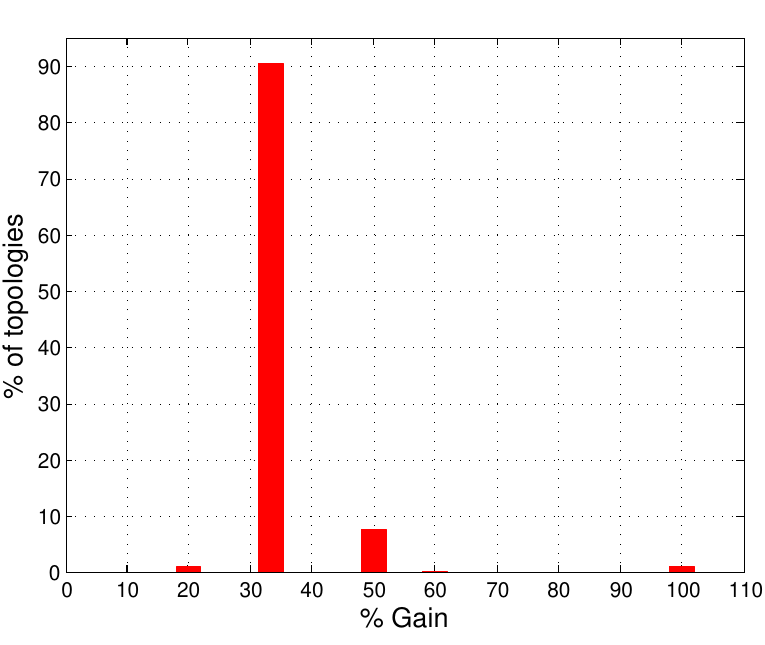}
\caption{Distribution of the gain of structured repetition coding over $\frac{1}{\Delta_{R}}$ (random Gaussian coding).}
\label{fig11a}
\end{subfigure}
~
\begin{subfigure}[b]{0.47\textwidth}
\centering
\includegraphics[trim = 1.6in 3.3in 1.75in 3.4in, clip,width=0.8\textwidth]{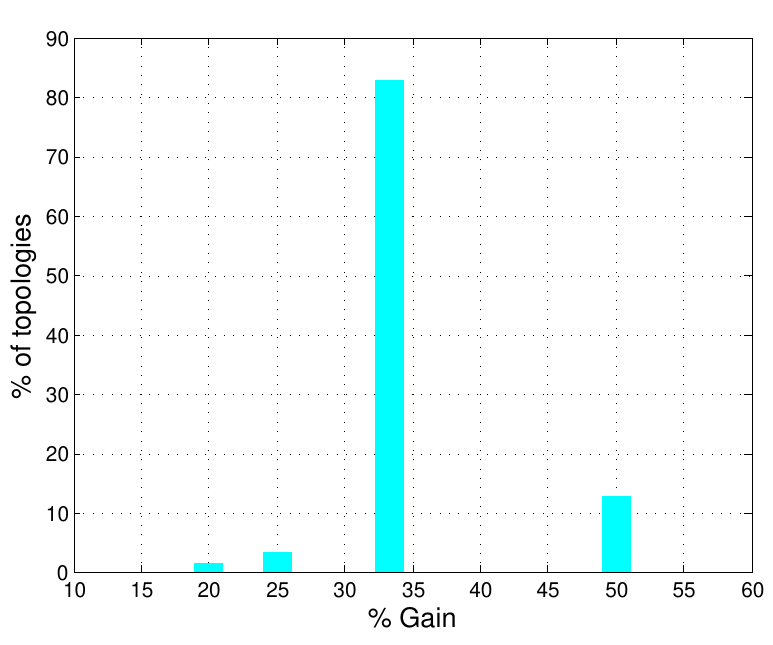}
\caption{Distribution of the gain of structured repetition coding over interference avoidance.}
\label{fig11b}
\end{subfigure}
\caption{Comparison of achievable schemes in 6-user cellular networks.}
\label{fig11}
\end{figure}

\item Among all the network topologies, there are 14 topologies which yield the highest gains over both random Gaussian coding and interference avoidance. As an example, one of these networks is depicted in Figure \ref{fig12}.
\begin{figure}
\centering
\begin{subfigure}[b]{0.48\textwidth}
\centering
\includegraphics[trim = 2in 3in 2in 3.5in, clip,width=0.6\textwidth]{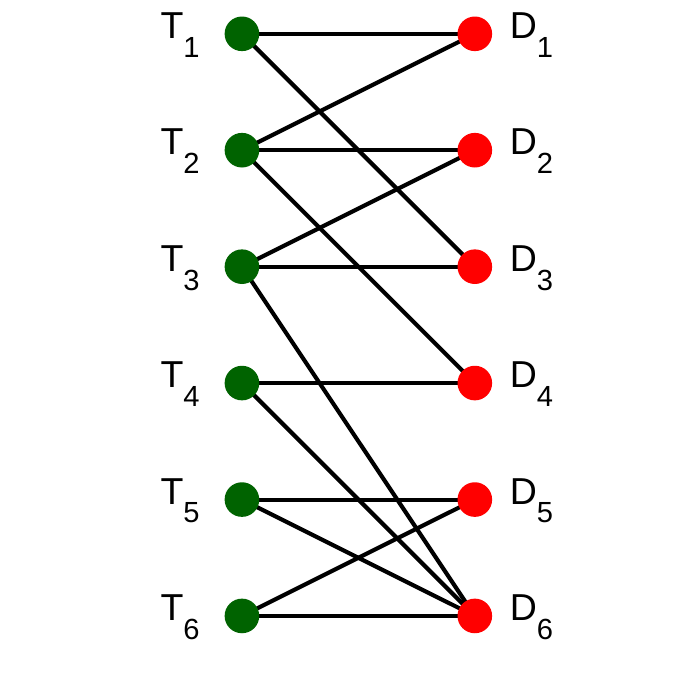}
\caption{}
\label{fig12a}
\end{subfigure}
~
\begin{subfigure}[b]{0.48\textwidth}
\centering
\includegraphics[trim = 1.2in 3in .9in 2.5in, clip,width=0.8\textwidth]{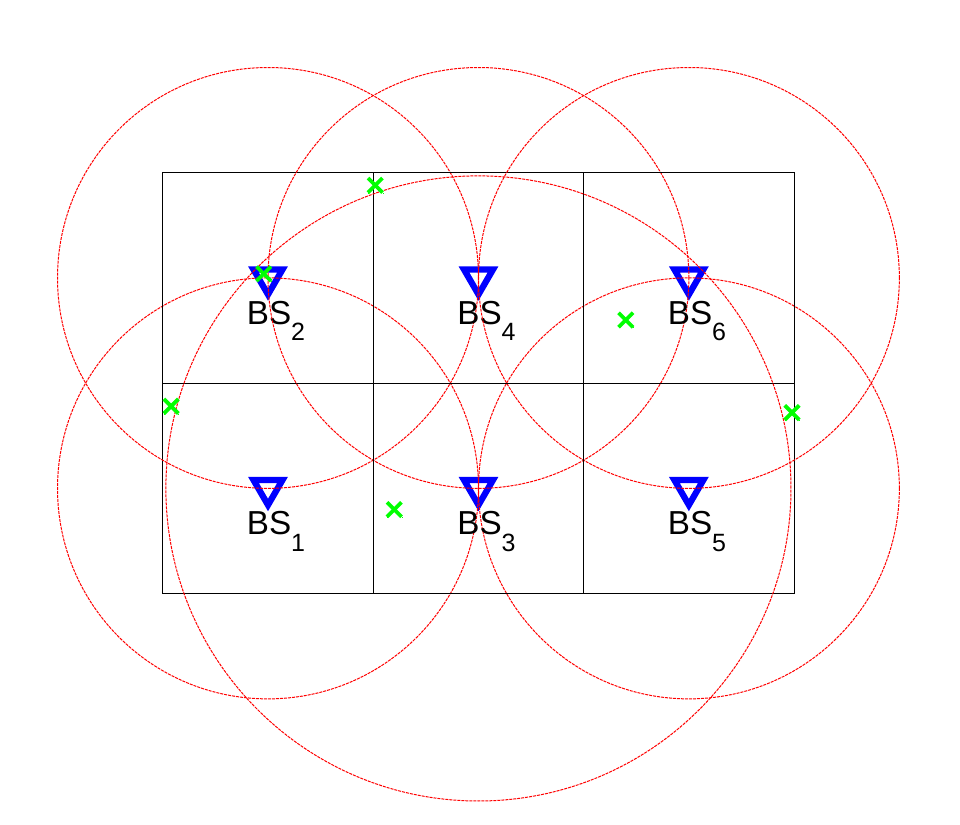}
\caption{}
\label{fig12b}
\end{subfigure}
\caption{(a) A 6-user interference network in which $d_{sym}=\frac{1}{2}$ and the gain of structured repetition coding over random Gaussian coding and interference avoidance is 2 and $\frac{3}{2}$, respectively, and (b) a corresponding 6-cell realization.}
\label{fig12}
\end{figure}

In the network of Figure \ref{fig12} (and all the other 13 networks which yield the highest gains), $d_{sym}$ is equal to $\frac{1}{2}$, which can be achieved by structured repetition coding. However, the best symmetric DoF achieved by random Gaussian coding is $\frac{1}{4}$, hence a gain of 2 can be obtained over this scheme. This implies that for all these 14 networks, there exists a receiver whose degree is 4 (receiver $\text{D}_6$ in Figure \ref{fig12a}). Moreover, another pattern that is common among these 14 ``high-yield'' topologies is that the three users which are interfering to the receiver with degree 4 are mutually non-interfering, hence constituting an independent set (users \{3,4,5\} in Figure \ref{fig12a}). The third common property of all these topologies is that they contain a 3-user cyclic chain (a 3-user network with users $i$,$j$ and $k$ where $\text{T}_i$ is connected to $\text{D}_j$, $\text{T}_j$ is connected to $\text{D}_k$, and $\text{T}_k$ is connected to $\text{D}_i$). The subgraph consisting of users \{1,2,3\} in Figure \ref{fig12a} is a 3-user cyclic chain. This is the main reason that interference avoidance can achieve no better than the symmetric DoF of $\frac{1}{3}$ in these networks, allowing structured repetition coding to have a gain of $\frac{3}{2}$ over it.
\end{enumerate}

\subsection{6-User Networks with 1 Central and 5 Surrounding Base Stations}\label{7MBS}

In this section, we explore another class of 6-user networks, consisting of 1 base station (BS) located in the center of a circle with radius 1, and 5 other base stations located uniformly on the boundary of the circle. Each base station has a coverage radius of $r<1$, with a mobile client randomly located in its coverage area. A realization of such a network scenario is illustrated in Figure \ref{fig15}.
\begin{figure}[hbt]
\centering
\includegraphics[trim = 2in 3.1in 1.65in 2.95in, clip,width=0.36\textwidth]{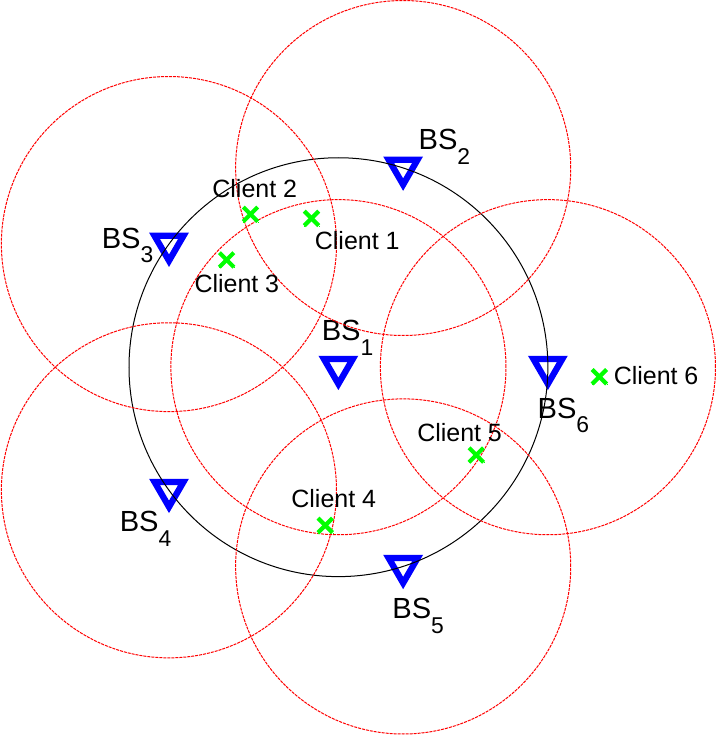}
\caption{A 6-user network realization with 1 BS in the middle and 5 BS's surrounding it, where the blue triangles, green crosses, black circle and red circles represent base stations, mobile clients, unit circle and coverage area of base stations, respectively. In this figure, $r=0.8$.}
\label{fig15}
\end{figure}

Again, as we had in Figure \ref{fig8}, the blue triangles represent base stations, the green crosses represent mobile clients, the black circle represents the unit circle and the red circles depict the coverage area of each BS. Obviously, any placement of the mobile clients corresponds to a partially-connected 6-user interference network.

To analyze our bounds for this class of networks, we generated 12000 network instances by randomly locating the mobile clients for the case of $r=0.8$. Upon removing isomorphic graphs, we ended up with 1507 distinct topologies and evaluated our inner and outer bounds for these topologies, leading to the following conclusions.

\begin{enumerate}
\item We find out interestingly, that our bounds are tight in all the generated network topologies, and Figure \ref{fig16} illustrates the distribution of $d_{sym}$ among these topologies. We note that $d_{sym}$ takes 4 distinct values in $\lbrace\frac{1}{3},\frac{2}{5},\frac{1}{2},1\rbrace$. The most frequent value that $d_{sym}$ takes is $\frac{1}{3}$, followed by $\frac{1}{2}$.

\begin{figure}[hbt]
\centering
\includegraphics[trim = 1.65in 3.3in 1.75in 3.4in, clip,width=.35\textwidth]{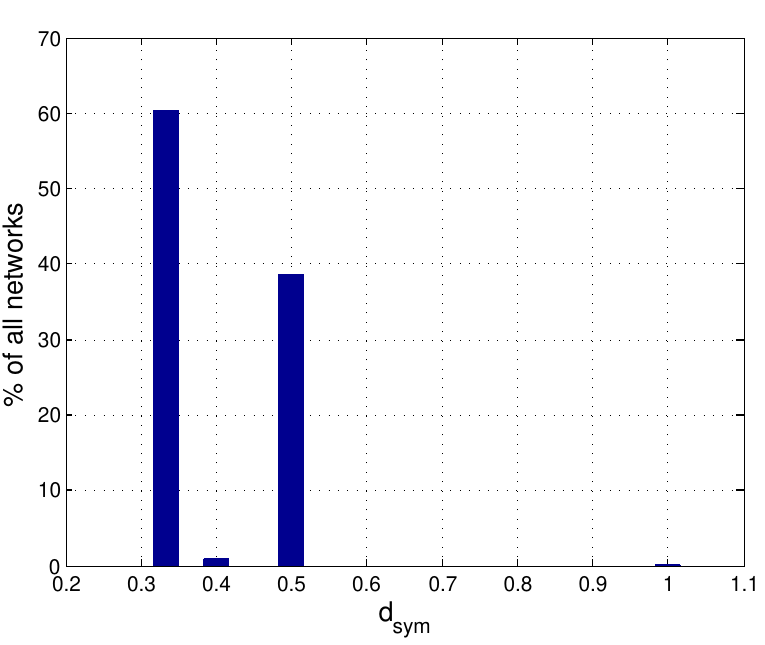}
\caption{Distribution of $d_{sym}$ among 6-user networks with 1 central and 5 surrounding BS's, where each BS has a coverage radius of $r=0.8$.}
\label{fig16}
\end{figure}

\item Figure \ref{fig_13} illustrates the effect of the number of cross links in the network, which is a measure of density of the network graph, on the fraction of topologies which yield gains over benchmark schemes. The trend is similar to that of Figure \ref{fig10}, showing that if the network graph is too sparse (few number of cross links) or too dense (high number of cross links), there is not much gain beyond the benchmark schemes. However, if the network graph is moderately dense, then structured repetition coding can attain gain over the benchmark schemes in a larger fraction of networks. Moreover, the figure implies that, on average, interference avoidance yields higher inner bounds on $d_{sym}$ than random Gaussian coding, in this class of networks.
\begin{figure}[h]
\centering
\includegraphics[trim = 0in 2.5in 0in 2.6in, clip,width=0.52\textwidth]{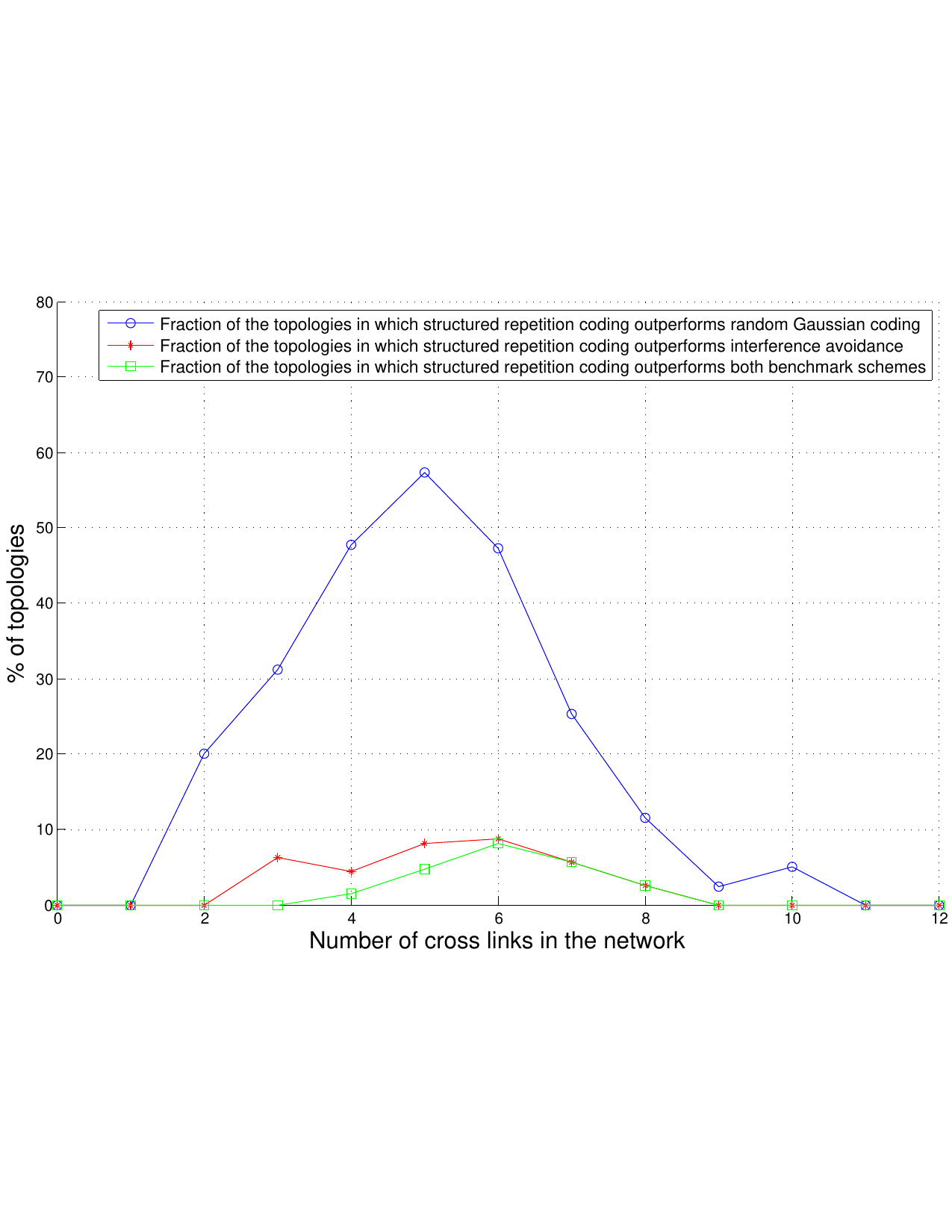}
\caption{Effect of network density on the fraction of networks in which structured repetition coding outperforms benchmark schemes in 6-user networks with 1 central and 5 surrounding BS's, where each BS has a coverage radius of $r=0.8$.}
\label{fig_13}
\end{figure}

\item Figure \ref{fig18a} illustrates the distribution of the gain of structured repetition coding over random Gaussian coding among the topologies in which this gain is greater than unity. This gain can take 2 distinct values in $\lbrace \frac{6}{5},\frac{3}{2}\rbrace$. Moreover, among the networks in which there is a gain over interference avoidance, this gain can take 2 distinct values in $\lbrace \frac{5}{4},\frac{3}{2}\rbrace$, with the distribution shown in Figure \ref{fig18b}. The most frequent value of both of the gains is $\frac{3}{2}$, which indicates a 50\% improvement in the inner bound on $d_{sym}$.
\begin{figure}[hbt]
\centering
\begin{subfigure}[b]{0.47\textwidth}
\centering
\includegraphics[trim = 1.6in 3.3in 1.75in 3.4in, clip,width=0.8\textwidth]{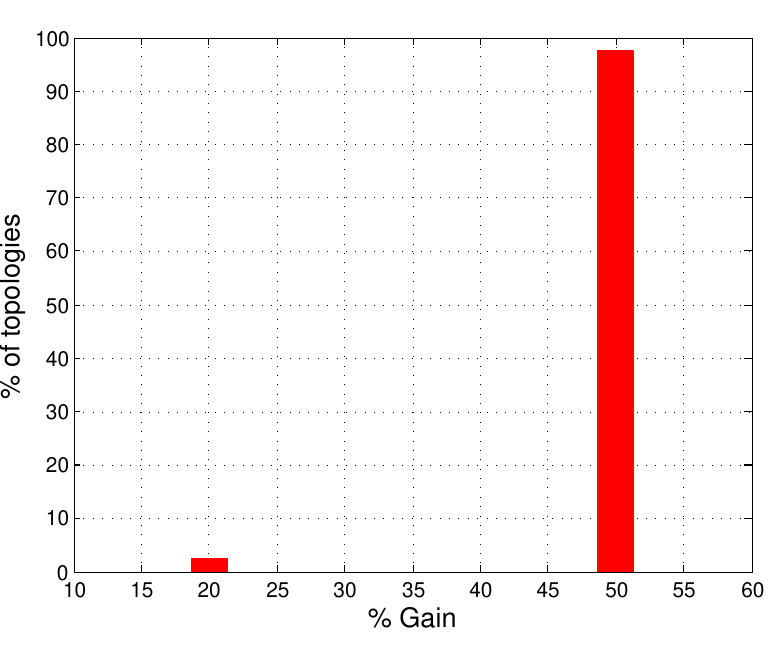}
\caption{Distribution of the gain of structured repetition coding over $\frac{1}{\Delta_{R}}$ (random Gaussian coding).}
\label{fig18a}
\end{subfigure}
~
\begin{subfigure}[b]{0.47\textwidth}
\centering
\includegraphics[trim = 1.6in 3.3in 1.75in 3.4in, clip,width=0.8\textwidth]{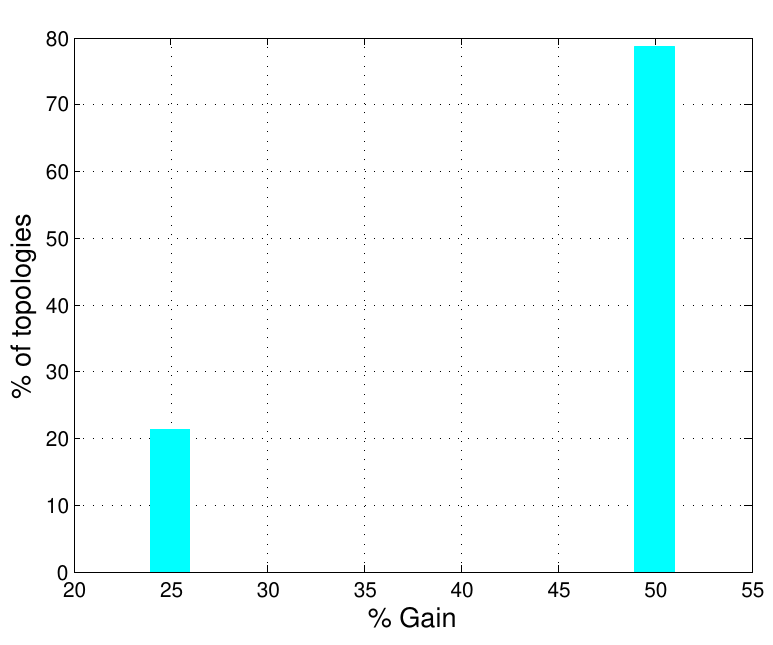}
\caption{Distribution of the gain of structured repetition coding over interference avoidance.}
\label{fig18b}
\end{subfigure}
\caption{Comparison of achievable schemes in 6-user networks with 1 central and 5 surrounding BS's, where each BS has a coverage radius of $r=0.8$.}
\label{fig18}
\end{figure}
\end{enumerate}

\section{Concluding Remarks and Future Directions}\label{conc}
In this work, we studied the impact of network topology on the symmetric degrees-of-freedom of $K$-user interference networks with no CSIT. We presented two outer bounds on the symmetric DoF based on two new linear algebraic concepts of generators and fractional generators. An achievable scheme, called structured repetition coding, has been introduced based on the graph theoretic concept of bipartite matching, as well as two benchmark achievable schemes. Moreover, we demonstrated, via numerical analysis, that our bounds were tight for most topologies in two classes of networks. We illustrated topologies in which structured repetition coding yields gains over benchmark schemes, and also discussed the effect of network sparsity on these gains.

This paper also opens up several interesting future directions. For instance, as we demonstrated in Section \ref{sim}, our bounds were tight for most instances of network topologies. However, we found some networks in which our bounds were not tight. Figure \ref{fig_conc} illustrates two of these networks.
\begin{figure}[hbt]
\centering
\begin{subfigure}[hbt]{0.45\textwidth}
\centering
\includegraphics[trim = 2in 3in 2in 3.5in, clip,width=0.7\textwidth]{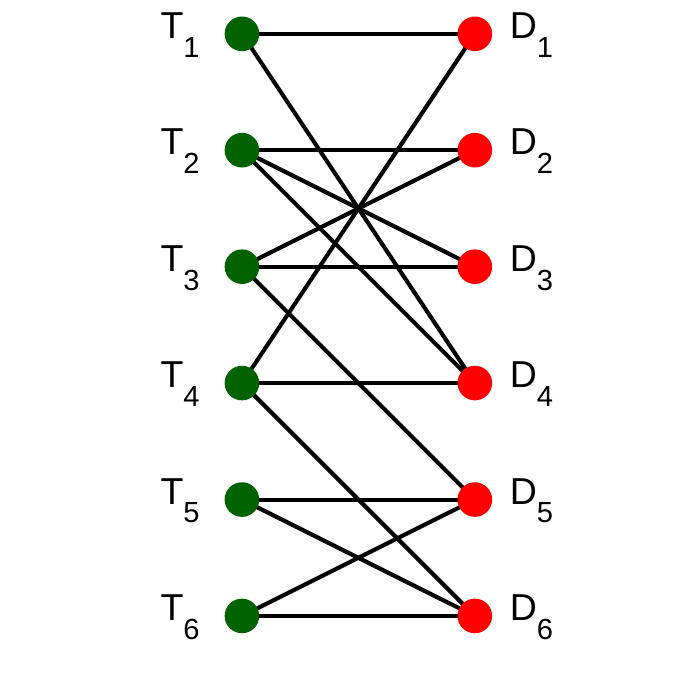}
\caption{}
\label{fig_conc_a}
\end{subfigure}
~
\begin{subfigure}[hbt]{0.45\textwidth}
\centering
\includegraphics[trim = 2in 3in 2in 3.5in, clip,width=0.7\textwidth]{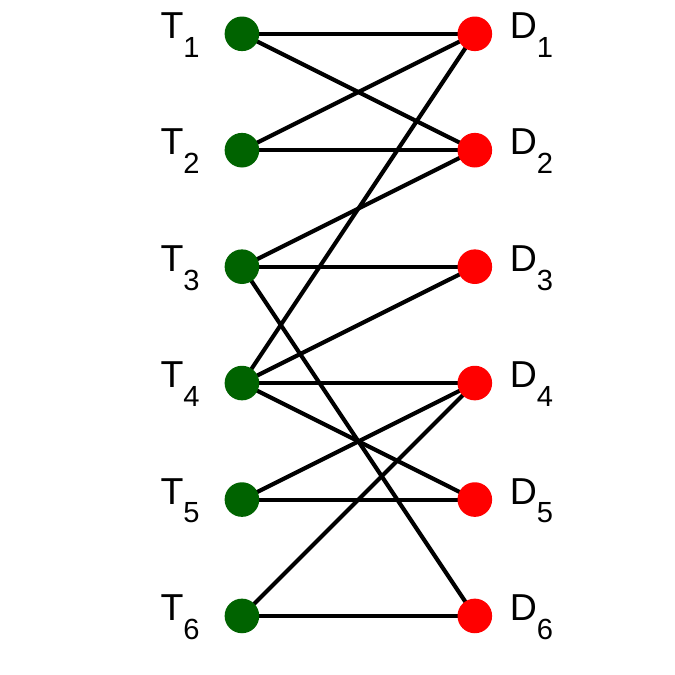}
\caption{}
\label{fig_conc_b}
\end{subfigure}
\caption{Two interference networks in which our bounds yield $\frac{4}{9}\leq d_{sym}\leq\frac{1}{2}$, hence characterizing the symmetric DoF remains open.}
\label{fig_conc}
\end{figure}

Both of these networks correspond to the class of 6-user networks with 6 square cells, discussed in Section \ref{6cell}. By imposing the constraint $n\in[1:K+1]$ on the transmission matrices of structured repetition coding, our best inner bounds in the networks of Figures \ref{fig_conc_a} and \ref{fig_conc_b} are  $\frac{2}{5}$ and $\frac{1}{3}$, respectively. However, letting $n=9$ leads to the inner bound of $\frac{4}{9}$. This shows that in some topologies, letting $n>K+1$ may yield higher inner bounds by structured repetition coding than $n\leq K+1$. However, our inner and outer bounds still do not meet in these networks. Therefore, an interesting direction would be finding new techniques to tighten the bounds for these networks and extending them to general topologies.

Another interesting direction is the generalization of the problem to more generic network settings, such as multihop networks. As an example, the authors in \cite{kkk} study the problem of two-hop wireless networks with $K$ sources, $K$ relays and $K$ destinations, where the network is assumed to be fully-connected and full CSI is also presumed to be available at the transmitters. This problem can be extended in two ways by considering the impacts of partial connectivity and lack of CSIT on the results.

\section*{Acknowledgement}
The authors would like to thank Dr. Bhushan Naga at Qualcomm Inc. for his valuable comments and discussions, and Dr. Syed Ali Jafar for his comments regarding the examples in Section \ref{conc}, pointing out that the symmetric DoF of $\frac{5}{12}$ can also be achieved, which shows that letting $n>K+1$ improves the achievable symmetric DoF under structured repetition coding.

\appendices
\section{Proof of Lemma \ref{lem2}}\label{apx1}
\begin{align*}
H(W|Y^n+Z_2^n)
&=H(W|Y^n+Z_2^n,Z_1^n-Z_2^n)+I(W;Z_1^n-Z_2^n|Y^n+Z_2^n)\\
&\leq H(W|Y^n+Z_1^n)+h(Z_1^n-Z_2^n|Y^n+Z_2^n)-h(Z_1^n-Z_2^n|Y^n+Z_2^n,W)\\
&\leq n\epsilon+h(Z_1^n-Z_2^n)-h(Z_1^n-Z_2^n|Y^n+Z_2^n,W,Z_2^n)\\
&= n\epsilon+h(Z_1^n-Z_2^n)-h(Z_1^n)\\
&=n\epsilon+n\log(\pi e (N+1))-n\log(\pi e)\\
&=n\epsilon+n\log(N+1).
\end{align*}

\section{Proof of Lemma \ref{lem3}}\label{apx2}

Without loss of generality, let $\mathcal{S}=[1:m]$ and $\mathcal{S'}=[1:m']$ ($m'\leq m$). Also, with respect to $\mathbf{c}$ being a fractional generator of $\mathcal{S'}$ in $\mathcal{S}$, suppose (without loss of generality) that $\Pi_\mathcal{S'}=(1,...,m')$. First, note that
\begin{align*}
&h\left(\sum_{j=1}^m \mathbf{c}_j g_j^n X_j^n+\sum_{k=1}^{m'}  g_k^n X_k^n+Z^n|\mathcal{G}^n]\right)
-h\left(\sum_{j=1}^m \mathbf{c}_j g_j^n X_j^n+\sum_{k=1}^{m'}  g_k^n X_k^n+Z^n|W_1,...,W_{m'},\mathcal{G}^n\right)\\
&\qquad=H(W_1,...,W_{m'})-H\left(W_1,...,W_{m'}|\sum_{j=1}^m \mathbf{c}_j g_j^n X_j^n+\sum_{k=1}^{m'}  g_k^n X_k^n+Z^n,\mathcal{G}^n\right),
\end{align*}
since both sides are equal to $I\left(\sum_{j=1}^m \mathbf{c}_j g_j^n X_j^n+\sum_{k=1}^{m'}  g_k^n X_k^n+Z^n;W_1,...,W_{m'}|\mathcal{G}^n\right)$. Therefore, we can write
\begin{align*}
&h\left(\sum_{j=1}^m \mathbf{c}_j g_j^n X_j^n+\sum_{k=1}^{m'}  g_k^n X_k^n+Z^n|W_1,...,W_{m'},\mathcal{G}^n\right)\\
&\qquad=H\left(W_1,...,W_{m'}|\sum_{j=1}^m \mathbf{c}_j g_j^n X_j^n+\sum_{k=1}^{m'}  g_k^n X_k^n+Z^n,\mathcal{G}^n\right)\\
&\qquad\qquad+h\left(\sum_{j=1}^m \mathbf{c}_j g_j^n X_j^n+\sum_{k=1}^{m'}  g_k^n X_k^n+Z^n|\mathcal{G}^n)-H(W_1,...,W_{m'}\right)\\
&\qquad\leq H\left(W_1,...,W_{m'}|\sum_{j=1}^m \mathbf{c}_j g_j^n X_j^n+\sum_{k=1}^{m'}  g_k^n X_k^n+Z^n,\mathcal{G}^n\right)
+n\left(\log(P)- \sum_{i\in\mathcal{S'}} R_i\right)+no(\log(P)).\numberthis\label{eq11}
\end{align*}

Now, we prove that $H\left(W_l|\sum_{j=1}^m \mathbf{c}_j g_j^n X_j^n+\sum_{k=1}^{m'}  g_k^n X_k^n+Z^n,W_1,...,W_{l-1},\mathcal{G}^n\right)\leq no(\log(P))+n\epsilon_{l,n}$ for $l\in[1:m']$. By Definition \ref{d2}, $\mathbf{M}_{l}^\mathcal{S}\in_{l}^\pm \text{span}\	(\mathbf{c}+\sum_{k=1}^{m'} \mathbf{I}_k^{|\mathcal{S}|},\mathbf{I}_{\{1,...,{l-1}\}}^{|\mathcal{S}|})$, implying that there exist a vector $\tilde{\mathbf{v}}\in\mathbb{R}^{|\mathcal{S}|}$ and coefficients $\alpha$ and $d_k$ ($k\in[1:l-1]$) such that
\begin{align}
\tilde{\mathbf{v}}&=\alpha \left(\mathbf{c}+\sum_{k=1}^{m'} \mathbf{I}_k^{|\mathcal{S}|}\right)+\sum_{k=1}^{l-1} d_k \mathbf{I}_k^{|\mathcal{S}|}\label{eqqq1}\\
|\tilde{\mathbf{v}}_l|&=|\mathbf{M}_{ll}^\mathcal{S}|=1\label{eqqq2}\\
\tilde{\mathbf{v}}_j\left(|\tilde{\mathbf{v}}_j|-|\mathbf{M}_{jl}^\mathcal{S}|\right)&=0,\:\forall j\in[1:m]\setminus\{l\}.\label{eqqq3}
\end{align}

Note that if $j\in\mathcal{IF}_l$, then $\mathbf{M}_{jl}^\mathcal{S}=1$ and (\ref{eqqq3}) implies that $\tilde{\mathbf{v}}_j$ can either be equal to 0 or $\pm1$; i.e. $\tilde{\mathbf{v}}_j\in\{0,\pm1\}$. On the other hand, if $j\notin\mathcal{IF}_l$, then $\mathbf{M}_{jl}^\mathcal{S}=0$ and (\ref{eqqq3}) implies that $\tilde{\mathbf{v}}_j=0$. Multiplying $\begin{bmatrix} g_1^n X_1^n & \dots & g_m^n X_m^n \end{bmatrix}$ by both sides of (\ref{eqqq1}), hence, yields
\begin{align*}
\tilde{\mathbf{v}}_l g_l^n X_l^n+\sum_{j\in\mathcal{IF}_l}\tilde{\mathbf{v}}_j g_j^n X_j^n=\alpha \bigg(\sum_{j=1}^m \mathbf{c}_j g_j^n X_j^n+\sum_{k=1}^{m'}  g_k^n X_k^n \bigg) +\sum_{k=1}^{l-1} d_{k} g_k^n X_{k}^n.
\end{align*}

Therefore, we can write:
\begin{align*}
H\Bigg(W_l|\alpha &\bigg(\sum_{j=1}^m \mathbf{c}_j g_j^n X_j^n+\sum_{k=1}^{m'}  g_k^n X_k^n+Z^n \bigg) +\sum_{k=1}^{l-1} d_{k} g_k^n X_{k}^n,\mathcal{G}^n\Bigg)\\
&=H\left(W_l|\tilde{\mathbf{v}}_l g_l^n X_l^n+\sum_{j\in\mathcal{IF}_l}\tilde{\mathbf{v}}_j g_j^n X_j^n+\alpha{Z}^n,\mathcal{G}^n\right)\\
&=H\left(W_l|\tilde{\mathbf{v}}_l g_l^n X_l^n+\sum_{j\in\mathcal{IF}_l}\tilde{\mathbf{v}}_j g_j^n X_j^n+\alpha{Z}^n,\sum_{j\in\mathcal{IF}_l}(1-|\tilde{\mathbf{v}}_j|) g_j^n X_j^n,\mathcal{G}^n\right)\numberthis\label{eqq6}\\
&\leq H\left(W_l|\tilde{\mathbf{v}}_l g_l^n X_l^n+\sum_{j\in\mathcal{IF}_l}\tilde{\mathbf{v}}'_j g_j^n X_j^n+\alpha{Z}^n,\mathcal{G}^n\right)\numberthis\label{eqq6p}\\
&\leq no(\log(P))+n\epsilon_{l,n},\numberthis\label{eqq7}
\end{align*}
where (\ref{eqq6}) is true because, as discussed before, for all $j\in\mathcal{IF}_l$, $\tilde{\mathbf{v}}_j$ can only take the values in $\{\pm1,0\}$ and therefore the signals in $\sum_{j\in\mathcal{IF}_l}\tilde{\mathbf{v}}_j g_j^n X_j^n$ and $\sum_{j\in\mathcal{IF}_l}(1-|\tilde{\mathbf{v}}_j|) g_j^n X_j^n$ do not have common terms.\footnote{If $\tilde{\mathbf{v}}_j=0$, then $1-|\tilde{\mathbf{v}}_j|=1$, and if $\tilde{\mathbf{v}}_j=1$ or $\tilde{\mathbf{v}}_j=-1$, then $1-|\tilde{\mathbf{v}}_j|=0$. Hence, either $\tilde{\mathbf{v}}_j$ or $1-|\tilde{\mathbf{v}}_j|$ is non-zero, but not both.} In (\ref{eqq6p}), $\tilde{\mathbf{v}}'_j$ is defined as $\tilde{\mathbf{v}}'_j:=\tilde{\mathbf{v}}_j+(1-|\tilde{\mathbf{v}}_j|)$. Clearly $\tilde{\mathbf{v}}'_j$ can only take the values in $\{+1,-1\}$ because $\tilde{\mathbf{v}}_j\in\{\pm1,0\}$. Also, (\ref{eqqq2}) implies that $\tilde{\mathbf{v}}_l\in\{+1,-1\}$. Therefore, $\tilde{\mathbf{v}}_l g_l^n X_l^n+\sum_{j\in\mathcal{IF}_l}\tilde{\mathbf{v}}'_j g_j^n X_j^n+\alpha{Z}^n$ is statistically the same as $Y_l^n$ (with a bounded difference in noise variance), because the channel gains have a symmetric distribution around zero ($f_G(g)=f_G(-g)$, $\forall g\in\mathbb{C}$). This, together with Lemma \ref{lem2} and Fano's inequality, implies that (\ref{eqq7}) is correct. Hence, using the chain rule for differential entropy yields
\begin{align*}
H \bigg(W_1,...,W_{m'}|\sum_{j=1}^m \mathbf{c}_j g_j^n X_j^n&+\sum_{k=1}^{m'}  g_k^n X_k^n+Z^n,\mathcal{G}^n \bigg)\\
&=\sum_{l=1}^{m'} H\bigg(W_l|\sum_{j=1}^m \mathbf{c}_j g_j^n X_j^n+\sum_{k=1}^{m'}  g_k^n X_k^n+Z^n,W_1,...,W_{l-1},\mathcal{G}^n\bigg)\\
&\leq \sum_{l=1}^{m'} no(\log(P))+n\epsilon_{l,n}\\
&= no(\log(P))+n\epsilon_{n}.
\end{align*}

Therefore, (\ref{eq11}) can be written as
\begin{align}\label{eq13}
h\bigg(\sum_{j=1}^m \mathbf{c}_j g_j^n X_j^n+\sum_{k=1}^{m'}  g_k^n X_k^n+Z^n|W_1,...,W_{m'},\mathcal{G}^n\bigg)
\leq n\bigg(\log(P)- \sum_{i\in\mathcal{S'}} R_i\bigg)+no(\log(P))+n\epsilon_{n}.
\end{align}

But note that
\begin{align}\label{eq14}
h\left(\sum_{j=1}^m \mathbf{c}_j g_j^n X_j^n+\sum_{k=1}^{m'}  g_k^n X_k^n+Z^n|W_1,...,W_{m'},\mathcal{G}^n\right)
=h\left(\sum_{j=1}^m \mathbf{c}_j g_j^n X_j^n+Z^n|\mathcal{G}^n\right),
\end{align}
because by Definition \ref{d2}, $\mathbf{c}_j=0,\forall j\in\mathcal{S}'$ and therefore, $\sum_{j=1}^m \mathbf{c}_j g_j^n X_j^n+Z^n$ is independent of $W_1,...,W_{m'}$. The lemma then follows from (\ref{eq13}) and (\ref{eq14}).

\section{Proof of Lemma \ref{match}}\label{apx3}
The fact that $\mu(G)-\mu(G\setminus l)=1$ means that there exists a maximum matching in $G$ which covers node $v_l$; i.e. there is one edge in the matching incident on $v_l$. This matching covers $\mu(G)$ vertices out of $\{v_1,...,v_m\}$, including $v_l$ for sure, and $\mu(G)$ vertices out of $\{v'_1,...,v'_n\}$ (note that for the bipartite graph $G$, $\mu(G)\leq\min\{m,n\}$). Therefore, it corresponds to a $\mu(G)\times\mu(G)$ submatrix of the entire adjacency matrix $\mathbf{T}$, which we will denote by $\mathbf{T}^{l}$, and we know that $\mathbf{T}^{l}$ includes (a subset of) the $l^{th}$ row of $\mathbf{T}$. Without loss of generality, we assume that $l\in[1:\mu(G)]$ and $\mathbf{T}^{l}$ consists of the first $\mu(G)$ rows and columns of $\mathbf{T}$. We will also denote the corresponding random matrix by $\tilde{\mathbf{T}}^l$; i.e. $\tilde{\mathbf{T}}_{ij}^l=g_{ij}\mathbf{T}_{ij}^l$, $\forall i\in[1:\mu(G)]$, $\forall j\in[1:\mu(G)]$.

Now, we show that $\det(\tilde{\mathbf{T}}^l)\neq0$ for almost all values of $g_{ij}$'s. This is because 
\begin{align}\label{eq_det}
\det(\tilde{\mathbf{T}}^l)=\sum_{\sigma\in\Pi_{\mu(G)}}\text{sgn}(\sigma)\prod_{i=1}^{\mu(G)}\tilde{\mathbf{T}}^l_{i\sigma_i},
\end{align}
where $\Pi_{\mu(G)}$ is the set of all permutations of $[1:\mu(G)]$ and $\text{sgn}(\sigma)=1$ if $\sigma$ can be derived from $[1:\mu(G)]$ by doing an even number of switches, and $\text{sgn}(\sigma)=-1$ otherwise. Note that $\det(\tilde{\mathbf{T}}^l)$ is a multivariate polynomial of distinct i.i.d. channel gains (drawn from a continuous distribution), which is not identically zero. The reason that the polynomial is not identically zero is because the matching corresponds to a set of nonzero entries of $\mathbf{T}^l$ (and hence $\tilde{\mathbf{T}}^l$) which do not share common rows/columns, hence constituting a non-zero term in (\ref{eq_det}). Therefore, the Schwartz-Zippel Lemma \cite{sch,zip} states that the value of this polynomial is not zero for almost all values of $g_{ij}$'s.

Therefore, $\tilde{\mathbf{T}}^l$ is invertible with probability 1, implying that there exists a vector $\mathbf{u}'\in\mathbb{C}^{\mu(G)}$ such that $\tilde{\mathbf{T}}^l\mathbf{u}'=\mathbf{I}_l^{\mu(G)}$. In fact, $\mathbf{u}'=(\tilde{\mathbf{T}}^l)^{-1}\mathbf{I}_l^{\mu(G)}$.
Now, let $\mathbf{u}\in\mathbb{C}^n$ be the vector such that
\begin{align}\label{uup}
\mathbf{u}_j=
\begin{cases}
\mathbf{u}'_j&\:\text{if }j\in[1:\mu(G)]\\
0&\:\text{if }j\in[\mu(G)+1:n]
\end{cases}.
\end{align}

Now, we claim that $\tilde{\mathbf{T}}\mathbf{u}=\mathbf{I}_l^m$. This is true because of the following. As a reminder, the subscript * refers to the corresponding row of a matrix.
\begin{itemize}
\item $\tilde{\mathbf{T}}_{l,*}\mathbf{u}
=\tilde{\mathbf{T}}_{l,*}^l\mathbf{u}'=1$.

\item $\tilde{\mathbf{T}}_{i,*}\mathbf{u}
=\tilde{\mathbf{T}}_{i,*}^l\mathbf{u}'=0$, $\forall i\in[1:\mu(G)]\setminus\{l\}$.

\item Also, each row $\tilde{\mathbf{T}}_{j,*}$ ($j\in[\mu(G)+1:n]$) is linearly dependent on the rows $\tilde{\mathbf{T}}_{i,*}$, $i\in[1:\mu(G)]\setminus\{l\}$, because otherwise, we would have at least $\mu(G)$ independent rows in $\tilde{\mathbf{T}}_{\setminus l,*}$ (the same matrix as $\tilde{\mathbf{T}}$ with the $l^{th}$ row removed) and this corresponds to a matching with a size of at least $\mu(G)$ in $G\setminus l$, contradicting $\mu(G)-1$ being the size of the maximum matching in $G\setminus l$. Therefore, for all $j\in[\mu(G)+1:n]$, there exist coefficients $\alpha_{ij}$ ($i\in[1:\mu(G)]\setminus\{l\}$) such that $\tilde{\mathbf{T}}_{j,*}=\sum_{i\in[1:\mu(G)]\setminus\{l\}} \alpha_{ij} \tilde{\mathbf{T}}_{i,*}$, implying that $\tilde{\mathbf{T}}_{j,*}\mathbf{u}=\sum_{i\in[1:\mu(G)]\setminus\{l\}} \alpha_{ij} \tilde{\mathbf{T}}_{i,*}\mathbf{u}=0$.
\end{itemize}

To complete the proof, note that $\|\mathbf{u}\|_2=\|\mathbf{u}'\|_2=\left\|(\tilde{\mathbf{T}}^l)^{-1}\mathbf{I}_l^{\mu(G)}\right\|_2$, because of the definition of $\mathbf{u}$ in (\ref{uup}).

\section{Proof of Finiteness of Noise Variance in (\ref{eq_noise})}\label{apx4}

In this appendix, we intend to show that in (\ref{eq_noise}), $\mathbb{E}\left[\log\left(\left\|\left(\tilde{\mathbf{G}}^{j,l}\right)^{-1} \mathbf{I}_l^{\mu(\bar{G}^j)}\right\|^2_2\right)\right]<\infty$. We need the following key lemma to prove this inequality.

\begin{lem}\footnote{This lemma has connections to estimating the size of lemniscates of multivariate polynomials, studied in \cite{math1,math2}. However, here we prove a different form of upper bound which suits our framework to prove the finiteness of the noise variance in (\ref{eq_noise}).}\label{poly}
Assume $p(X_1,...,X_n)=\sum_{i=1}^m a_i \prod_{j=1}^n X_j^{d_{ji}}$ ($1\leq m\leq 2^n$) is a multivariate polynomial of complex i.i.d. random variables $X_1,...,X_n$ with a continuous distribution, where for all $i\in[1:m]$, $a_i$ is a constant coefficient in $\mathbb{C}$ satisfying $|a_i|\geq 1$, all the monomials are assumed to be distinct, and the degree of $X_j$ in the $i^{th}$ monomial, denoted by $d_{ji}$, satisfies $d_{ji}\in\{0,1\}$, $\forall j\in[1:n],\forall i\in[1:m]$. If there exists $f_{max}<\infty$ such that $f_{|X|}(r)\leq f_{max},\forall r\in\mathbb{R}^+$, where $f_{|X|}(r)$ is the distribution of $|X_i|$, $\forall i\in[1:n]$, then for all $\epsilon\in[0,1]$,
\begin{align*}
\emph{Pr}\left[|p(X_1,...,X_n)|\leq\epsilon\right]\leq 2^{n+1}f_{max} {\epsilon}^{\frac{1}{2^{n-1}}}.
\end{align*}
\end{lem}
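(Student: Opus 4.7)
The natural approach is induction on $n$, with base case $n=1$. In the base case the polynomial has at most two monomials and thus takes the form $p(X_1) = a_1 + a_2 X_1$ (with either coefficient possibly absent). When the linear term is present, the event $\{|p|\leq\epsilon\}$ becomes $\{|X_1 - c|\leq \epsilon/|a_2|\}$ for some $c\in\mathbb{C}$, and the reverse triangle inequality forces $|X_1|\in [|c|-\epsilon/|a_2|,\, |c|+\epsilon/|a_2|]$, an interval of length at most $2\epsilon$ (using $|a_2|\geq 1$). Since the density of $|X_1|$ is bounded by $f_{max}$, the probability is at most $2f_{max}\epsilon \leq 2^{1+1}f_{max}\epsilon$. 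The monomial case $p=a_1 X_1$ is handled similarly with a sharper constant, and a nonzero constant polynomial $p=a_1$ is vacuous because $|a_1|\geq 1$ forces the probability to zero for $\epsilon<1$.

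For the inductive step, assume the lemma for polynomials in $n-1$ variables with at most $2^{n-1}$ monomials. I would decompose $p$ according to whether each monomial contains $X_n$:
\begin{equation*}
p(X_1,\ldots,X_n) = A(X_1,\ldots,X_{n-1}) + X_n\cdot B(X_1,\ldots,X_{n-1}),
\end{equation*}
where $A$ and $B$ are themselves polynomials in $n-1$ variables, each with at most $2^{n-1}$ distinct $\{0,1\}$-exponent monomials and with coefficients (drawn from the original $a_i$) of magnitude at least $1$, so the induction hypothesis applies to both. Conditioning on $(X_1,\ldots,X_{n-1})$, the polynomial $p$ is affine in $X_n$, and when $B\neq 0$ the event $\{|p|\leq\epsilon\}$ becomes a disk of radius $\epsilon/|B|$ in the $X_n$-plane. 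The independence of $X_n$ from $(X_1,\ldots,X_{n-1})$ combined with the base-case disk estimate yields
\begin{equation*}
\Pr\bigl[|p|\leq\epsilon \,\big|\, X_1,\ldots,X_{n-1}\bigr] \leq \frac{2f_{max}\epsilon}{|B|}.
\end{equation*}

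To control the failure event $\{|B|\approx 0\}$ I would split at the threshold $\delta=\sqrt{\epsilon}$:
\begin{equation*}
\Pr\bigl[|p|\leq\epsilon\bigr] \leq \Pr\bigl[|B|\leq\sqrt{\epsilon}\bigr] + 2f_{max}\sqrt{\epsilon}.
\end{equation*}
Invoking the induction hypothesis on $B$ with parameter $\sqrt{\epsilon}$ gives $\Pr[|B|\leq\sqrt{\epsilon}]\leq 2^n f_{max}(\sqrt{\epsilon})^{1/2^{n-2}} = 2^n f_{max}\epsilon^{1/2^{n-1}}$. Since $\epsilon\in[0,1]$ and $n\geq 2$, we have $\sqrt{\epsilon}\leq \epsilon^{1/2^{n-1}}$, so the second term is at most $2f_{max}\epsilon^{1/2^{n-1}}$. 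Adding the two contributions gives $(2^n+2)f_{max}\epsilon^{1/2^{n-1}}\leq 2^{n+1}f_{max}\epsilon^{1/2^{n-1}}$, closing the induction. The degenerate cases in which $A$ or $B$ is identically zero are subsumed by applying the induction hypothesis directly to the surviving polynomial in $n-1$ variables.

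The main obstacle I anticipate is calibrating the splitting threshold so that the doubly-exponential root $1/2^{n-1}$ emerges with exactly the stated constant $2^{n+1}$; the choice $\delta=\sqrt{\epsilon}$ is precisely what balances the two contributions at the correct rate, and it is the reason the exponent in the lemma squares at each inductive step. Everything else reduces to routine bookkeeping and the elementary disk-probability estimate for a single complex variable with bounded radial density.
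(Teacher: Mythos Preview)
Your proof is correct and reaches the same bound, but the inductive step is organized differently from the paper's.

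Both arguments induct on $n$ and split at the threshold $\sqrt{\epsilon}$, but they condition on opposite sets of variables. The paper fixes the \emph{last} variable $X_{k+1}$ first: grouping monomials by their $(X_1,\ldots,X_k)$-pattern it writes
\[
p=\sum_{i=1}^{m'}(a_iX_{k+1}+b_i)\prod_{j=1}^k X_j^{d_{ji}}+\sum_{i=m'+1}^m a_i\prod_{j=1}^k X_j^{d_{ji}},
\]
then splits on the event $\min_i|a_iX_{k+1}+b_i|\le\sqrt{\epsilon}$. The ``bad'' branch is bounded by a union bound over the $m'\le 2^k$ affine coefficients (each handled by the base-case disk estimate), and the ``good'' branch applies the induction hypothesis to the rescaled $k$-variable polynomial. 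You do the reverse: you fix $X_1,\ldots,X_{n-1}$ first, write $p=A+X_nB$, and split on $|B|\le\sqrt{\epsilon}$. Your ``bad'' branch invokes the induction hypothesis on $B$, and your ``good'' branch is a single disk estimate in $X_n$.

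The trade-off is that the paper needs a union bound over up to $2^k$ linear coefficients (costing a factor $2^{k+1}$ in that branch), whereas your single disk estimate costs only a factor $2$; your slack is absorbed instead in the induction applied to $B$. Both routes close at $2^{n+1}f_{\max}\epsilon^{1/2^{n-1}}$, and yours is arguably the cleaner of the two. One small imprecision: when $A\equiv 0$ the ``surviving polynomial'' $X_nB$ is not in $n-1$ variables, so your closing sentence does not literally cover that case---but your main splitting argument already handles it without modification, so this is cosmetic rather than a gap.
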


Before proving the lemma, we show how this lemma implies that $\mathbb{E}\left[\log\left(\left\|\left(\tilde{\mathbf{G}}^{j,l}\right)^{-1} \mathbf{I}_l^{\mu(\bar{G}^j)}\right\|^2_2\right)\right]<\infty$ in (\ref{eq_noise}). Note that $\left(\tilde{\mathbf{G}}^{j,l}\right)^{-1} \mathbf{I}_l^{\mu(\bar{G}^j)}$ is the $l^{th}$ column of $\left(\tilde{\mathbf{G}}^{j,l}\right)^{-1}$, the inverse of $\tilde{\mathbf{G}}^{j,l}$. Therefore,
\begin{align*}
\left\|\left(\tilde{\mathbf{G}}^{j,l}\right)^{-1} \mathbf{I}_l^{\mu(\bar{G}^j)}\right\|^2_2
&=\sum_{i=1}^{\mu(\bar{G}^j)}\left|\left(\tilde{\mathbf{G}}^{j,l}\right)_{il}^{-1} \right|^2\\
&=\frac{1}{\left|\det\left(\tilde{\mathbf{G}}^{j,l}\right)\right|^2}\sum_{i=1}^{\mu(\bar{G}^j)}\left|M_{li}\right|^2,\numberthis\label{inv}
\end{align*}
where in (\ref{inv}), we have replaced $\left(\tilde{\mathbf{G}}^{j,l}\right)_{il}^{-1}=\frac{M_{li}}{\det\left(\tilde{\mathbf{G}}^{j,l}\right)}$, $M_{li}$ being the determinant of $\tilde{\mathbf{G}}^{j,l}$ after removing its $l^{th}$ row and $i^{th}$ column. Due to the definition of determinants (see (\ref{eq_det}), for instance), each of the terms $M_{li}$ and also $\det\left(\tilde{\mathbf{G}}^{j,l}\right)$ is a multivariate polynomial of i.i.d. channel gains, in which each of the random variables appear with the degree of 0 or 1 in each monomial. In other words, if we rename the i.i.d. channel gains inside $\tilde{\mathbf{G}}^{j,l}$ as $g_1,...,g_n$, then each $M_{li}$ can be written as
\begin{align}\label{mli}
M_{li}=\sum_{k=1}^{m_{li}} a_{k,li}\prod_{h=1}^n g_h^{d_{k,h,li}},
\end{align}
and $\det\left(\tilde{\mathbf{G}}^{j,l}\right)$ can be written as
\begin{align}\label{detG}
\det\left(\tilde{\mathbf{G}}^{j,l}\right)=\sum_{k=1}^{m} a_{k}\prod_{h=1}^n g_h^{d_{k,h}},
\end{align}
where $|a_{k,li}|=|a_k|=1$ and $d_{k,h,li},d_{k,h}\in\{0,1\}$, for all $h,k,i$. Hence, we can write
\begin{align*}
\mathbb{E}\left[\log\left(\left\|\left(\tilde{\mathbf{G}}^{j,l}\right)^{-1} \mathbf{I}_l^{\mu(\bar{G}^j)}\right\|^2_2\right)\right]
&=\mathbb{E}\left[\log\left(\sum_{i=1}^{\mu(\bar{G}^j)}\left|M_{li}\right|^2\right)\right]-\mathbb{E}\left[\log \left|\det\left(\tilde{\mathbf{G}}^{j,l}\right)\right|^2\right]\\
&\leq \log\left(\sum_{i=1}^{\mu(\bar{G}^j)} \mathbb{E}\left[\left|M_{li}\right|^2\right]\right)-\mathbb{E}\left[\log \left|\det\left(\tilde{\mathbf{G}}^{j,l}\right)\right|^2\right]\numberthis\label{jensen}\\
&\leq \log\left(\sum_{i=1}^{\mu(\bar{G}^j)}m_{li}\sum_{k=1}^{m_{li}} \mathbb{E}\left[\left|\prod_{h=1}^n g_h^{d_{k,h,li}}\right|^2\right]\right)-\mathbb{E}\left[\log \left|\det\left(\tilde{\mathbf{G}}^{j,l}\right)\right|^2\right]\numberthis\label{cauchy}\\
&= \log\left(\sum_{i=1}^{\mu(\bar{G}^j)}m_{li}\sum_{k=1}^{m_{li}} \mathbb{E}\left[\left| g\right|^2\right]^{\sum_{h=1}^n d_{k,h,li}}\right)-\mathbb{E}\left[\log \left|\det\left(\tilde{\mathbf{G}}^{j,l}\right)\right|^2\right]\numberthis\label{indep},
\end{align*}
where (\ref{jensen}) follows from Jensen's inequality, (\ref{cauchy}) follows from the Cauchy-Schwarz inequality, and (\ref{indep}) follows from $g_h$'s being i.i.d. Hence, if $\mathbb{E}\left[\left| g\right|^2\right]<\infty$, then the first term in (\ref{indep}) is bounded. Therefore, it remains to show that $\mathbb{E}\left[\log \left|\det\left(\tilde{\mathbf{G}}^{j,l}\right)\right|^2\right]>-\infty$. We can write
\begin{align}\label{detY}
\mathbb{E}\left[\log \left|\det\left(\tilde{\mathbf{G}}^{j,l}\right)\right|^2\right]\geq \mathbb{E}\left[\log \left(\min\left\{1,\left|\det\left(\tilde{\mathbf{G}}^{j,l}\right)\right|^2\right\}\right)\right]=-\mathbb{E}[Y],
\end{align}
where $Y=-\log \left(\min\left\{1,\left|\det\left(\tilde{\mathbf{G}}^{j,l}\right)\right|^2\right\}\right)$ is a non-negative random variable. Hence, we have
\begin{align*}
\mathbb{E}[Y]&=\int_0^\infty \text{Pr}[Y\geq y]dy\\
&=\int_0^\infty \text{Pr}\left[\min\left\{1,\left|\det\left(\tilde{\mathbf{G}}^{j,l}\right)\right|^2\right\}\leq 2^{-y}\right] dy\\
&=\int_0^\infty \text{Pr}\left[\left|\det\left(\tilde{\mathbf{G}}^{j,l}\right)\right|^2\leq 2^{-y}\right] dy\numberthis\label{mindrop}\\
&=\frac{2}{\ln 2}\int_0^1 \text{Pr}\left[\left|\det\left(\tilde{\mathbf{G}}^{j,l}\right)\right|\leq u\right] \frac{du}{u}\numberthis\label{varchange}\\
&\leq \frac{2}{\ln 2}\int_0^1  \frac{2^{n+1}f_{max} u^{\frac{1}{2^{n-1}}} }{u} du\numberthis\label{lemuse}\\
&= \frac{2^{n+2}f_{max}}{\ln 2}\int_0^1  u^{\frac{1}{2^{n-1}}-1} du\\
&=\frac{2^{2n+1}f_{max}}{\ln2}\\
&<\infty,
\end{align*}
where (\ref{mindrop}) is true because for any random variable $X$ and any constant $c< 1$, total probability law implies
\begin{align*}
\text{Pr}\left[\min\{1,X\}\leq c\right]&=\text{Pr}\left[\min\{1,X\}\leq c|X\leq 1\right] \text{Pr}[X\leq 1]+\text{Pr}\left[\min\{1,X\}\leq c|X> 1\right] \text{Pr}[X> 1]\\
&=\text{Pr}\left[X\leq c|X\leq 1\right] \text{Pr}[X\leq 1]+\cancelto{0}{\text{Pr}\left[1\leq c\right]}~\quad \text{Pr}[X> 1]\\
&=\text{Pr}\left[X\leq c~,X\leq 1\right]\\
&=\text{Pr}[X\leq c].
\end{align*}

Moreover, in (\ref{varchange}) we have used the change of variables $u=2^{-\frac{y}{2}}$ and (\ref{lemuse}) follows from (\ref{detG}) and Lemma \ref{poly}. This, together with (\ref{detY}) implies that $\mathbb{E}\left[\log \left|\det\left(\tilde{\mathbf{G}}^{j,l}\right)\right|^2\right]>-\infty$, hence finishing the proof.

Now, we focus on proving Lemma \ref{poly}.

\begin{proof}[Proof of Lemma \ref{poly}]
We will use induction on the number of variables ($n$) to prove the desired inequality.

\textbf{Base case:} We need to prove that for all $\epsilon\leq 1$, $\text{Pr}\left[|p(X_1)|\leq\epsilon\right]\leq 4f_{max} {\epsilon}$. In general, $p(X_1)=aX_1+b$, where $a,b\in\mathbb{C}$ and $|a|\geq 1$ and $|b|\geq 1$. Therefore, we can write
\begin{align*}
\text{Pr}\left[|p(X_1)|\leq\epsilon\right]&=\text{Pr}\left[|aX_1+b|\leq\epsilon\right]\\
&\leq \text{Pr}\left[\left| |aX_1|-|b|\right|\leq\epsilon\right]\\
&= \text{Pr}\left[\frac{|b|-\epsilon}{|a|}\leq |X_1|\leq \frac{|b|+\epsilon}{|a|}\right]\\
&\leq{2 f_{max}\epsilon}\numberthis\label{baseq}\\
&< 4f_{max}\epsilon.
\end{align*}

\textbf{Inductive step:} Assume for all $\epsilon\in[0,1]$, $\text{Pr}\left[|p(X_1,...,X_{k})|\leq\epsilon\right]\leq 2^{k+1}f_{max} {\epsilon}^{\frac{1}{2^{k-1}}}$. Now, consider the polynomial $p(X_1,...,X_k,X_{k+1})=\sum_{i=1}^m a_i \prod_{j=1}^{k+1} X_j^{d_{ji}}$. Without loss of generality, we can write this polynomial as
\begin{align}
p(X_1,...,X_k,X_{k+1})=\sum_{i=1}^{m'}(a_i X_{k+1}+b_i)  \prod_{j=1}^{k} X_j^{d_{ji}}+\sum_{i=m'+1}^{m} a_i \prod_{j=1}^{k} X_j^{d_{ji}},
\end{align}
where we first factored out the monomials which include $X_{k+1}$, and afterwards, we lumped together the monomials that were indistinct in terms of $X_1,...,X_k$. 

Now, we can write
\begin{align*}
&\text{Pr}\left[|p(X_1,...,X_k,X_{k+1})|\leq\epsilon\right]\\
&=\text{Pr}\left[|p(X_1,...,X_k,X_{k+1})| \leq\epsilon\middle\vert\underset{i\in[1:m']}{\min}|a_iX_{k+1}+b_i|\leq\sqrt{\epsilon}\right]\text{Pr}\left[\underset{i\in[1:m']}{\min}|a_iX_{k+1}+b_i|\leq\sqrt{\epsilon}\right]\\
&\qquad+\text{Pr}\left[|p(X_1,...,X_k,X_{k+1})| \leq\epsilon\middle\vert\underset{i\in[1:m']}{\min}|a_iX_{k+1}+b_i|>\sqrt{\epsilon}\right]\text{Pr}\left[\underset{i\in[1:m']}{\min}|a_iX_{k+1}+b_i|>\sqrt{\epsilon}\right]\\
&\leq \text{Pr}\left[\underset{i\in[1:m']}{\min}|a_iX_{k+1}+b_i|\leq\sqrt{\epsilon}\right]
+\iint\limits_A
\text{Pr}\left[|p(X_1,...,X_k,re^{j\phi})| \leq\epsilon\right]
f_{|X|,\angle X}(r,\phi) d\phi dr\numberthis\label{intA}\\
&\leq \sum_{i=1}^{m'}\text{Pr}[|a_iX_{k+1}+b_i|\leq\sqrt{\epsilon}]\\
&\qquad+\iint\limits_A
\text{Pr}\left[\left|\sum_{i=1}^{m'}\frac{a_i re^{j\phi}+b_i}{\sqrt{\epsilon}}  \prod_{j=1}^{k} X_j^{d_{ji}}+\sum_{i=m'+1}^{m} \frac{a_i}{\sqrt{\epsilon}} \prod_{j=1}^{k} X_j^{d_{ji}}\right| \leq\sqrt{\epsilon}\right]
f_{|X|,\angle X}(r,\phi) d\phi dr\numberthis\label{eq_base}\\
&\leq 2^k (2 f_{max}\sqrt{\epsilon})
+\iint\limits_A
\left(2^{k+1}f_{max} {\sqrt{\epsilon}}^{\frac{1}{2^{k-1}}}\right)
f_{|X|,\angle X}(r,\phi) d\phi dr\numberthis\label{eq_lem}\\
&\leq 2^{k+1}f_{max} {\epsilon}^{\frac{1}{2^{k}}}+2^{k+1}f_{max} {\epsilon}^{\frac{1}{2^{k}}}\\
&=2^{k+2}f_{max}{\epsilon}^{\frac{1}{2^{k}}},
\end{align*}
where in (\ref{intA}-\ref{eq_lem}), the integration is over $A=\{(r,\phi):\underset{i\in[1:m']}{\min}|a_i re^{j\phi}+b_i|>\sqrt{\epsilon}\}$, and in (\ref{eq_base}), we have used the union bound. Also, (\ref{eq_lem}) is true because of the upper bound in (\ref{baseq}), the fact that $m'\leq 2^k$, and also because in (\ref{eq_base}), we have $\left|\frac{a_ire^{j\phi}+b_i}{\sqrt{\epsilon}}\right|> 1,\forall i\in[1:m']$ and $\left|\frac{a_i}{\sqrt{\epsilon}}\right|\geq |a_i|\geq 1,\forall i\in[m'+1:m]$, which enables us to use the inductive assumption by noting that $\sum_{i=1}^{m'}\frac{a_i re^{j\phi}+b_i}{\sqrt{\epsilon}}  \prod_{j=1}^{k} X_j^{d_{ji}}+\sum_{i=m'+1}^{m} \frac{a_i}{\sqrt{\epsilon}} \prod_{j=1}^{k} X_j^{d_{ji}}$ is a polynomial in $X_1,...,X_k$, satisfying the conditions in the lemma. This completes the proof.
\end{proof}

\end{document}